\documentclass[11pt]{article}

\newif\ifdraft
\draftfalse

\usepackage{rpmacros}
\usepackage[dvipsnames,svgnames,x11names,hyperref]{xcolor}
\usepackage{stmaryrd}
\usepackage{amsthm}
\usepackage{xfrac}
\usepackage{mathpazo}
\usepackage{gitinfo}
\usepackage{enumitem}
\usepackage{fullpage}
\usepackage{thmtools}
\usepackage{nicefrac}

\usepackage[ruled,vlined,linesnumbered]{algorithm2e}
\DontPrintSemicolon

\usepackage[T1]{fontenc} 
\usepackage[colorlinks]{hyperref}
\hypersetup{
  linkcolor=[rgb]{0,0,0.4},
  citecolor=[rgb]{0, 0.4, 0},
  urlcolor=[rgb]{0.6, 0, 0}
}
\usepackage{lipsum}
\usepackage{calrsfs,bbm}
\usepackage{multirow}
\usepackage{setspace}

\usepackage{microtype}

\usepackage[linesnumbered,vlined,ruled]{algorithm2e}
\usepackage{mathtools}

\usepackage{amsthm}
\usepackage{thmtools,thm-restate}
\usepackage[nameinlink,capitalise]{cleveref}

\numberwithin{equation}{section}
\declaretheoremstyle[bodyfont=\it,qed=\qedsymbol]{noproofstyle}

\declaretheorem[numberlike=equation]{observation}

\declaretheorem[name=Observation,numbered=no]{observation*}

\declaretheorem[numberlike=equation]{fact}

\declaretheorem[numberlike=equation]{theorem}

\declaretheorem[name=Theorem,numbered=no]{theorem*}

\declaretheorem[numberlike=equation]{lemma}
\declaretheorem[name=Lemma,numbered=no]{lemma*}

\declaretheorem[numberlike=equation]{corollary}
\declaretheorem[name=Corollary,numbered=no]{corollary*}

\declaretheorem[numberlike=equation]{proposition}
\declaretheorem[name=Proposition,numbered=no]{proposition*}

\declaretheorem[numberlike=equation]{claim}
\declaretheorem[name=Claim,numbered=no]{claim*}

\declaretheorem[name=Conjecture,numbered=no]{conjecture*}

\declaretheorem[name=Question,numbered=no]{question*}

\declaretheoremstyle[bodyfont=\it,qed=$\lozenge$]{defstyle} 

\declaretheorem[numberlike=equation,style=defstyle]{definition}
\declaretheorem[unnumbered,name=Definition,style=defstyle]{definition*}

\declaretheorem[unnumbered,name=Example,style=defstyle]{example*}

\declaretheorem[unnumbered,name=Notation=defstyle]{notation*}

\declaretheorem[unnumbered,name=Construction,style=defstyle]{construction*}

\declaretheorem[numberlike=equation,style=defstyle]{remark}
\declaretheorem[unnumbered,name=Remark,style=defstyle]{remark*}

\renewcommand{\phi}{\varphi}
\renewcommand{\epsilon}{\varepsilon}

\newcommand{\SP}{\Sigma\Pi}
\newcommand{\SPsize}[1]{(\SP)^k\operatorname{-size}}

\usepackage{nth}
\usepackage{intcalc}
\usepackage{etoolbox}
\usepackage{xstring}

\usepackage{ifpdf}
\ifpdf
\else
\usepackage[quadpoints=false]{hypdvips}
\fi

\newcommand{\ECCCurl}[2]{http://eccc.hpi-web.de/report/\ifnumcomp{#1}{>}{93}{19}{20}#1/#2/}

\newcommand{\shortECCC}[2]{\texttt{\href{http://eccc.hpi-web.de/report/\ifnumcomp{#1}{>}{93}{19}{20}#1/#2/}{eccc:TR#1-#2}}}

\newcommand{\parseECCC}[1]{
\StrSubstitute{#1}{TR}{}[\tmpstring]%
\IfSubStr{\tmpstring}{/}{ 
\StrBefore{\tmpstring}{/}[\ecccyear]%
\StrBehind{\tmpstring}{/}[\ecccreport]%
}{
\StrBefore{\tmpstring}{-}[\ecccyear]%
\StrBehind{\tmpstring}{-}[\ecccreport]%
}%
\shortECCC{\ecccyear}{\ecccreport}}

\newcommand*\samethanks[1][\value{footnote}]{\footnotemark[#1]}

\newcounter{todo}

\ifdraft
\newcommand{\RPnote}[1]{\refstepcounter{todo}\textcolor{WildStrawberry}{\guillemotleft RP: #1 \guillemotright\addcontentsline{tod}{subsection}{[RP]~#1}}}

\newcommand{\gitinfonotecolour}{Gray}
\newcommand{\easteregg}{}
\else
\newcommand{\RPnote}[1]{}
\newcommand{\gitinfonotecolour}{white}
\newcommand{\easteregg}{%
  {\hypersetup{urlcolor=white}%
   \href{https://www.tcs.tifr.res.in/~ramprasad/second-rate-minds/}{Decoding decoding Reed-Solomon codes}}}
\fi

\newcommand{\ignore}[1]{}
\newcommand{\gitinfonote}{git info:~\gitAbbrevHash\;,\;(\gitAuthorIsoDate)\; \;\gitVtag}

\newcommand{\RS}{\operatorname{RS}}
\newcommand{\agr}{\operatorname{agr}}

\allowdisplaybreaks
\title{Reed-Solomon Codes at Capacity: \\
Algorithmic List-Decoding and Proximity Gaps\\[1 ex]
} 
\author{
  {Prahladh Harsha\thanks{Tata Institute of Fundamental Research, Mumbai, India. \href{mailto:prahladh@tifr.res.in}{\{prahladh,} \href{mailto:mrinal@tifr.res.in}{mrinal,} \href{mailto:ramprasad@tifr.res.in}{ramprasad\}@tifr.res.in}. Research supported by the Department of Atomic Energy, Government of India, under project number RTI400112, and in part by research grants from Google and Premji Invest.}}
  \and
  {Mrinal Kumar\samethanks}
  \and
  {Ramprasad Saptharishi\samethanks}
}
\date{}

\begin{document}

\maketitle

\begin{abstract}

     Understanding the limits of list-decodability of Reed-Solomon codes has been one of the most important open problems in algebraic coding theory. Recently, Brakensiek, Chen, Putterman, Zhang, and Zheng, in a remarkable breakthrough, showed that Reed-Solomon (RS) codes over fields of large characteristic are algorithmically list-decodable all the way up to capacity. Building on this result, Jeronimo subsequently extended these techniques to solve the proximity-gaps question for RS codes.

     In this article, we give a unified and transparent exposition of these results. 
\end{abstract}

\section{Introduction}
Reed-Solomon codes~\cite{ReedS1960} are one of the most well-studied families of error-correcting codes. Given a finite field $\F_q$ of size $q=p^r$ and characteristic $p$ and a set $E = \set{\alpha_1,\ldots, \alpha_n}\subseteq \F_q$ of $n$ distinct evaluation points, the code $\RS_q[E,k]$ is the map $\F_q^k \to \F_q^n$ which maps a message $f \in \F_q^k$, viewed as a polynomial $f(X) \in \F_q[X]$ of degree less than $k$, to the evaluation vector $(f(\alpha_1), \ldots, f(\alpha_n)) \in \F_q^n$. This is a linear code with a message space of dimension $k$ and block length $n$, thereby having rate $\rho := \nicefrac{k}{n}$. The code is sometimes identified with the linear space comprising the set of all evaluation vectors. Sometimes we drop the set $E$ and instead write $\RS_q[n,k]$. 

The decoding problem associated with this code is as follows: given a received word $\vecy=(y_1,\ldots,y_n)\in \F_q^n$ and an agreement parameter $A$, find \emph{all} polynomials $f \in \F_q[X]$ of degree less than $k$ whose evaluation vectors agree with the received word $\vecy$ on at least $A$ locations. Determining the agreement threshold at which the list size is bounded, and designing an efficient algorithm to find the resulting list, have been long-standing problems in algebraic coding theory. A related open problem is the proximity-gaps question of RS codes, first studied by Ben-Sasson, Carmon, Ishai, Kopparty and Saraf~\cite{BenSassonCIKS2023}.  Like several other open problems in mathematics and theoretical computer science that have recently met their fate, thanks to some heavy lifting by AI tools, both of these problems were resolved in the last few weeks with AI assistance.

The Proximity Prize (\href{https://proximityprize.org/}{\texttt{https://proximityprize.org/}}) initiated by the Ethereum foundation led to a large crowd-sourced effort to resolve these conjectures. A culmination of these efforts led to the following breakthrough results by Brakensiek, Chen, Putterman, Zhang and Zheng~\cite{BrakensiekCPZZ2026} and Jeronimo~\cite{Jeronimo2026}. First, Brakensiek, Chen, Putterman, Zhang and Zheng established that for all fields of large characteristic, we can in fact decode all the way "up to capacity", i.e., for $A$ as small as $k(1+\epsilon)$ for any $\epsilon >0$. More precisely: 

\begin{theorem}[\cite{BrakensiekCPZZ2026}]
  \label{thm:BCPZZ}
  Let $\rho, \epsilon > 0$, and let $k, n$ be large enough positive integers with $\nicefrac{k}{n} \geq \rho$.
  Let $\F_q$ be any finite field of size $q=p^r$ for prime $p$ satisfying $q \geq n$ and $p > k$. Then, there is an algorithm $\mathcal{A}(\epsilon)$ that on input a list $(\alpha_1, \alpha_2, \dots, \alpha_n)$ of $n$ distinct elements of $\F_q$ and a received word $\vecy = (y_1,\ldots, y_n) \in \F_q^n$, outputs \emph{all} polynomials $f \in \F_q[X]$ of degree less than $k$ satisfying $\abs{\agr(f,\vecy)} \geq k(1+\epsilon)$ where 
  \[
  \agr(f, \vecy) := \setdef{i\in [n]}{f(\alpha_i) = y_i}
  \]
  in time $q^{C(\epsilon, \rho)}$ where $C$ is some function of $\epsilon$ and the rate $\nicefrac{k}{n}$\footnote{The original writeup of \cite{BrakensiekCPZZ2026} proved this result only for low rates and this was then extended to all rates in a subsequent revision using a reduction due to Alrabiah, Goyal and Guruswami. The proof presented in this exposition directly works for all rates and all sets of evaluation points and  avoids the use of the Alrabiah-Goyal-Guruswami reduction (we note the revised version of \cite{BrakensiekCPZZ2026} already mentions, without proof, that such a simplification is feasible).}. 
\end{theorem}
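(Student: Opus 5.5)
The plan is to follow the linear-algebraic, Guruswami--Wang style template. We trade the evaluation-point structure of folded or derivative codes for an algebraic relation between an unknown codeword $f$ and the interpolant of $\vecy$. Then we show that every $f$ in the list lies in a low-dimensional affine space that we can compute. Enumerating that space costs $q^{O(\dim)}$, which fits in the allowed budget $q^{C(\epsilon,\rho)}$.

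\textbf{Setup.} Let $L(X)$ be the unique polynomial of degree $<n$ with $L(\alpha_i)=y_i$ for all $i$, and let $G(X)=\prod_{i}(X-\alpha_i)$. If $f$ agrees with $\vecy$ on a set $S$ with $|S|\ge k(1+\epsilon)$, then $f-L$ is divisible by $Z_S=\prod_{i\in S}(X-\alpha_i)$. Since $p>k\ge \deg f$, the map $f\mapsto (f,f',\dots,f^{(s)})$ loses no information, so all the constraints can be phrased in terms of derivatives.

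\textbf{Interpolation step.} Fix $s=s(\epsilon)$ and degree bounds $D_0,D_1,\dots,D_s$ depending on $\epsilon,\rho,n$. We look for a nonzero tuple $(A_0,\dots,A_s)$ with $\deg A_i\le D_i$ such that
\[
  A_0(X)+\sum_{i=1}^{s}A_i(X)\,L^{(i-1)}(X)\equiv 0 \pmod{G(X)}.
\]
Such a tuple exists by counting: there are more unknown coefficients than the $n$ linear constraints, and this is exactly where the parameters are tuned.

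\textbf{Vanishing step (the main obstacle).} For $f$ in the list, set
\[
  R_f:=A_0+\sum_{i} A_i f^{(i-1)} .
\]
We want to show $R_f\equiv 0$. Write $f=L+Z_S g$. The Leibniz rule gives $f^{(j)}-L^{(j)}\in (Z_S^{(j)},\dots,Z_S)$, so $R_f$ vanishes to order roughly $s-j$ at each point of $S$. The useful regime needs a multiplicity argument: multiplying by suitable powers, $R_f$ is divisible by a power of $Z_S$ times a factor coming from $G$. The total forced vanishing is about $s\cdot|S|$, while $\deg R_f\le \max_i(D_i+k)$. Balancing so that $s\,k(1+\epsilon)$ exceeds the degree while the interpolation system stays underdetermined is the delicate calculation. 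Here $p>k$ is used again, so that vanishing of derivatives up to order $s$ genuinely implies multiplicity up to order $s$. If the naive version fails, I would first try replacing $L^{(i)}$ by $L^{(i)}\bmod G$ together with Hasse derivatives, and imposing the congruence modulo $G^{s}$ instead of $G$.

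\textbf{Solution space and pruning.} Once $R_f\equiv 0$, every list element satisfies a linear differential equation of order $\le s-1$ with polynomial coefficients. Over characteristic $p>k$, the solutions of degree $<k$ form an affine subspace of dimension at most $s-1$: compare leading terms via the indicial equation, whose roots are integers below $k<p$. This subspace is computable by linear algebra, and we enumerate all its $q^{s-1}$ points and keep those with agreement $\ge k(1+\epsilon)$. This gives running time $q^{C(\epsilon,\rho)}$ with $C=O(s)$. If the dimension bound is too weak, the fallback is to guess the values of $f$ at $O(s)$ points, which costs $q^{O(s)}$ and cuts the space down to a single candidate.
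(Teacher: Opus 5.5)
There is a genuine gap. It sits in the vanishing step you flagged as the main obstacle, and it is fatal to this framework rather than a matter of tuning parameters.

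An agreement $i\in S$ tells you only that $f(\alpha_i)=y_i=L(\alpha_i)$. The values $L^{(j)}(\alpha_i)$ for $j\ge 1$ depend on the entries of $\vecy$ outside $S$: changing an erroneous symbol changes $L$ and all its derivatives on $S$. The values $f^{(j)}(\alpha_i)$ do not depend on those entries, so in general the two disagree.

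Your Leibniz observation $f^{(j)}-L^{(j)}\in(Z_S,Z_S^{(1)},\ldots,Z_S^{(j)})$ is true but empty. Since $Z_S$ has distinct roots, $\gcd(Z_S,Z_S')=1$, so this ideal is all of $\F_q[X]$. Concretely, write $f-L=Z_S\,g$. Then $(f-L)'(\alpha)=Z_S'(\alpha)g(\alpha)$ is nonzero at $\alpha\in S$ unless $g(\alpha)=0$.

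So on $S$ all you know is $R_f(\alpha)=\sum_{i\ge 2}A_i(\alpha)\bigl(f^{(i-1)}(\alpha)-L^{(i-1)}(\alpha)\bigr)$, and nothing in your constraints controls this quantity. Hence $R_f$ need not vanish even to order one on $S$, let alone to order about $s$. Working modulo $G^{s}$, or with Hasse derivatives of $L \bmod G$, has the same defect: you would be forcing vanishing of Taylor data of $L$ that $f$ does not share.

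The only usable part of your interpolant is $A_0+A_1Y_0$, which is Welch--Berlekamp and gives only unique decoding. Guruswami--Wang's linear-algebraic method works for folded RS and derivative codes precisely because there the received word \emph{contains} the extra symbols; plain RS codes do not. The interpolation count you defer is therefore moot.

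The missing idea is the paper's \emph{hidden derivatives}. Keep the unknown derivatives as formal variables $Y_1,\ldots,Y_d$, and tie them to $y_i$ only through the reverse Taylor expansion
$f(\alpha_i+T)=y_i+\sum_{j=1}^{d}(-1)^{j+1}T^j f^{(j)}(\alpha_i+T)+E$, with $E\equiv 0 \bmod T^{d+1}$.

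\begin{itemize}
\item \textbf{Interpolant.} Find a nonzero $Q(X,Y_0,\ldots,Y_d)$ of $(1,k,\ldots,k)$-degree less than $mA$, with $Y$-exponents in $\Lambda(W)$ for $W\approx md(1+\epsilon/2)/H_d$.
\item \textbf{Constraints.} Require that every monomial $T^aE^r\mathbf{Y}^{\mathbf{c}}$ with $a+(d+1)r<m$ has zero coefficient in $Q(\alpha_i+T,\,y_i+\sum_j(-1)^{j+1}T^jY_j+E,\,Y_1,\ldots,Y_d)$.
\item \textbf{Why it is sound.} Substituting $Y_j=f^{(j)}(\alpha_i+T)$ makes $E$ divisible by $T^{d+1}$, so every surviving monomial has $T$-degree at least $m$. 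Thus $Q(X,f,f^{(1)},\ldots,f^{(d)})$ vanishes to order $m$ at each agreement, and a degree count makes it identically zero once $|S|\ge k(1+\epsilon)$.
\item \textbf{Where the gain comes from.} The count of pairs $(a,r)$ shrinks by a factor of about $d+1$. The $Y$-exponents of constraint monomials lie in $\Lambda(W+m)$, and $|\Lambda(W+m)|/|\Lambda(W)|$ is only about $d^{2/(2+\epsilon)}$.
\item \textbf{Room in the low variables.} The average of $\sum b_i$ over $\Lambda(W)$ is about $m(1+\epsilon/2)<mA/k$, which leaves enough room for the $(X,Y_0)$ monomials.
\item \textbf{Choice of $d$.} Taking $d$ a large constant depending on $\epsilon$ and $\rho$ makes the unknowns outnumber the constraints.
\end{itemize}

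Because $Q$ has large degree in the $Y_j$, the resulting equation $Q(X,f,\ldots,f^{(d)})=0$ is nonlinear. Your linear-ODE dimension count must therefore be replaced by Kopparty's solver, which uses $p>k$ to enumerate all degree-$<k$ solutions in time $q^{O(d)}$ times a polynomial factor.
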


Almost immediately after this result was posted, the techniques developed in this paper were extended by Jeronimo~\cite{Jeronimo2026} to solve the proximity-gaps problem for RS codes. We state the result for lines for simplicity (although it readily extends to the more general statement). 

\begin{theorem}[\cite{Jeronimo2026}]
  \label{thm:Jeronimo}
  Let $\rho, \epsilon > 0$, and let $k, n$ be large enough positive integers with $\nicefrac{k}{n} \geq \rho$.
  Let $\RS_q[n,k]$ be a Reed-Solomon code over a finite field $\F_q$ of characteristic $p$ and rate at least $\rho$ for any set $E = \{\alpha_1, \dots, \alpha_n\} \subseteq \F_q$ of evaluation points satisfying $q \geq n$ and $p > k$. There is a bound $L = n^{C(\epsilon, \rho)}$, where $C$ is some function of $\epsilon$ and $\rho$, such that the following holds for every pair of functions  $u, v \colon E \to \F_q$:
  \begin{quote}
  Suppose for at least $L$ values of $\zeta \in \F_q$ there exists some polynomial $f_\zeta$ of degree less than $k$ such that $\abs{\agr(u + \zeta v, f_{\zeta})} \geq k(1 + \epsilon)$. Then, there are polynomials $g,h$ of degree less than $k$ such that 
  \[
  \abs{\setdef{i \in [n]}{u(\alpha_i) = g(\alpha_i) \text{ and }v(\alpha_i) = h(\alpha_i)}} \geq k(1 + \epsilon)\, . 
  \]
  I.e., there is a set of size at least $k(1+\epsilon)$ of \emph{correlated agreements} of $u$ and $v$ with $\RS_q[n,k]$, thereby showing that $u + \zeta v$ has $k(1+\epsilon)$ agreements with $\RS_q[n,k]$ \emph{for every} $\zeta \in \F_q$. 
  \end{quote}
\end{theorem}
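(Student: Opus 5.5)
We prove the statement by running the list-decoding machinery behind \cref{thm:BCPZZ} over the function field $\F_q(Z)$ rather than over $\F_q$. The idea is to treat $\zeta$ as a formal variable $Z$. Consider the received word $\vecy(Z) = u + Zv$, with coordinates $u(\alpha_i) + Z v(\alpha_i)$ that are linear in $Z$.

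\textbf{Step 1 (interpolation).} The proof of \cref{thm:BCPZZ} produces, for a received word, a nonzero low-degree "certificate": an interpolating polynomial $Q(X,Y)$, or a system of such polynomials, whose roots in $Y$ contain every $f$ with $k(1+\epsilon)$ agreements. We run this construction symbolically with coefficients in $\F_q[Z]$. Because the linear constraints have entries of degree at most $1$ in $Z$, we can choose a solution $Q(X,Y,Z)$ whose $Z$-degree is bounded by $n^{C(\epsilon,\rho)}$ (by Cramer's rule).

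\textbf{Step 2 (specialization).} For each good $\zeta$, the polynomial $f_\zeta$ has $k(1+\epsilon)$ agreements with $u+\zeta v$. By the vanishing argument of \cref{thm:BCPZZ}, $Q(X, f_\zeta(X), \zeta) = 0$ whenever the specialization $Q(\cdot,\cdot,\zeta)$ is nonzero. This fails for at most $\deg_Z Q$ values of $\zeta$.

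\textbf{Step 3 (lifting the roots).} Each $f_\zeta$ is thus a $Y$-root of $Q(X,Y,\zeta)$. The number of $Y$-roots of $Q$ over the algebraic closure of $\F_q(Z)$ is bounded by $\deg_Y Q$. Consider the roots of $Q$ as power series or algebraic functions $F(X,Z)$. Once $L$ exceeds $\deg_Y Q \cdot \deg_Z Q$ times a discriminant bound, some single root $F$ satisfies $F(X,\zeta) = f_\zeta$ for many $\zeta$. This pigeonhole step uses a Hensel-lifting or resultant argument, together with $p > k$ to avoid inseparability.

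\textbf{Step 4 (linearity of the root).} We show that this root $F$ is a polynomial in $X$ of degree less than $k$ that is affine in $Z$, namely $F = g + Zh$. Take the agreement sets $S_\zeta$ of size at least $k(1+\epsilon)$. For $i \in S_\zeta$ we have $F(\alpha_i,\zeta) = u(\alpha_i) + \zeta v(\alpha_i)$. By pigeonhole over the $\binom{n}{\cdot}$ patterns, or more carefully by a counting argument over the $L$ values, some fixed set $S$ works for more than one $\zeta$; more precisely, each $i$ lies in $S_\zeta$ for many $\zeta$. For a fixed $i$, the function $F(\alpha_i,Z)$ is an algebraic function of bounded degree that agrees with the line $u(\alpha_i)+Zv(\alpha_i)$ on more than $\deg$ many points, so the two coincide identically. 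Hence every $i$ lying in many $S_\zeta$ is a correlated agreement. Let $T$ be the set of such $i$. Since on average $|S_\zeta| \ge k(1+\epsilon)$ while indices outside $T$ contribute few incidences, we get $|T| \ge k(1+\epsilon) - o(k)$.

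Sharpening this to exactly $k(1+\epsilon)$ requires running the whole argument with $\epsilon' = \epsilon/2$ inside the interpolation while using the full $\epsilon$ slack only in the counting; with this bookkeeping adjustment the bound becomes $|T| \ge k(1+\epsilon)$. Writing $g := F(X,0)$ and $h := F(X,1)-F(X,0)$, interpolation on $|T| > k$ points then gives that $F$ is genuinely of degree less than $k$ in $X$. The required $g, h$ follow.

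\textbf{Main obstacle.} We expect Step 1 to be the hard part. We need the certificate of \cref{thm:BCPZZ} to exist uniformly over $\F_q(Z)$ with polynomially bounded $Z$-degree. We also need its nonvanishing to survive specialization at most $\zeta$. Both require that the linear-algebraic proof of \cref{thm:BCPZZ} is "symbolic": it must use only rank arguments and the condition $p>k$, and must not rely on exhaustive enumeration over $\F_q$, which is what produces the $q^{C}$ running time.
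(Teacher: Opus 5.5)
There is a genuine gap, and it lies in Steps 1 and 3, not where you placed the ``main obstacle.'' The certificate behind \cref{thm:BCPZZ} is not a polynomial $Q(X,Y)$ whose $Y$-roots contain the candidates. A bivariate interpolant of that kind is the Guruswami--Sudan picture, and it only reaches $\sqrt{nk}$ agreements. The capacity-achieving certificate is instead $Q(X,Y_0,\ldots,Y_d)$ satisfying the \emph{differential} condition $Q(X,f,f^{(1)},\ldots,f^{(d)})\equiv 0$, and no $Y$-degree controls its solution set. By Kopparty's reconstruction (\cref{lem:Kopparty-rational-fn}), a solution is determined by the free initial data $(f(\gamma),f^{(1)}(\gamma),\ldots,f^{(d)}(\gamma))$ at a suitable point $\gamma$. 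So for a fixed $\zeta$ the candidates are only known to lie in images of $(d+1)$-parameter rational maps, and a priori the only count is Kopparty's $q^{O(d)}$. Over $\overline{\F_q(Z)}$ there is therefore no bounded list of algebraic roots $F(X,Z)$ to pigeonhole over: the bound ``$\deg_Y Q$'' has no analogue here. Step 3 fails, and Step 4 has nothing to act on. Step 1, by contrast, is routine: Cramer's rule over $\F_q[Z]$ plus making $Q_Z$ primitive in $Z$, exactly as in \cref{lem:interpolant-over-Fz}.

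The missing idea is to handle this positive-dimensional solution family geometrically, which the paper does in two parts. First, outside a small set $B$ of $\zeta$'s (those where specializing $Z=\zeta$ changes the top variable of some $Y$-derivative of $Q_Z$), every pair $(\zeta,f)$ with $A$ agreements lies in the Zariski closure of the image of one of $M=n\cdot 2^{\poly(d)}$ rational maps $\Psi_{P,\gamma}$ on $\F^{d+2}$. These closures are irreducible, of dimension at most $d+2$ and degree $n^{O(d)}$; the degree bound is where the degree analysis of Kopparty's reconstruction is needed. Second, take a good $\zeta_0$ with witness $f_0$ and agreement set $S_{\zeta_0}$, and interpolate $u$ and $v$ on a $k$-subset of $S_{\zeta_0}$ to get $g,h$. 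Either all of $S_{\zeta_0}$ is a correlated agreement with $(g,h)$, which finishes the proof with no loss in $\epsilon$, or $(\zeta_0,f_0)$ is the unique common point of the hyperplanes $H_j=\{(\zeta,f): f(\alpha_j)=(u+\zeta v)(\alpha_j)\}$ for $j\in S_{\zeta_0}$. In the second case, repeatedly cut the variety through $(\zeta_0,f_0)$ by some $H_j$ and keep the irreducible component through the point. This reaches $\{(\zeta_0,f_0)\}$ in at most $d+2$ steps, so $\zeta_0$ is encoded by one of at most $M(nD)^{O(d)}$ tuples. Hence $L>|B|+M(nD)^{O(d)}$ forces the first case. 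Separately, your $\epsilon/2$ ``sharpening'' in Step 4 does not address the loss you incur there. That loss comes from averaging over the $\zeta$'s, not from the interpolation, and could at best be removed by taking enough $\zeta$'s and using integrality.
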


Thus, if the field size $q$ is a large enough polynomial in $n$ (more precisely, $q \gg n^{C(\epsilon, \rho)}$), a very small fraction of $\zeta$'s that witness closeness of $u + \zeta v$ with $\RS_q[n,k]$ is sufficient to conclude that $u+ \zeta v$ is close to $\RS_q[n,k]$ for \emph{every} choice of $\zeta$.

In this note, we give an exposition of the proofs of both these results. On the way, we highlight the key ideas and make slight improvements in some of the parameters involved. The exposition is largely self-contained, barring a couple of standard facts from algebraic geometry and coding theory literature, which we state explicitly. 

\subsection{A birds-eye view of the proofs}
For the benefit of readers familiar with the existing algebraic coding theory literature, we summarize below the high-level ideas that go into the proof of these statements. Others may prefer to skip this overview and proceed directly to the subsequent sections, which explain the proofs from first principles.

\subsection*{Algorithmic list-decoding (\cite{BrakensiekCPZZ2026})}

The \cite{BrakensiekCPZZ2026} algorithm follows the by now standard template of Sudan and Guruswami-Sudan~\cite{Sudan1997,GuruswamiS1999} for list-decoding: explain the received using a non-zero polynomial $Q$ and extract all close-enough  codewords from this polynomial. The existence of such a non-zero polynomial $Q$ is shown by setting a linear system and showing that the number of variables in this system exceeds the number of linear constraints. 

The \cite{BrakensiekCPZZ2026} algorithm deviates from the Guruswami-Sudan algorithm in multiple important aspects. Some of these new technical ingredients in the algorithm in \cite{BrakensiekCPZZ2026} can be traced back to the algorithm of Kopparty \cite{Kopparty2015} for list-decoding univariate multiplicity codes, in both implicit and explicit ways. We start by recalling the outline of Kopparty's algorithm and then discuss how some of these ideas are adapted to the setting of Reed-Solomon codes in \cite{BrakensiekCPZZ2026}. The technical insights in \cite{BrakensiekCPZZ2026} that make this adaptation feasible are perhaps of independent interest in the setting of the polynomial method, beyond this specific proof. 

\paragraph{Kopparty's univariate multiplicity code decoder: } For list-decoding univariate multiplicity codes (over fields of sufficiently large characteristic), Kopparty suggested obtaining a non-zero multivariate polynomial $Q(X,Y_0, \dots, Y_d)$ of low enough degree such that for every message polynomial $f$, if $f$ agrees with the received word at evaluation point $\alpha_i$, then 
\begin{equation*}
Q(\alpha_i + T, f(\alpha_i + T), f^{(1)}(\alpha_i + T), \ldots, f^{(d)}(\alpha_i + T)) = 0 \bmod T^m\,.
\end{equation*}
This ensures that if a low degree $f$ has sufficiently large agreement with the received word, then $f$ satisfies the differential equation 
\[
Q(X,f(X), f^{(1)}(X), \dots, f^{(d)}(X)) \equiv 0.
\] 
Thus, the task of list-decoding has reduced to solving the above differential equation. The final ingredient of Kopparty's decoder is an efficient algorithm that solves the above differential equation to obtain a list of degree $k$ univariates of size at most $q^{O(d)}$ containing all close enough message polynomials, where $q$ is the size of the underlying finite field. For a constant $d$, this gives a polynomial size list for decoding univariate multiplicity codes up to capacity over finite fields of sufficiently large characteristic.

We now discuss the main steps in \cite{BrakensiekCPZZ2026}. 

\paragraph{Step 1 (multivariate interpolation):}
The first idea in \cite{BrakensiekCPZZ2026} is to construct a low degree \emph{multivariate} polynomial $Q$ in the interpolation step. The goal again is to obtain a non-zero $Q(X,Y_0, \dots, Y_d)$ such that if $f(X)$ and the received word agree at coordinate $\alpha_i$, then 
\[
Q(\alpha_i + T, f(\alpha_i + T), f^{(1)}(\alpha_i + T), \ldots, f^{(d)}(\alpha_i + T)) = 0 \bmod T^m\,.
\]
Note that this is equivalent to the polynomial $Q(X,f(X),\dots,f^{(d)}(X))$ having  a zero of multiplicity at least $m$ at $\alpha_i$. While similar to the interpolation step in \cite{Kopparty2015}, the absence of high multiplicity evaluations of the message polynomial in the setting of Reed-Solomon codes leads to obvious technical issues. In the absence of access to high multiplicity evaluations,  \cite{BrakensiekCPZZ2026} use formal indeterminates $Y_1,Y_2,\ldots, Y_d$ for these high-order derivatives. However, it can be checked (with some effort) that treating the high multiplicity information as independent formal variables, with no further apparent algebraic relationship between them and the received word, is in itself insufficient to list decode beyond the Johnson radius. 

\paragraph{Step 2 (an algebraic relation between derivatives): }
The second ingredient of \cite{BrakensiekCPZZ2026} that gives a non-trivial way to use the first idea discussed above, is that the high-order derivatives and the evaluation indeed satisfy a useful algebraic relationship. This is given by the "reverse" Taylor expansion:
\begin{align*}f(X+T) & = f(X) + T f^{(1)}(X+T) + \cdots + (-1)^{d+1} T^d f^{(d)}(X+T) + E
\end{align*}
where the error term $E = E(X,T)$ is divisible by $T^{(d+1)}$. 
This relationship is then incorporated  to bound the number of variables and constraints as follows:
\begin{align*}
\text{\# variables} \geq c_{\epsilon} \cdot \frac{m^2A^2}{k} \cdot \abs{\calB} \,,
&& \text{\# constraints} \leq n\cdot \frac{m^2}{d} \cdot \abs{\calB + \Lambda(m)}
\end{align*}
Here, $n$ is the blocklength, $\epsilon$ measures how close we are to capacity (i.e., we are decoding from agreements $A \geq k(1+\epsilon)$), $m$ is the multiplicity parameter (as in the Guruswami-Sudan algorithm), $d$ the number of high-order derivatives used and $c_{\epsilon}$ is a constant only dependent on $\epsilon$. The only new terms here are the sets $\calB$ and $\Lambda(m)$. The set $\calB$ refers to the set of exponent vectors $(b_1,b_2,\ldots,b_d)$ allowed for monomials of the form $Y_1^{b_1}Y_2^{b_2}\cdots Y_d^{b_d}$ in the polynomial $Q$ while $\Lambda(m)$ is the set of non-negative integer vectors $(b'_1,b'_2,\ldots,b'_d)$ satisfying $\sum i b'_i \leq m$. 

The key point to note is the following: for the number of variables to be larger than the number of constraints, it suffices to have 
\[
A > \left(\frac{1}{c_{\epsilon}}\cdot \frac{nk}{d} \cdot \frac{\abs{\calB+\Lambda(m)}}{\abs{\calB}}\right)^{1/2} = k\left(\frac{1}{c_{\epsilon}}\cdot \frac{1}{\rho d} \cdot \frac{\abs{\calB+\Lambda(m)}}{\abs{\calB}}\right)^{1/2}, 
\]
where the second expression is obtained by setting $n = k/\rho$, where $\rho$ is the rate of the code. 

Thus, to minimize $A$, ideally to $A = (1+\epsilon)k$, parameters need to be set so that the expression $\left(\frac{1}{c_{\epsilon}}\cdot \frac{1}{\rho d} \cdot \frac{\abs{\calB+\Lambda(m)}}{\abs{\calB}}\right)^{1/2}$ is less than $(1+\epsilon)$. It turns out that this requires a careful choice of $\calB$ as we discuss below. 

\paragraph{Step 3 (careful choice of a monomial basis for $Q$): }
A natural choice for $\calB$ is the set $(b_1,b_2,\ldots,b_d)$ in $ \Z_{\geq 0}^d$ of vectors satisfying $\sum b_i \leq D$. This corresponds to interpolating a $Q$ whose total degree in the variables $Y_1, \dots, Y_d$ is at most $D$ and is the typical space of polynomials that one works with in many applications of the polynomial method, including those in algebraic coding theory. Unfortunately this does not appear to work here,  as for this choice of $\calB$, the corresponding ratio $\frac{\abs{\calB+\Lambda(m)}}{\abs{\calB}}$ is $\Omega(d)$, and the lower bound on $A$ obtained is much larger than the desired bound of $(1+\epsilon)k$. 

The careful choice of $\calB$ is the third main ingredient of the proof in \cite{BrakensiekCPZZ2026}. This choice ensures that the ratio $\frac{\abs{\calB+\Lambda(m)}}{\abs{\calB}}$ is less than $d^{1-\eta}$ for some constant $\eta \in (0,1)$. By a careful setting $d$, depending on the rate $\rho$ and $\epsilon$, the bound on $A$ obtained at the end \footnote{We also remark that the bound obtained on $A$ can never be less than $k$, as one would expect. This is evident from the details in the proof. } indeed looks like $(1+\epsilon)k$.   Choosing $\calB = \Lambda(W)$ for an appropriate integer $W$ works (the choice $W \approx md(1+\epsilon/2)/\ln d$ suffices) since in this case $\calB+\Lambda(m)= \Lambda(W+m)$ and we can exactly estimate the ratio $\frac{\abs{\Lambda(W+m)}}{\abs{\Lambda(W)}}$ (for $m$ sufficiently large compared to $d$).

\paragraph{Step 4 (differential equation solver): }
The non-zero polynomial $Q$ obtained at the end of the interpolation step is then shown to contain all close enough message polynomials $f$ in its belly in the same sense as that in \cite{Kopparty2015}, and Kopparty's differential equation solver is then used off-the-shelf to recover these messages. The parameter $d$ has to be set to a constant to that the list size is polynomially bounded and the differential equation solver runs in polynomial time. 

\subsection*{Proximity gaps for Reed-Solomon codes  \cite{Jeronimo2026}}

The proximity-gaps problem was  first studied by Ben-Sasson, Carmon, Ishai, Kopparty, and Saraf~\cite{BenSassonCIKS2023}, who proved results when the number of agreements $A$ is at least the unique-decoding radius $\nicefrac{(n+k)}{2}$ or the Johnson radius $\sqrt{nk}$. In both these settings, \cite{BenSassonCIKS2023} used the corresponding decoding algorithms, in other words the Welch-Berlekamp \cite{WelchB1986}-algorithm (as abstracted by Gemmell and Sudan \cite{GemmellS1992}) in the unique-decoding setting and Guruswami-Sudan \cite{GuruswamiS1999}-algorithm in the list-decoding up to Johnson radius setting. Given that we now have a list-decoding algorithm all the way up to capacity, \`a la \cite{BrakensiekCPZZ2026}, we can naturally ask if we can further reduce the number of agreements in the proximity gap question to $k(1+\epsilon)$. And this is precisely what \cite{Jeronimo2026}'s result does. However, \cite{Jeronimo2026}'s result does not follow the approach of \cite{BenSassonCIKS2023}. It instead proceeds as follows. It observes that a pruning algorithm similar to that of Kopparty, Ron Zewi, Saraf and Wootters \cite{KoppartyRSW2023} and Tamo \cite{Tamo2024} can be used to bound the number of polynomials in the list by a polynomial in $n$, whose exponent depends only on the rate $\rho$ and $\epsilon$ and is independent of the underlying field size. The pruning here is more involved compared to \cite{KoppartyRSW2023,Tamo2024} as the objects involved are varieties instead of affine subspaces and this is where we need some basic algebraic geometry facts. Given such a pruning method, one can bound the number of values of $\zeta$ for which there is no coordinated agreement. 

\paragraph{AI disclosure:} The technical details in both the papers \cite{BrakensiekCPZZ2026,Jeronimo2026} (especially the latter) are very dense and understanding these papers (for us) required several extended conversations with Claude (Opus and Sonnet). The proof here follows the same strategy as the original papers, but the details are (in our opinion) much easier to follow. This exposition is entirely written by the human authors. 

\paragraph{Human disclosure: } Many thanks to Irit Dinur, Swastik Kopparty and Madhu Sudan for substantial and extremely helpful comments on various parts of the paper. These comments helped us improve the presentation significantly. 

A special thanks to Swastik for a super careful reading of the technical sections and pointing us to some references that we had missed. 

\section{Notation and preliminaries}

We will use boldface letters such as $\vecy$ to denote a tuple $(y_1,\ldots, y_k)$;  the number of coordinates would usually be clear from context. 

\paragraph{Hasse Derivatives:} For a polynomial $f(X)$ of degree at most $d$, we define the \emph{Hasse derivative} of $f$, denoted by $f^{(0)} = f$, $f^{(1)}$, etc. as the terms of the Taylor expansion of $f(X + T)$ for a formal variable $T$:
\[
f(X + T) = f^{(0)}(X) + f^{(1)}(X) \cdot T + f^{(2)}(X) \cdot T^2 + \cdots + f^{(d)}(X) \cdot T^d. 
\]
 While working with finite fields, we follow the customary procedure of using Hasse derivatives instead of standard derivatives. However, since the results presented in the article apply only to fields of large characteristic, we might have as well used standard derivatives. In this setting, the Hasse derivatives $f^{(i)}$ are related to the standard derivatives $\frac{d^i f}{dX^i}$ via the formula $f^{(i)}(X) = \frac{1}{i!} \cdot \frac{d^i f}{dX^i}$.

A polynomial $f(X)$ is said to have a root $\alpha$ of multiplicity $m$ if $(X - \alpha)^m$ divides $f$. Equivalently, $f^{(i)}(\alpha) = 0$ for all $0 \leq i < m$. 

Here is a standard combinatorial fact that will be used in the algorithm. 

\begin{fact}[number of bivariate monomials with weighted-degree bound]
  \label{fact:weighted-degree-mons}
  For positive integers $k,D$, if $N(k,D)$ refers to the number of monomials of the form $X^a Y^b$ with $a + kb < D$, then 
  \[
  \frac{D(D+k)}{2k} \leq N(k,D) \leq \frac{D(D+k)}{2k} + \frac{k}{8}
  \]
\end{fact}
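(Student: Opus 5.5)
We count directly: for each $b \ge 0$, the admissible $a$ are $0 \le a < D - kb$, so monomials with $Y$-degree $b$ exist only when $kb < D$, and there are exactly $D - kb$ of them. Write $D = kt + s$ with $t \ge 0$ and $1 \le s \le k$; that is, $t = \lceil D/k \rceil - 1$. The admissible $b$ are $0,1,\ldots,t$, and
\[
N(k,D) = \sum_{b=0}^{t} (D - kb) = (t+1)D - k\frac{t(t+1)}{2} = \frac{(t+1)(2D - kt)}{2}.
\]
The statement then reduces to comparing this closed form with $\frac{D(D+k)}{2k}$.

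To compare, we substitute $t = (D - s)/k$. Then $t+1 = (D + k - s)/k$ and $2D - kt = D + s$, so
\[
N(k,D) = \frac{(D+k-s)(D+s)}{2k} = \frac{D(D+k) + s(k-s)}{2k}.
\]
This exact identity carries the whole proof. Since $1 \le s \le k$, we have $s(k-s) \ge 0$, which gives the lower bound $N(k,D) \ge \frac{D(D+k)}{2k}$. By AM--GM, $s(k-s) \le k^2/4$, so the excess is at most $\frac{k^2/4}{2k} = \frac{k}{8}$, which gives the upper bound.

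There is no real obstacle. The only point needing care is the range of $s$: we must take $s \in [1,k]$ rather than $[0,k-1]$, since the condition $a + kb < D$ is strict. Both choices keep $s(k-s) \ge 0$, though, so the bounds hold regardless. Choosing the correct range just makes the index count $t+1$ exact.
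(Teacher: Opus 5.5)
Your proof is correct and complete. The paper gives no proof of this statement; it just cites it as a standard combinatorial fact, so there is no argument of the paper's to compare against.

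Your direct count $N(k,D)=\sum_{b=0}^{t}(D-kb)$ produces the exact identity $N(k,D)=\frac{D(D+k)+s(k-s)}{2k}$ with $1\le s\le k$. That identity gives both bounds at once, and it shows the upper bound is tight when $k$ is even and $s=k/2$.

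Your side remark is also right. With $s\in[0,k-1]$ and $s=0$, the extra index $b=t$ contributes $D-kt=0$ monomials, so the same formula still holds.
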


For a received word $\vecy \in \F_q^n$ and polynomial $f(x)$, define $\agr(f, \vecy)$ to be 
\[
\agr(f, \vecy) := \setdef{i \in [n]}{f(\alpha_i) = y_i}. 
\]

\section{Sudan's template for algebraic decoding algorithms}

The history of decoding algorithms for Reed-Solomon codes is quite rich with algorithms known since as far back as 1960s. We only mention a subset of these results, with the focus being on results that are most relevant to the recent progress on this problem. 

Berlekamp-Welch~\cite{WelchB1986} (as abstracted by Gemmell and Sudan~\cite{GemmellS1992}) gave an efficient \emph{unique-decoding} algorithm where the number of agrements $A \geq \tfrac{n+k}{2}$. Sudan~\cite{Sudan1997} gave an efficient \emph{list-decoding} algorithm with $A \gtrsim \sqrt{2nk}$, and Guruswami and Sudan~\cite{GuruswamiS1999} further improved this to $A \gtrsim \sqrt{nk}$ by adding the \emph{method of multiplicities}. This bound of $\sqrt{nk}$ agreements is also known as the Johnson radius, and \emph{any} code has small list size when dealing with $\gtrsim \sqrt{nk}$ agreements (note this is just a combinatorial guarantee about \emph{any} code while the Guruswami-Sudan algorithm yields \emph{efficient} list-decoding algorithms up to this radius for RS codes). 

To go beyond the Johnson radius, variants of Reed-Solomon codes were devised (such as multiplicity codes, or Folded Reed-Solomon codes \cite{ParvareshV2005,GuruswamiR2008,GuruswamiW2013,Kopparty2015}) that are efficiently list-decodable up to capacity (i.e., $A \geq k(1+\epsilon)$ for any $\epsilon > 0$). \\

All the above algorithms proceed via a very similar template, first formalized for unique-decoding by Gemmell and Sudan~\cite{GemmellS1992} and for list-decoding by Sudan~\cite{Sudan1997}: Given a received word $\vecy = (y_1,\ldots, y_n)$, 
\begin{itemize}\itemsep 0pt
  \item Build an ``explainer polynomial'' $Q$ based on the data $(y_1,\ldots, y_n)$. 
  \item Extract from $Q$ all $f$ such that $\agr(f, \vecy) \geq A$. 
\end{itemize}
The algorithm of \cite{BrakensiekCPZZ2026} proceeds along the same template, but with an additional ingredient of \emph{hidden derivatives}. In its final form, this idea lets them adapt the techniques of Kopparty's univariate multiplicity code decoder in \cite{Kopparty2015} to the setting of Reed-Solomon codes.  To illustrate the techniques in \cite{BrakensiekCPZZ2026} better, it would be useful to first revisit the method of multiplicities used in the Guruswami-Sudan~\cite{GuruswamiS1999} algorithm. 

\subsection{Guruswami-Sudan's list-decoding algorithm}

Let us fix a parameter $m$ (for the multiplicity). 

Given a received word $\vecy = (y_1,\ldots, y_n)$, we try to find a nonzero bivariate polynomial $Q(X,Y)$ with the following properties:
\begin{description}
\item{Monomial space:} Every monomial $X^a Y^r$ in $Q$ satisfies $a + kr < mA$. In other words, the $(1,k)$-degree of $Q$ is strictly smaller than $mA$. 
\item{Constraints:} For every $i \in [n]$, consider the polynomial 
\[
Q(\alpha_i + T, y_i + U) = \sum_{a, r} q_{i,a,r} T^a U^r
\]
and add the constraint that $q_{i,a,r} = 0$ for all $a,r$ satisfying $a + r < m$. 
\end{description}

\begin{claim}
Suppose such a nonzero $Q(X,Y)$ exists. If $f$ is a polynomial of degree less than $k$ such that $\abs{\agr(f, \vecy)} \geq A$, then $R_f(X):=Q(X, f(X)) \equiv 0$ or equivalently $Y-f(X)$ is a factor of $Q$.
\end{claim}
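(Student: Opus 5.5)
The plan is the standard degree-versus-roots argument. We bound the degree of $R_f(X) = Q(X,f(X))$ from above, and show that $R_f$ has too many roots, counted with multiplicity, to be nonzero.

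\emph{Degree bound.} Every monomial $X^aY^r$ of $Q$ satisfies $a+kr<mA$, and $\deg f\le k-1$. So $X^a f(X)^r$ has degree at most $a+r(k-1)\le a+kr<mA$, and therefore $\deg R_f < mA$.

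\emph{Multiplicity at agreement points.} Fix $i\in\agr(f,\vecy)$, so $f(\alpha_i)=y_i$. Write $f(\alpha_i+T)=y_i+T\cdot g_i(T)$ for some polynomial $g_i$; this is just the Hasse/Taylor expansion, with $f^{(0)}(\alpha_i)=y_i$. Substituting $U = T g_i(T)$ into the expansion $Q(\alpha_i+T,y_i+U)=\sum_{a,r} q_{i,a,r}T^aU^r$ gives
\[
R_f(\alpha_i+T)=\sum_{a,r} q_{i,a,r}\, T^{a+r} g_i(T)^r .
\]
The constraints make every term with $a+r<m$ vanish. Each surviving term is divisible by $T^{m}$, so $T^m \mid R_f(\alpha_i+T)$, i.e.\ $(X-\alpha_i)^m \mid R_f(X)$.

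\emph{Conclusion.} The points $\alpha_i$ are distinct, so the factors $(X-\alpha_i)^m$ for $i\in\agr(f,\vecy)$ are pairwise coprime. Hence their product, of degree $m\abs{\agr(f,\vecy)}\ge mA$, divides $R_f$. Since $\deg R_f<mA$, this forces $R_f\equiv 0$.

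For the equivalent formulation, view $Q\in \F_q[X][Y]$ and divide by the monic (in $Y$) polynomial $Y-f(X)$:
\[
Q(X,Y)=(Y-f(X))\,S(X,Y)+Q(X,f(X)).
\]
So $R_f\equiv 0$ if and only if $Y-f(X)$ divides $Q$.

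The only step needing care is the multiplicity step: checking that vanishing of all coefficients $q_{i,a,r}$ with $a+r<m$ transfers to order-$m$ vanishing of $R_f$ at $\alpha_i$. The key observation is that the substitution $U=T g_i(T)$ has $U$ divisible by $T$, so it never lowers the $T$-order of a term below $a+r$. Everything else is routine degree counting.
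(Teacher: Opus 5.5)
Your proof is correct and follows essentially the same argument as the paper: bound $\deg R_f < mA$, use the constraints to show $(X-\alpha_i)^m \mid R_f$ at each agreement point, and conclude $R_f \equiv 0$. The only addition is your explicit division by the $Y$-monic polynomial $Y - f(X)$ to justify the ``equivalently'' clause, which the paper leaves implicit.
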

\begin{proof}
Since $Q$ has $(1,k)$-degree less than $mA$, the polynomial $R_f(X)=Q(X,f(X))$ has degree strictly smaller than $mA$. 

Suppose $f(\alpha_i) = y_i$ for some $i$, then $f(\alpha_i + T) = y_i + U_{f,i}(T)$ for some polynomial $U_{f,i}$ that is divisible by $T$. Therefore, 
\begin{align*}
R_f(\alpha_i + T) &= Q(\alpha_i + T, f(\alpha_i + T))\\
& = Q(\alpha_i + T, y_i + U_{f,i}(T))\\
& = 0 \bmod T^m \quad (\text{from our constraints})
\end{align*}
Thus, $R_f$ has $\alpha_i$ as a root of multiplicity at least $m$. But if $\agr(f, \vecy) \geq A$, then get that $R_f$ has $A$ roots of multiplicity at least $m$ each but $\deg R_f < mA$. Hence, $R_f(X) \equiv 0$. 
\end{proof}

Once we find such a nonzero $Q$, the list-decoding algorithm is now immediate; just factor $Q(X,Y)$ and any $f$ in the list would correspond to a factor of the form $Y - f(X)$ of $Q(X,Y)$. 

Thus, the parameter choices are basically to ensure that such a nonzero polynomial exists. To this end, it suffices to show that the number of constraints is less than the number of indeterminates in $Q$. 

\paragraph{Counting indeterminates in $Q$:} We need to count the number of monomials $X^a Y^r$ such that $a + kr < mA$. When $m$ is large enough, it can be shown (\cref{fact:weighted-degree-mons}) that this is roughly
\[
\frac{m^2 A^2}{2k}.
\]

\paragraph{Counting constraints:} The number of constraints per $i \in [n]$ is the number of monomials of the form $T^a U^r$ with $a + r < m$, and this is about $\binom{m}{2}$. 

\noindent Hence, to ensure that there is a nonzero $Q$, it suffices to ensure that 
\begin{align*}
\frac{m^2A^2}{2k} & > n \cdot \frac{m^2}{2}
\end{align*}
and this is satisfied if  $A \gtrsim \sqrt{nk}$. This completes the description of the Guruswami-Sudan algorithm.

\subsection{Using additional derivatives}

A very natural idea is to use some of the techniques that were useful in the setting of multiplicity codes or Folded Reed-Solomon codes, namely to use higher derivative information. Suppose we could, as first suggested by Kopparty~\cite{Kopparty2015} while decoding univariate multiplicity codes,  find a nonzero polynomial $Q(X, Y_0, \ldots, Y_d)$ with $(1, k, k, \ldots, k)$-degree\footnote{Technically, one should work with $(1,k,k-1,\ldots,k-d)$-degree, but the simpler $(1,k,k,\ldots,k)$-degree bound suffices.} less than $mA$ with the property that whenever $f(\alpha_i) = y_i$ we have 
\begin{equation}\label{eqn:hidden-derivative}
Q(\alpha_i + T, f(\alpha_i + T), f^{(1)}(\alpha_i + T), \ldots, f^{(d)}(\alpha_i + T)) = 0 \bmod T^m
\end{equation}
where $f^{(i)}(X)$ is the $d$-th Hasse derivative of $f$. 

Then, if $f$ has at least $A$ agreements, then we would again have that 
\[
R_f(X) := Q(X, f(X), f^{(1)}(X), \ldots, f^{(d)}(X)) = 0.
\]
That is, $f$ satisfies the ``differential equation'' given by $Q$. The intuition behind working with a $Q$ that is multivariate (and not just bivariate) is that the dimension of the space of such polynomials at any fixed weighted degree grows as the number of variables increases, and hence there is a chance that there would be a non-zero $Q$ of lower overall degree that satisfies the interpolation constraints.  

Fortunately, once we have such a $Q$, we also know how to find all such $f$'s that satisfy $R_f(X) \equiv 0$ via a differential equation solver in Kopparty's~\cite{Kopparty2015} decoding algorithm for univariate multiplicity codes. 

\begin{theorem}[\cite{Kopparty2015}]
  \label{thm:kopparty-algorithm}
  Suppose $Q(X, Y_0, \cdots, Y_d) \in \F_q[X, Y_0,\ldots, Y_d]$ is a nonzero polynomial with $\operatorname{char}(\F_q) = p > k, \deg_{Y_i}(Q)$. Then the set of all polynomials $f(X)$ of degree less than $k$ that satisfy
  \[
  Q(X, f(X), f^{(1)}(X), \cdots, f^{(d)}(X)) = 0
  \]
  can be enumerated in time $q^{O(d)} \cdot\poly(k, d_Q)$, where $d_Q$ is the $(1,k,k,\cdots, k)$-degree of $Q$. 
\end{theorem}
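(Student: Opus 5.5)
The plan is to follow Kopparty's approach: turn the differential equation into an algebraic recurrence on the Taylor coefficients of $f$ at a suitable base point, and then show that the solutions form an affine-like family parametrized by at most $d+1$ field elements (up to a bounded branching), which we enumerate.

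\textbf{Reduction.} First we reduce to a ``non-degenerate'' $Q$. Consider the partial derivative $\partial Q/\partial Y_d$. If $Q$ does not depend on $Y_d$, we drop $Y_d$ and recurse on a smaller $d$. Otherwise we replace $Q$ by a suitable factor or derivative. Concretely, for a solution $f$, pick the minimal-degree polynomial $\tilde Q$ in the ideal-like chain $Q, \partial_{Y_d}Q, \partial_{Y_d}^2 Q,\dots$ (or $Y_{d-1}$ if that chain stalls) that still vanishes on $(X,f,f^{(1)},\dots,f^{(d)})$. Then $\tilde Q$ vanishes at $f$, but $\partial_{Y_j}\tilde Q$ does not vanish there, where $j$ is the top variable on which $\tilde Q$ depends. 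Since $p>\deg_{Y_i}Q$, these derivatives are nonzero whenever the dependence is nonzero, so the characteristic hypothesis is used exactly here. We guess which element of this chain (and which $j$) is relevant: there are at most $\poly(d_Q)\cdot d$ choices.

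\textbf{Recurrence.} Let $S(X)=\partial_{Y_j}\tilde Q(X,f,\dots,f^{(d)})$. This is nonzero of degree at most $d_Q$, so some $\beta\in\F_q$ (or in a small extension) is not a root of $S$; we guess $\beta$, costing a factor of $q$. Now differentiate the identity $\tilde Q(X,f,\dots,f^{(j)})\equiv 0$, or better, substitute $X=\beta+T$ and compare coefficients of $T^t$. The Hasse-derivative expansion gives $f^{(i)}(\beta+T)=\sum_s \binom{i+s}{i} f_{i+s}T^s$, where $f_r$ are the Taylor coefficients of $f$ at $\beta$. In the coefficient of $T^t$ of the expansion, the largest-index coefficient $f_{j+t}$ appears linearly, multiplied by $\binom{j+t}{j}\,S(\beta)$, plus terms involving only $f_0,\dots,f_{j+t-1}$. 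Since $p>k>j+t$, the binomial coefficient is nonzero. Hence $f_{j+t}$ is uniquely determined by the lower-order coefficients for every $t\ge0$ with $j+t<k$. Therefore $f$ is determined by $(f_0,\dots,f_{j-1})\in\F_q^{j}$, which we enumerate, giving $q^{d}$ candidates. For each candidate we solve the recurrence in $\poly(k,d_Q)$ time and check the equation.

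\textbf{Main obstacle.} The main difficulty is making the reduction step rigorous and uniform. One needs a chain of polynomials, of size $\poly(d_Q)$ and independent of $f$, such that for every solution $f$ some member vanishes at $f$ while its top partial derivative does not. Taking, for each top variable $Y_j$ in turn, the successive $Y_j$-derivatives of $Q$ works: the last member vanishing at $f$ has a non-vanishing $Y_j$-derivative. If all $Y_j$-derivatives vanish, the member with $Y_j$-degree $0$ involves fewer variables, and we move down to $Y_{j-1}$. The total time is then $q^{O(d)}\poly(k,d_Q)$.
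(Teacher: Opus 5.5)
Your strategy is the paper's (i.e.\ Kopparty's). Take a $Y$-derivative $\tilde Q$ of $Q$ that vanishes along $f$ while its derivative in its top variable $Y_j$ does not. Take a base point $\beta$ with $S(\beta)\neq 0$. Then recover the higher Taylor coefficients of $f$ at $\beta$ from the lower ones. Your linear chain of $Y$-derivatives is correct: the last member vanishing along $f$ is never the final member, since that is a nonzero polynomial in $X$ alone. Its top variable is the one being differentiated at that stage, so its successor is $\partial_{Y_j}\tilde Q$. The chain is also more economical than the paper's use of all of $\partial_Y^{=*}(Q)$. Likewise, comparing $T^t$-coefficients is an elementary substitute for the paper's derivation operator $\mathcal{D}$. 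The paper needs $\mathcal{D}$ only to get explicit degree bounds on the maps $\Phi_{P,i}$ for the proximity-gap application.

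One step, however, is false as written: the case $t=0$ of your recurrence. The $T^0$-coefficient of $\tilde Q(\beta+T,\dots)$ is $\tilde Q(\beta,f_0,\dots,f_j)$, and there $f_j$ enters with degree $\deg_{Y_j}\tilde Q$, not linearly. Your linearity claim holds only for $t\ge 1$, because the higher-order terms in the perturbation $\binom{j+t}{j}f_{j+t}T^t$ are $O(T^{2t})$. Hence $f$ is \emph{not} determined by $(f_0,\dots,f_{j-1})$. For example, take $Q=Y_0^2-1$ with $p>2$. Then $\tilde Q=Q$, $j=0$ and $S=2f\neq 0$, and both $f=1$ and $f=-1$ are solutions. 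Yet both arise from the same $\tilde Q$, the same $\beta$, and the same (empty) guess $(f_0,\dots,f_{j-1})$.

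The repair is to guess $f_j$ as well, enumerating $(f_0,\dots,f_j)\in\F_q^{j+1}$: at most $q^{d+1}$ candidates, still $q^{O(d)}$. Alternatively, take the at most $\deg_{Y_j}\tilde Q$ roots of $\tilde Q(\beta,f_0,\dots,f_{j-1},Y_j)$. This is a nonzero polynomial, since its $Y_j$-derivative at $f_j$ equals $S(\beta)\neq 0$. This is exactly why the paper keeps all of $a_0,\dots,a_d$ as free parameters ($\Phi_{P,i}=Y_i$ for $i\le d$).

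A smaller point concerns choosing $\beta$ in an extension $\F_{q^r}$. Then the guessed Taylor coefficients range over $\F_{q^r}$, and the enumeration costs $q^{r(j+1)}$ rather than $q^{j+1}$. For the stated time bound you should therefore assume $\F_q$ has more than $d_Q$ elements. The paper makes the same implicit assumption when it picks $\gamma$ from a set $T$ of at least $mA$ field elements.
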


\begin{remark*}
The conditions above are slightly different from \cite{Kopparty2015} but follow from the proof. 
\end{remark*}

\noindent The main question is how we should be setting up our constraints to ensure \eqref{eqn:hidden-derivative}. 

\subsubsection*{Na\"ive attempt}

Let us try to follow the Guruswami-Sudan outline. We know that if $f(\alpha_i) = y_i$, then $f(\alpha_i + T) = y_i + U$ for some polynomial $U$ divisible by $T$. However, we have no information about the higher order derivatives. Thus, we may attempt to leave them as indeterminates. 

\paragraph{Constraints:}
For each $i \in [n]$, consider the polynomial 
\[
Q(\alpha_i + T, y_i + U, Y_1, \ldots, Y_d) = \sum q_{i,a,r,(b_1,\ldots,b_d)} T^a U^r Y_1^{b_1} \cdots Y_d^{b_d} = \sum q_{i,a,r,\vecb} T^a U^r \vecY^{\vecb}
\]
and insist that $q_{i,a,r, \vecb} = 0$ whenever $a + r < m$. 

Thus, if $\calB$ refers to the space of ``valid'' $\vecb=(b_1,\ldots, b_d)$, then this amounts to roughly $n \cdot \binom{m+1}{2} \cdot \abs{\calB}$ constraints overall. 

\paragraph{Number of indeterminates:}
We wish to get an estimate of the dimension of the interpolation space, namely the number of monomials we could have in $Q$. Thus, we want to count $X^a Y_0^{b_0} Y_1^{b_1} \cdots Y_d^{b_d}$. To make it easier to relate this with $\calB$, let us assume that we $\abs{\vecb}=b_1 + \cdots + b_d < 0.99 \cdot \frac{mA}{k}$, so that $a + k b_0$ could be as large as $0.01 \cdot mA$ no matter what $\vecb=(b_1,\ldots, b_d)$ is. Combining with \cref{fact:weighted-degree-mons}, we get that the number of indeterminates in $Q$ is at least 
\[
\frac{(0.01\cdot  mA)^2}{2k} \cdot \abs{\calB}\ . 
\]

Observe that we actually made things worse by looking at higher derivatives: our interpolation space seemed to have shrunk (superficially at least from the above bound) and we didn't get any useful gain from the constraints. They both got scaled by a factor of $\abs{\calB}$ and feels like we might as well have forced $\vecb= (b_1,\ldots, b_d) = \mathbf{0}$ (thus reducing to Guruswami-Sudan). 

\subsection{The key insight: hidden derivatives}

The key insight of \cite{BrakensiekCPZZ2026} is that there is a relation between the evaluation $y_i$ and the higher derivatives given by the following "reverse" Taylor expansion: 
\begin{align*}
f(X) & = f((X + T) - T) \\
     & = f(X + T) + (-T) \cdot f^{(1)}(X + T) + \cdots + (-T)^d f^{(d)}(X + T) \bmod T^{d+1}\\
\therefore f(X+T) & = f(X) + T f^{(1)}(X+T) + \cdots + (-1)^{d+1} T^d f^{(d)}(X+T) + E
\end{align*}
where the error term $E=E(X,T) = 0 \bmod T^{d+1}$. 

To illustrate the usefulness of this, let us focus on the case when $d = 1$, namely we introduce one hidden derivative. In this scenario, 
\[
f(\alpha_i + T) = y_i + T f^{(1)}(\alpha_i + T) + E
\]
where $E = 0 \bmod T^2$. We will see that this change is already enough to go past the Johnson radius. 

\subsubsection{Breaching the Johnson radius}

Let $A$ be the number of agreements we are targeting (and we just want to go below $\sqrt{nk}$ for now). Gabizon~\cite{Gabizon2026} studied the $d=1$ setting of the \cite{BrakensiekCPZZ2026} where one can go past the Johnson radius with a single hidden derivative, and also provided a linear list size in this setting (using ideas from \cite{Jeronimo2026}). 

\paragraph{Interpolation space:} Fix a large constant $L$. Let us find a polynomial $Q(X, Y_0, Y_1) = \sum q_{a,r,b} X^a Y_0^r Y_1^b$ such that
\begin{itemize}\itemsep 0pt
\item $a + kr + kb < mA$,
\item $b < Lm$. 
\end{itemize}

The dimension of this space is at least 
\begin{align*}
\sum_{b = 0}^{Lm-1} \sum_{\substack{a,r\\a + kr < ma - kb}} 1 & \geq Lm \cdot \frac{(mA - k m L)^2}{2k} = \frac{L m^3}{2k} \cdot (A - kL)^2
\end{align*}

\paragraph{Constraints: } For each $i \in [n]$, consider the polynomial 
\[
Q_i(T, E, Y_1) = Q(\alpha_i + T, y_i + TY_1 + E, Y_1) = \sum_{a,r,b} q_{i,a,r,b} T^a E^r Y_1^c. 
\]
We will add a constraint forcing $q_{i,a,r,c} = 0$ for every $(a,r,c)$ satisfying $a + 2r < m$. 

Note that for monomials $T^a E^b Y_1^c$ with $a < m$, we must have that $c < Lm + m$ since any contribution of $Y_1^{\ell}$ from $(y_i + TY_1 + E)^r$ comes accompanied by $T^{\ell}$ as well and the $Y_1$ degree of $Q$ itself is at most $Lm$. Thus, the total number of constraints (again using \cref{fact:weighted-degree-mons}) is at most 
\begin{align*}
n \cdot \sum_{c = 0}^{Lm + m - 1} \sum_{a + 2r < m} 1 \leq n \cdot m(L+1) \cdot \frac{m^2}{4}
\end{align*}

Thus, we would have a nonzero $Q$ if the number of constraints is smaller than the number of indeterminates, namely
\begin{align*}
\frac{L m^3}{2k} \cdot (A - kL)^2 & > n \cdot m(L+1) \cdot \frac{m^2}{4}.
\end{align*}
Thus, suffices to have $(A - kL)^2  > \frac{nk}{2} \cdot \inparen{1 + \frac{1}{L}}$. Or, equivalently, \[
A  > kL + \left(\frac{nk}{2} \cdot \inparen{1 + \frac{1}{L}}\right)^{1/2}.
\]

We already see that we are asymtotically approaching $A \gtrsim \sqrt{nk/2}$ by setting $L$ to be a large constant (at least when the rate is small), even with a single hidden derivative. By setting appropriately large number of derivatives, we will be able to further reduce the number of agreements $A = k (1 + \epsilon)$ for an arbitrarily small $\epsilon$.

\subsection{High-level picture for the general case}

Like in the illustrated case above, we begin with $Q(X, Y_0,\ldots, Y_d)$ and add linear constraints to try and ensure that 
\[
Q(\alpha_i + T, y_i + TY_1 - T^2 Y_2 + \cdots (-1)^{d+1} T^d Y_d + E, Y_1, \ldots, Y_d) = 0 \bmod{T^m}
\]
while keeping in mind that we expect $E$ to be $0 \bmod T^{d+1}$. In other words, if we expand the above into monomials of the form $T^a E^r Y_1^{c_1} \cdots Y_d^{c_d}$, we will expect the coefficient of all monomials with $a + (d+1)r < m$ to be zero. Thus, if $f(y_i) = \alpha_i$ and 
\[
R_{f}(\alpha_i + T) = Q(\alpha_i + T, f(\alpha_i + T), \cdots, f^{(d)}(\alpha_i + T)),
\]
we observe that this is divisible by $T^{m}$ by substituting $Y_j = f^{(j)}(\alpha_i + T)$ and 
\[
E = f(\alpha_i + T) - y_i - \sum_{j=1}^{d} (-1)^{d+1} T^j f^{(j)}(\alpha_i + T)
\]

To count the constraints, we need to count the number of monomials of the form $T^a E^r Y_1^{b_1} \cdots Y_d^{b_d}$ with $a + (d+1)r < m$. Focusing on the number of possibilities for $(a,r)$, we get $\frac{m^2}{2(d+1)}$ instead of the na\"ive bound of $\frac{m^2}{2}$. Hence, we have somehow gotten a saving of a factor of $d+1$. 

But we have lost something as well. The set of exponents $(c_1,\ldots, c_d)$ for $Y_1,\ldots, Y_d$ is now no longer just an element of $\calB$ as we have a contribution from the substitution we made. In fact, any $Y_1^{c_1}\cdots Y_d^{c_d}$ that we obtain can be written as $Y_1^{b_1} \cdots Y_d^{b_d} \cdot Y_1^{b_1'} \cdots Y_d^{b_d'}$ where $(b_1,\ldots, b_d) \in \calB$ and $Y_1^{b_1'}\cdots Y_d^{b_d'}$ is the additional degree we obtain from the substitution. However, every monomial of the form $Y_1^{b_1'}\cdots Y_d^{b_d'}$ from $(y_i + TY_1 - T^2 Y_2 + \cdots \pm T^d Y_d + E)^r$ comes accompanied by a $T^{\sum i b_i'}$. Since the constraints we are imposing are only on monomials with $T$-degree less than $m$, we have that $(c_1,\ldots, c_d) \in \calB + \Lambda(m)$ where 
\[
\Lambda(m) := \setdef{(b_1',\ldots, b_d') \in \Z_{\geq 0}^d}{\sum i b_i' \leq m}. 
\]
Thus, the number of constraints is at most $n \cdot \frac{m^2}{2(d+1)} \cdot \abs{\calB+\Lambda(m)}$. 

From the counting done in the naive attempt in the previous section, the number of coefficients in $Q$ on the other hand is at least  $(0.01 mA)^2\abs{\calB}/2k$, which, for $A = k(1+\epsilon)$ is at least $\left(0.00005(1+\epsilon)^2 m^2k\abs{\calB}\right) = c_\epsilon m^2 k \abs{\calB}$ for some constant $c_\epsilon$. 

Hence, if the factor of $(d+1) = \Omega(d)$ can offset $\frac{1}{c_{\epsilon}}\cdot \frac{n\abs{\calB+\Lambda(m)}}{k\abs{\calB}}$, then, we would have fewer constraints than indeterminates and we can find our polynomial $Q$. 

\paragraph{Getting a sense of the sizes:} We are hoping to work with $A = k(1 + \epsilon)$ for some small constant $\epsilon > 0$ and hence $mA/k = m(1 + \epsilon)$. If $\calB$ is chosen to be $\calB_d := \setdef{(b_1,\ldots, b_d)}{\sum b_i \leq D := m(1 + \epsilon)}$, we have that $|\calB_d| \approx \tfrac{D^d}{d!}$ . However, note that $\calB + \Lambda(m)$ certainly contains points of the form $\setdef{(a, b_2,\ldots, b_d)}{0 \leq a \leq m\;,\; b_2 + \cdots + b_d \leq D}$ and this already accounts for about 
\[
m \cdot \frac{D^{d-1}}{(d-1)!} = \Omega(d \cdot |\calB_d|)
\]
since $D \approx m$. Therefore, this completely offsets the gain we obtained earlier. 

Hence, we need to choose a suitable $\calB$ so that $\calB + \Lambda(m)$ is not that much larger than $\calB$ itself, and turns out a convenient choice is $\calB = \Lambda(W)$ for a suitably chosen $W$. Then, $\calB + \Lambda(W) = \Lambda(W) + \Lambda(m) = \Lambda(W + m)$ and
\[\frac{\abs{\calB+\Lambda(m)}}{\abs{\calB}} =
\frac{\abs{\Lambda(W+m)}}{\abs{\Lambda(W)}} \approx \frac{(W+m)^d}{W^d}
\]
If $W = \frac{md (1 + \delta)}{\ln d}$ for some suitable $0 < \delta < \epsilon$, the above ratio becomes $d^{1/(1 + \delta)}$ which can be offset by the factor $d$ we had obtained above. While lower bounding the dimension of the interpolation space, we had used the rough estimate that $\abs{\vecb} = \sum b_i$ is bounded from above by $0.99 \cdot \frac{mA}{k}$. This argument can be made precise if we had a bound on $\E_{\vecb \in \Lambda(W)}[\sum b_i]$ instead.

Thus, much of the care in the proof is in this choice of $W$, understanding the size of $\Lambda(W)$ and $\E_{\vecb \in \Lambda(W)}[\sum b_i]$. We do that in the next section and defer the detailed calculations to the subsequent section. 

\section{The simplex $\Lambda(W)$}

We will mostly be focusing on the discrete set of points $\Lambda(W)$ defined earlier, and its continuous analogue:
\begin{align*}
\Lambda(W) &:= \setdef{(b_1,\ldots, b_d) \in \Z_{\geq 0}^d}{b_1 + 2 b_2 + \cdots + d b_d \leq W},\\
\lambda(W) &:= \setdef{(b_1,\ldots, b_d) \in \R_{\geq 0}^d}{b_1 + 2 b_2 + \cdots + d b_d \leq W}.
\end{align*}

\subsection{Counting points in $\Lambda(W)$}

Note that $\lambda(W)$ is just the standard simplex $S := \setdef{(b_1,\ldots, b_d) \in \R_{\geq 0}^d}{\sum b_i \leq 1}$ with the $i$-th direction scaled by $\tfrac{W}{i}$. It is well-known that $\vol(S) = \tfrac{1}{d!}$ and this immediately yields that 
\[
\vol(\lambda(W)) = \frac{W^d}{d!^2}\ . 
\]
Turns out, this is also a very good approximation of the number of points in $\Lambda(W)$ and the following lemma makes this precise. 

\begin{lemma}[relating discrete and continuous simplex]
  \label{lem:relating-discrete-continuous-simplex}
  \[
\vol(\lambda(W))\leq \abs{\Lambda(W)} \leq \vol(\lambda(W + d^2))\ .
  \]
\end{lemma}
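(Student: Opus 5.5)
The plan is the standard unit-cube packing argument. To each lattice point $\vecb \in \Lambda(W)$ attach the half-open unit cube $C(\vecb) := \vecb + [0,1)^d$. These cubes are pairwise disjoint, each has volume $1$, and so $\abs{\Lambda(W)}$ equals the volume of their union. The two inequalities then come from two set containments: the union of the cubes sits inside a slightly larger continuous simplex, and it covers the continuous simplex $\lambda(W)$.

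\textbf{Upper bound.} Take $\vecb \in \Lambda(W)$ and a point $\vecx = \vecb + \vect \in C(\vecb)$ with $\vect \in [0,1)^d$. Then $\vecx \geq 0$ coordinatewise, and
\[
\sum_i i x_i = \sum_i i b_i + \sum_i i t_i < W + \sum_{i=1}^d i = W + \tfrac{d(d+1)}{2} \leq W + d^2 .
\]
So $C(\vecb) \subseteq \lambda(W+d^2)$ for every $\vecb \in \Lambda(W)$. By disjointness of the cubes, $\abs{\Lambda(W)} = \vol\inparen{\bigcup_{\vecb} C(\vecb)} \leq \vol(\lambda(W+d^2))$.

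\textbf{Lower bound.} Take any $\vecx \in \lambda(W)$ and round it down coordinatewise, setting $b_i = \lfloor x_i \rfloor$. Each $b_i$ is a non-negative integer, and since $b_i \leq x_i$ we get $\sum_i i b_i \leq \sum_i i x_i \leq W$. Hence $\vecb \in \Lambda(W)$, and by construction $\vecx \in C(\vecb)$. This shows $\lambda(W) \subseteq \bigcup_{\vecb \in \Lambda(W)} C(\vecb)$, and taking volumes gives $\vol(\lambda(W)) \leq \abs{\Lambda(W)}$.

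There is no real obstacle here. The only points to handle carefully are using half-open cubes, so that the union's volume is exactly the number of lattice points, and noting the slack $\tfrac{d(d+1)}{2} \leq d^2$, which holds for $d \geq 1$.
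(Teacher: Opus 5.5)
Your proposal is correct and follows the same approach as the paper: you attach half-open unit cubes $\vecb + [0,1)^d$ to the points of $\Lambda(W)$ and show their union covers $\lambda(W)$ and lies inside $\lambda(W+d^2)$, using $\sum_i i = d(d+1)/2 \leq d^2$. Your floor-rounding step also spells out the covering containment, which the paper simply asserts as clear.
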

\begin{proof}
  Take every point of $\vecb \in \Lambda(W)$ and consider the unit cube $\vecb + [0,1)^d$ (i.e., you may add any real number less than $1$ to each of the coordinates). Clearly $\lambda(W)$ is completely contained in the union of these cubes and that gives the inequality on the left. On the other hand, any point in such a cell is a vector of the form $(b_1',\ldots, b_d')$ with $b_i' < b_i + 1$. Hence, $\sum i b_i' \leq \sum i b_i + \sum i \leq W + d^2$. Hence, the  union of all these cells is contained within $\lambda(W + d^2)$. 
\end{proof}

\begin{corollary}
  \label{cor:simplex-size}
  If $W = \Omega(d^4)$, then $\abs{\Lambda(W)} = \vol(\lambda(W)) \cdot (1 \pm \tfrac{d^3}{W})$. 
\end{corollary}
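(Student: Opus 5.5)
The plan is to sandwich $\abs{\Lambda(W)}$ using \cref{lem:relating-discrete-continuous-simplex} and then compare the two volumes using the exact formula $\vol(\lambda(W)) = \tfrac{W^d}{d!^2}$. The lower bound $\abs{\Lambda(W)} \geq \vol(\lambda(W))$ comes directly from the lemma and already gives the ``$-$'' side, since it is stronger than $\vol(\lambda(W))(1 - \tfrac{d^3}{W})$. All the remaining work is on the upper side, where the lemma gives
\[
\abs{\Lambda(W)} \leq \vol(\lambda(W+d^2)) = \frac{(W+d^2)^d}{d!^2} = \vol(\lambda(W)) \cdot \inparen{1 + \frac{d^2}{W}}^d .
\]

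It therefore suffices to show $\inparen{1 + \tfrac{d^2}{W}}^d \leq 1 + \tfrac{d^3}{W}$ when $W = \Omega(d^4)$.

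\begin{itemize}
\item First bound the power by an exponential: $(1+x)^d \leq e^{dx}$, where $x = d^2/W$. This gives a bound of $e^{d^3/W}$.
\item Under the hypothesis $W \geq C d^4$ we have $y := d^3/W \leq 1/(Cd) \leq 1/C$, so $y$ is small.
\item Expand $e^{y} \leq 1 + y + y^2$ for $y \leq 1$. This yields $1 + \tfrac{d^3}{W}\inparen{1 + \tfrac{d^3}{W}}$, which overshoots the target by a factor $(1 + O(1/d))$.
\end{itemize}

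This extra factor is the only real obstacle, because the stated bound has constant exactly $1$ in front of $d^3/W$. To handle it, I would use the binomial expansion directly:
\[
(1+x)^d = \sum_j \binom{d}{j} x^j \leq 1 + dx + \sum_{j\geq 2} (dx)^j .
\]
This equals $1 + dx + \frac{(dx)^2}{1-dx}$. Here $dx = d^3/W \leq 1/(Cd)$, so the tail is at most $(dx)\cdot \frac{dx}{1-dx}$, which is small compared to $dx$ but still strictly positive.

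Matching the constant exactly therefore requires some slack. One natural source would be a sharper version of the lemma, noting that $\sum_i i = d(d+1)/2$ rather than $d^2$; using $W + d(d+1)/2$ in place of $W + d^2$ gives $dx \approx \tfrac{d^3}{2W}$. The tail $\frac{(dx)^2}{1-dx}$ is then easily absorbed into the remaining $\tfrac{d^3}{2W}$, which gives exactly $\abs{\Lambda(W)} \leq \vol(\lambda(W))(1 + \tfrac{d^3}{W})$. However, the displayed proof of the lemma only establishes the cruder $W + d^2$. So I would either re-derive this sharper containment inline (the same cube argument gives it), or read the ``$\pm$'' in the corollary as holding up to a $(1+o(1))$ adjustment of the constant, as the paper's $\Omega(d^4)$ phrasing suggests. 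The hypothesis $W = \Omega(d^4)$ is used precisely to make $d^3/W = O(1/d)$ small, so that all higher-order terms are dominated.
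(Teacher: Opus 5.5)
Your proposal is correct and follows the paper's argument. Both sandwich $\abs{\Lambda(W)}$ between $\vol(\lambda(W))$ and $\vol(\lambda(W+d^2))$ and compare the two using $\vol(\lambda(W)) = W^d/d!^2$; the paper simply writes $\inparen{1+\tfrac{d^2}{W}}^d \approx 1+\tfrac{d^3}{W}$.

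You correctly notice that this $\approx$ hides a constant slightly above $1$. Your fix, replacing $d^2$ by $\sum_i i = d(d+1)/2$ in the cube argument, is valid and turns the bound into an honest inequality with constant exactly $1$ (for $d=1$, take $(1+x)^1 = 1+x$ exactly rather than the binomial tail bound). So take that option and drop the alternative of reinterpreting the ``$\pm$''.
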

\begin{proof}
Follows from the fact that the two sides of the inequality in the previous lemma are within a factor of $\inparen{1 + \tfrac{d^2}{W}}^d \approx 1 + \tfrac{d^3}{W}$. 
\end{proof}

\subsection{Expected value of a coordinate in $\Lambda(W)$}

We'll state the following fact we would need (with the proof provided in \cref{sec:deferred-proofs}) that relates the expected $\ell_1$ norm of a vector in $\Lambda(W)$.  

\begin{fact}
  \label{fact:expectation-of-l1-in-lattice}
For $W = \Omega(d^4)$, if $\mathcal{D}$ is the uniform distribution on $\Lambda(W)$, then 
\[
\E_{\vecb \sim \mathcal{D}}[b_i] = \frac{W}{i(d+1)} + O_d(1).
\]
Consequently, $\E_{\vecb \sim \mathcal{D}}[\sum b_i] = \frac{W \cdot H_d}{d+1} + O_d(1)$ where $H_d = \sum_{i=1}^d \tfrac{1}{i} \approx \ln d + \gamma + o(1)$ ($\gamma \approx 0.577..$ is the Euler-Mascheroni constant).
\end{fact}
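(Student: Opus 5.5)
We would compute the expectation in the continuous simplex first and then transfer it to the lattice. In the continuous setting, let $\vecb$ be uniform on $\lambda(W)$. We make the linear substitution $c_j = j b_j/W$. This maps $\lambda(W)$ onto the standard simplex $S$ and pushes the uniform measure forward to the uniform measure on $S$. For the uniform distribution on $S$ (equivalently, the first $d$ coordinates of a uniform point of the $(d+1)$-dimensional probability simplex, i.e.\ a Dirichlet$(1,\ldots,1)$ vector), every coordinate has mean $\frac{1}{d+1}$ by symmetry among the $d+1$ coordinates including the slack $1-\sum c_j$. Hence the continuous expectation of $b_i$ is exactly $\frac{W}{i(d+1)}$. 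Alternatively, one can compute directly: the slice $b_i = t$ of $\lambda(W)$ is a copy of the $(d-1)$-dimensional simplex $\lambda'(W - it)$, whose volume is proportional to $(W-it)^{d-1}$. Integrating $t(W-it)^{d-1}$ against $(W-it)^{d-1}$ is then a Beta integral giving $\frac{W}{i(d+1)}$.

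To pass to the lattice, we write $\sum_{\vecb\in\Lambda(W)} b_i$ and compare it with $\int_{\lambda(W)} b_i$ using the same unit-cube sandwich as in \cref{lem:relating-discrete-continuous-simplex}. Attach to each lattice point $\vecb$ the cube $\vecb+[0,1)^d$. On that cube the integrand $x_i$ differs from $b_i$ by less than $1$. The union of the cubes contains $\lambda(W)$ and is contained in $\lambda(W+d^2)$. This yields
\[
\int_{\lambda(W)} x_i - \abs{\Lambda(W)} \;\le\; \sum_{\vecb\in\Lambda(W)} b_i \;\le\; \int_{\lambda(W+d^2)} x_i .
\]
By the continuous computation, $\int_{\lambda(V)} x_i = \vol(\lambda(V))\cdot\frac{V}{i(d+1)}$.

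We then divide by $\abs{\Lambda(W)}$ and invoke \cref{cor:simplex-size}, which gives $\abs{\Lambda(W)} = \vol(\lambda(W))(1\pm d^3/W)$ and $\vol(\lambda(W+d^2))/\vol(\lambda(W)) = 1+O(d^3/W)$. The upper bound becomes $\frac{W+d^2}{i(d+1)}(1+O(d^3/W)) = \frac{W}{i(d+1)} + O(d^2)$, and the lower bound becomes $\frac{W}{i(d+1)}(1-O(d^3/W)) - 1 = \frac{W}{i(d+1)} - O(d^2)$. Both error terms are $O_d(1)$. The hypothesis $W=\Omega(d^4)$ is exactly what makes $(1+d^2/W)^d = 1+O(d^3/W)$ valid. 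Summing over $i$ gives $\frac{W H_d}{d+1} + O_d(1)$, and $H_d \approx \ln d+\gamma$ is the standard harmonic estimate.

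The main obstacle is bookkeeping the error terms so that the multiplicative $(1\pm d^3/W)$ factors, when multiplied by the main term of order $W$, leave only an additive $O_d(1)$ error. This works because $W\cdot d^3/W = d^3$. The symmetry argument for the continuous mean is the conceptual step and we expect it to be routine.
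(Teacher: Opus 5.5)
Your proposal is correct and follows essentially the same route as the paper. The paper also computes the continuous mean $\frac{W}{i(d+1)}$ by rescaling to the standard simplex, adding a slack coordinate and using symmetry. It then applies the unit-cube sandwich $\lambda(W) \subseteq \bigcup_{\vecb}(\vecb+[0,1)^d) \subseteq \lambda(W+d^2)$ and divides by $\abs{\Lambda(W)}$ via \cref{cor:simplex-size}.

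The only differences are cosmetic. You bound $x_i - b_i \in [0,1)$ on each cube, whereas the paper uses the exact cube mean $b_i + \tfrac12$; this does not affect the $O_d(1)$ conclusion. Also, $W = \Omega(d^3)$ already suffices for the step $(1+d^2/W)^d = 1 + O(d^3/W)$, so calling $W = \Omega(d^4)$ ``exactly'' what is needed overstates it.
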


\section{Algorithmic list-decoding: completing the proof}

We have seen the rough setup, so we will now proceed with the actual calculations. Let $A = k(1 + \epsilon)$, the number of agreements we are targeting. We will choose a multiplicity parameter $m = d^4$. 

\paragraph{Interpolation space ($\mathcal{M}$):} We will search for a nonzero polynomial $Q(X, Y_0, \ldots, Y_d)$ with support $X^a Y_0^{r} Y_1^{b_1} \cdots Y_d^{b_d}$ with 
\begin{itemize}\itemsep 0pt
\item $(1,k,k,\cdots, k)$-degree of $Q$ is less than $mA = mk(1 + \epsilon)$,
\item $(b_1,\ldots, b_d) \in \Lambda(W)$ where $W = \frac{md(1 + \tfrac{\epsilon}{2})}{H_d}$
\end{itemize}

By \cref{fact:expectation-of-l1-in-lattice}, note that the expected value of $b_1 + \cdots + b_d$ for this choice of $W$ is at most $m ( 1 + \tfrac{\epsilon}{2})$, which is smaller than $mA/k = m(1 + \epsilon)$. 

\paragraph{Number of indeterminates:} For ease of notation, we use $x_+$ to denote $\max(x, 0)$. If $\abs{\vecb} = \sum b_i$, then using \cref{fact:weighted-degree-mons} we get that the number of monomials that $Q$ is supported on is at least
\begin{align*}
\sum_{\vecb \in \Lambda(W)} \frac{(mA - k\abs{\vecb})_+^2}{2k} & = \abs{\Lambda(W)} \cdot \E_{\vecb \in \Lambda(W)} \frac{(mA - k\abs{\vecb})_+^2}{2k}\\
& \geq \abs{\Lambda(W)} \frac{(mA - k\E_{\vecb \sim \Lambda(W)}[\sum b_i])_+^2}{2k} & \text{(by Jensen on $x_+^2$)}\\
& \geq \abs{\Lambda(W)} \frac{(mk(1 + \epsilon) - mk(1 + \tfrac{\epsilon}{2}))_+^2}{2k}\\
& = \abs{\Lambda(W)} \cdot \frac{\epsilon^2}{8} \cdot m^2 k.
\end{align*}

\paragraph{The constraints ($\mathcal{C}$):} For each $i \in [n]$, consider the polynomial 
\begin{align*}
Q_i(T, E, Y_1, \ldots, Y_d) & = Q(\alpha_i + T, y_i + TY_1 - T^2 Y_2 + \cdots + (-1)^{d+1} T^d Y_d + E, Y_1, \ldots, Y_d)\\
& = \sum_{a,r, \vecc} q_{i, a, r, \vecc} \cdot T^a E^r Y_1^{c_1} \cdots Y_d^{c_d}
\end{align*}
where each $q_{i,a, r, \vecc}$ is a homogeneous linear combination of the coefficients of $Q$. For each $(a, r, \vecc)$ with $a + (d+1)r < m$, we add the constraint $q_{i,a, r, \vecc} = 0$. \\

Any monomial $Y_1^{b_1'} \cdots Y_d^{b_d'}$ obtained from $(y_i + TY_1 - T^2 Y_2 + \cdots + (-1)^{d+1} T^d Y_d + E)^r$ comes coupled with $T^{\sum i b_i'}$. The constraints above are only for monomials with $T$-degree bounded by $m$. Hence, we have that $c_1 + 2c_2 + \cdots + d c_d \leq W + m$ for all such monomials. Hence, the number of constraints we are adding is at most 
\begin{equation*}
n \cdot \frac{m^2 (1 + o(1))}{2(d+1)} \cdot \abs{\Lambda(W + m)} \leq n \cdot \frac{m^2}{d} \cdot \abs{\Lambda(W + m)} 
\end{equation*}

Therefore, a nonzero $Q$ satisfying the above constraints will exist if the number of indeterminates in $Q$ is more than the number of constraints we impose.

\begin{align*}
\frac{\text{\# indeterminates}}{\text{\# constraints}} & = \frac{\frac{\epsilon^2}{8} \cdot m^2 k \cdot \abs{\Lambda(W)}}{n \cdot \frac{m^2}{d} \cdot \abs{\Lambda(W + m)}}\\
& = \frac{\epsilon^2}{8} \cdot \text{rate} \cdot d \cdot \frac{\abs{\Lambda(W)}}{\abs{\Lambda(W + m)}} \approx  \frac{\epsilon^2}{8} \cdot \text{rate} \cdot d \cdot \frac{W^d}{(W+m)^d} & \text{($\because \; W \gg d^3$)}\\
& \geq \frac{\epsilon^2}{8} \cdot \text{rate} \cdot d \cdot \frac{1}{\exp(\tfrac{md}{W})} = \frac{\epsilon^2}{8} \cdot \text{rate} \cdot d \cdot \frac{1}{\exp(H_d \cdot \tfrac{2}{2+\epsilon})}\\
& \geq \frac{\epsilon^2}{8} \cdot \text{rate} \cdot d \cdot \frac{1}{\exp((\ln d + \gamma) \cdot \tfrac{2}{2+\epsilon})}\\
& \geq \frac{\epsilon^2}{8} \cdot \text{rate} \cdot d \cdot d^{-\frac{2}{2+\epsilon}} \cdot e^{-\gamma} \geq \frac{\epsilon^2}{16} \cdot \text{rate} \cdot d^{\frac{\epsilon}{2+\epsilon}}. 
\end{align*}

\noindent
Hence, if $d$ is set so that 
\[
d > \inparen{\frac{16}{\epsilon^2 \cdot \text{rate}}}^{\tfrac{2+\epsilon}{\epsilon}}, 
\]
we have more indeterminates than constraints and hence a nonzero $Q$ satisfying our constraints exists. The above proof also readily leads to a list-decoding algorithm that computes a nonzero $Q$ by finding a kernel vector of an $|\mathcal{C}| \times |\mathcal{M}|$ matrix where $|\mathcal{M}|, |\mathcal{C}| \leq n^{O_{\epsilon, \text{rate}}(1)}$, and then solves the differential equation via \cref{thm:kopparty-algorithm}. This completes the proof of \cref{thm:BCPZZ}. \\

\begin{algorithm}[H]
\caption{List-decoding $\RS_q[n,k]$}\label{alg:rs-decoder}
\KwIn{Received word $\vecy = (y_1,\dots,y_n)$ for evaluation points $(\alpha_1,\ldots, \alpha_n)$, and $\epsilon > 0$}
\KwOut{All $f$ of degree less than $k$ such that $\abs{\agr(f, \vecy)} \geq k(1 + \epsilon)$}
Choose parameters $d = \ceil{\inparen{\frac{16}{\epsilon^2 \cdot \nicefrac{k}{n}}}^{\nicefrac{3}{\epsilon}}}$, $m = d^4$ and $W = \ceil{\tfrac{md(1+\tfrac{\epsilon}{2})}{H_d}}$.\;
Set the linear system to find a nonzero $Q(X, Y_0,\ldots, Y_d) = \sum q_{a,r,\vecb} X^a Y_0^r Y_1^{b_1}\cdots Y_d^{b_d}$ with $(1,k,k,\cdots, k)$-degree less than $mk(1+\epsilon)$ and $\vecb \in \Lambda(W)$ that satisfy the following constraints: for all $i \in [n]$, all monomials in
\[
Q(\alpha_i + T, y_i + TY_1 - T^2 Y_2 + \cdots + (-1)^{d+1} T^d Y_d + E, Y_1,\ldots, Y_d)
\]
of the form $T^a E^r \mathbf{Y}^{\vecb}$ with $a + (d+1)r < m$ have a coefficient of zero. 
\;
Run the differential equation solver of \cite{Kopparty2015} to find all $f$ of degree less than $k$ such that 
\[
Q(X, f, f^{(1)}, \dots, f^{(d)}) = 0.
\]\;
\vspace{-\baselineskip}
\Return{all $f$'s from the previous line that agree with $\vecy$ on at least $k(1+\epsilon)$ locations.}\;
\end{algorithm}

\section{Proximity gaps}

An important related question is the notion of proximity gaps for these codes. Throughout this section, we shall focus on the instructive case of proximity gaps for lines (i.e., received word of the form $u + Z v$). The extension to more general statements is relatively straightforward.

\paragraph{Problem statement: } Suppose we are given two tables $u, v\colon \F_q\rightarrow \F_q$ such that for ``many'' values $\zeta \in \F_q$, there table $u + \zeta v$ agrees with some codeword of $\RS_q[n,k]$ on at least $A$ locations (i.e., there is some $f$ of degree less than $k$ such that $f$ and $u + \zeta v$ agree on at least $A$ evaluations among $\set{\alpha_1,\ldots, \alpha_n}$). Show that there are polynomials $g, h$ of degree less than $k$ and a set $S \subseteq [n]$ of size at least $A$ such that $g|_S = u|_S$ and $h|_S = v|_S$ (and consequently, $u + \zeta v$ agrees with $g + \zeta h$ on $S$, for any choice of $\zeta$). \\

This problem was first studied by Ben-Sasson, Carmon, Ishai, Kopparty, and Saraf~\cite{BenSassonCIKS2023}, who proved results when $A$ is at least the unique-decoding radius $\nicefrac{(n+k)}{2}$ or the Johnson radius $\sqrt{nk}$. The open question was then to understand the regime where $A$ is barely above the capacity of the code, i.e., $k(1+\epsilon)$ for arbitrarily small $\epsilon$. The main observation of Jeronimo~\cite{Jeronimo2026} is that the setup of \cite{BrakensiekCPZZ2026} and the differential equation solver of \cite{Kopparty2015} have enough algebraic structure to derive the above statement even for such small values of $A$. The overall sketch is quite natural, but the final analysis requires some tools from algebraic geometry. We will first provide the sketch, and defer the use of algebraic geometry as much as possible (a brief primer on the algebraic geometric concepts used in this exposition is provided in \cref{sec:alg-geom-primer}). 

\subsection{Overall sketch}

Let us fix some $\zeta_0 \in \F_q$ and a polynomial $f_0(X) \in \F_q[X]$ of degree less than $k$ with the property that $\abs{\agr(u + \zeta_0 v, f_0)} \geq A$. Let $S_{\zeta_0} := \agr(u + \zeta_0 v, f_0) = \setdef{i \in [n]}{(u + \zeta_0 v)(\alpha_i) = f_0(\alpha_i)}$. Fix any subset $T \subset S_{\zeta_0}$ of size $k$ and find the unique polynomials $g,h \in \F_q[X]$ of degree less than $k$ such that $g|_T = u|_T$ and $h|_T = v|_T$. Note that this also implies that $f_0 = g + \zeta_0 h$. 

Let $S := \agr(g, u) \cap \agr(h, v) = \setdef{i \in [n]}{g(\alpha_i) = u(\alpha_i), h(\alpha_i) = v(\alpha_i)}$. Clearly, $T \subseteq S \subseteq S_{\zeta_0}$ as any correlated agreement of $(g,h)$ with $(u,v)$ will translate to agreement of $g + \zeta_0 h = f_0$ with $u + \zeta_0 v$. If $S = S_{\zeta_0}$, we have the required correlated agreement we are after. Hence, we just need to address the situation that $S \subsetneq S_{\zeta_0}$. 

For any $i \in [n]$, let $H_i := \setdef{(\zeta,f)}{\deg(f) < k \text{ and } f(\alpha_i) = (u + \zeta v)(\alpha_i)}$, which is a affine hyperplane in $\F_q^{k+1}$.

\begin{lemma}
Suppose $i \in S_{\zeta_0} \setminus S$. Then, we have
\[
\bigcap_{j \in T \cup \{i\}} H_j = \set{(\zeta_0, f_0)}.
\]
\end{lemma}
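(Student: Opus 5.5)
Since all the $H_j$ are affine hyperplanes in the space of pairs $(\zeta,f)$ with $\deg f < k$ (identified with $\F_q^{k+1}$), the plan is to show that the intersection contains $(\zeta_0,f_0)$ and that no other pair can lie in it.

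Containment is immediate. Since $T \cup \{i\} \subseteq S_{\zeta_0}$, we have $f_0(\alpha_j) = (u+\zeta_0 v)(\alpha_j)$ for every $j \in T\cup\{i\}$. Hence $(\zeta_0,f_0) \in H_j$ for all such $j$.

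For uniqueness, let $(\zeta,f)$ be any point of the intersection. First I use the $k$ constraints indexed by $T$. For $j \in T$ we have $f(\alpha_j) = u(\alpha_j) + \zeta v(\alpha_j) = g(\alpha_j) + \zeta h(\alpha_j)$, because $g|_T = u|_T$ and $h|_T = v|_T$. Both $f$ and $g + \zeta h$ have degree less than $k$ and agree on the $k$ distinct points $\{\alpha_j\}_{j\in T}$. By uniqueness of interpolation, $f = g + \zeta h$. So the $T$-constraints cut the intersection down to the line $\setdef{(\zeta, g+\zeta h)}{\zeta\in\F_q}$, which contains $(\zeta_0,f_0)$ since $f_0 = g+\zeta_0 h$.

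Next I impose the constraint at $i$. The condition becomes
\[
g(\alpha_i) + \zeta h(\alpha_i) = u(\alpha_i) + \zeta v(\alpha_i),
\]
that is,
\[
\bigl(g(\alpha_i) - u(\alpha_i)\bigr) + \zeta\bigl(h(\alpha_i) - v(\alpha_i)\bigr) = 0,
\]
which is an affine-linear equation in $\zeta$. Since $i \notin S$, the pair $(g(\alpha_i),h(\alpha_i))$ differs from $(u(\alpha_i),v(\alpha_i))$, so the equation is not the trivial one $0=0$. It therefore has at most one solution in $\zeta$. This is the only point needing care. If $h(\alpha_i) = v(\alpha_i)$, then $g(\alpha_i) \neq u(\alpha_i)$ and the equation has no solution, which contradicts $\zeta_0$ being a solution. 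So the coefficient $h(\alpha_i) - v(\alpha_i)$ is nonzero and $\zeta = \zeta_0$ is the unique solution. Then $f = g + \zeta_0 h = f_0$, which proves the intersection is exactly $\set{(\zeta_0,f_0)}$.
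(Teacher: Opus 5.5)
Your proof is correct and follows essentially the same route as the paper: the $k$ constraints indexed by $T$ force $f = g + \zeta h$, and the constraint at $i \notin S$ becomes a nontrivial affine-linear equation in $\zeta$ that $\zeta_0$ satisfies, so $\zeta_0$ is its only solution. Your explicit check that the coefficient $h(\alpha_i) - v(\alpha_i)$ must be nonzero is a small piece of extra care that the paper leaves implicit.
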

\begin{proof}
  Note that for any $\zeta$, from our construction of $g$ and $h$, the unique polynomial of degree less than $k$ agreeing with $u + \zeta v$ on the set $T$ is the polynomial $g + \zeta h$. Observe that since $i \in S_{\zeta_0}\setminus S$, we get that either $u(\alpha_i) \neq g(\alpha_i)$ or $v(\alpha_i) \neq h(\alpha_i)$. Therefore, the polynomial
  \[
  (u(\alpha_i) - g(\alpha_i)) + Z(v(\alpha_i) - h(\alpha_i))
  \]
  is a nonzero linear function in $Z$ and hence there is at most one value of $Z$ that can make this equal to zero, and we already know that it is zero at $\zeta_0$ since $f_{0} = g + \zeta_0 h$. Therefore, at all $\zeta \neq \zeta_0$, we have $(u + \zeta v)(\alpha_i) \neq (g + \zeta h)(\alpha_i)$. 

  Since the size of $T$ is $k$, for every $\zeta$, the unique $f$ of degree less than $k$ that agrees with $(u+\zeta v)$ at every $\alpha \in T$ is $g+\zeta h$. But when $\zeta \neq \zeta_0$, the polynomial $g+\zeta h$ does not agree with $u + \zeta v$ at $\alpha_i$. Furthermore, for $\zeta = \zeta_0$, the polynomial $u+\zeta_0 v$ agrees with the degree $k$ polynomial $g + \zeta_0 h$ at $\alpha_j$ for $j \in T \cup \{i\}$, and since the size of $T$ is $k$, it is the unique such polynomial. 
\end{proof}

\noindent
We want to show that such $\zeta_0$'s (where $S_{\zeta_0} \neq S$) are quite rare, and we will do that by identifying such $\zeta_0$ through different means. This is where the setup of the interpolant system and the decoding algorithm actually enters the picture. Essentially, we can show that every $(\zeta, f) \in \F_q^{k+1}$ such that $\abs{\agr(u + \zeta v, f)} \geq A$ can be covered by a \emph{small} number of algebraic varieties whose dimension is at most $d+1$ and degree at most $n^{\poly(d)}$. This is a very general method that is applicable in such templates ``algebraic decoders'' such as \cite{WelchB1986,GemmellS1992,Sudan1997,GuruswamiS1999,BrakensiekCPZZ2026}. 

\begin{lemma}[informal]
  \label{lem:cover-by-few-small-varieties-informal}
  There is a set $B \subset \F_q$ of not-too-larger size, and irreducible varieties $\set{V_m}_{m \in [M]}$ (for $M$ not too large) such that $\dim V_m \leq d + 2$ and $\deg V_m \leq \poly(n)$ such that
  \[
  \setdef{(\zeta, f)}{\abs{\agr(u + \zeta v, f)}\geq A} \subseteq (B \times \F_q^k) \union  \bigcup_{m \in M} V_m.
  \]
\end{lemma}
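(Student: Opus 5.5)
The plan is to run the interpolation step of \cref{alg:rs-decoder} \emph{over the function field} $\F_q(Z)$ with received word $u + Zv$, and then to read the covering off Kopparty's differential-equation solver. Concretely, we look for a nonzero $Q(Z, X, Y_0, \ldots, Y_d)$ whose coefficients are polynomials in $Z$ of degree at most $D_Z$. We impose the same constraints as before, with $y_i$ replaced by $u(\alpha_i) + Z v(\alpha_i)$.

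Each constraint is now a polynomial identity in $Z$ of degree at most $D_Z + D'$, where $D'$ bounds the $Y_0$-degree of $Q$, which is at most $m(1+\epsilon)$. So it contributes roughly $D_Z + D'$ linear constraints over $\F_q$, while each monomial contributes $D_Z + 1$ unknowns. Choosing $D_Z = \poly(d)\cdot D'$ makes the ratio $(D_Z + D')/(D_Z+1)$ as close to $1$ as needed. The count of \cref{alg:rs-decoder} then still gives more unknowns than constraints, so a nonzero $Q$ exists. We may divide out any pure power of a polynomial in $Z$ so that $Q$, viewed as a polynomial in $Z$, is not identically zero on any fibre outside a finite set.

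Define $B$ to be the set of roots of the leading nonzero $Z$-coefficient content of $Q$, together with the roots of a suitable discriminant-type polynomial needed later. Then $|B| \le \deg_Z Q = \poly(n)$.

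For $\zeta \notin B$, the specialization $Q_\zeta = Q(\zeta, \cdot)$ is nonzero. The argument of the previous section, with $y_i = (u+\zeta v)(\alpha_i)$, shows that every $f$ with $\abs{\agr(u+\zeta v, f)} \ge A$ satisfies $Q_\zeta(X, f, f^{(1)}, \ldots, f^{(d)}) = 0$.

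It remains to describe the solution set of this differential equation algebraically in $(\zeta, f)$. We follow the proof of \cref{thm:kopparty-algorithm}. After a generic shift, and after removing the $\zeta$'s in $B$ where the relevant leading coefficient vanishes, the solver fixes the Taylor coefficients $f_0, \ldots, f_{d}$ of $f$ at a point $\beta$ freely, that is $q^{O(d)}$ choices. It then determines each higher coefficient $f_j$, $j > d$, by a linear equation whose coefficient is a nonvanishing polynomial in $(\zeta, f_0, \ldots, f_{j-1})$, the separant. Since $p > k$, the division by $j!$-type factors is legitimate.

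Rather than enumerating, keep $(\zeta, f_0, \ldots, f_d)$ as free coordinates. The set
\[
V = \setdef{(\zeta, f)}{\zeta\notin B,\ Q(\zeta, X, f, f^{(1)}, \ldots, f^{(d)}) = 0}
\]
is then contained in the Zariski closure of the graph of a rational map from $\F_q^{d+2}$ to $\F_q^{k+1}$. Its irreducible components $V_1, \ldots, V_M$ therefore have dimension at most $d+2$.

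The degree bound comes from Bezout: $V$ is cut out by the $X$-coefficients of $Q(\zeta, X, f, \ldots)$, which are polynomials of degree $\le \poly(n)$ in the coordinates $(\zeta, f)$. Their number is $\poly(n)$, since the $(1,k,\ldots,k)$-degree is $O(mA)$. The number of components and their total degree are then bounded by $\deg^{\#\text{eqs}}$, or more carefully, by the refined Bezout inequality, by the product of degrees restricted to the $d+2$ relevant directions. This gives $M \cdot \max\deg V_m \le n^{\poly(d)}$.

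The main obstacle is the dimension bound: turning Kopparty's solver into a statement that the variety has dimension $\le d+2$. The difficulty is that the separant may vanish identically on some component. The approach I would try first is to replace $Q$ by an irreducible factor, or by a minimal-order differential polynomial in the differential ideal, so that the separant is nonzero modulo $Q$. On the locus where the separant vanishes, one applies induction on the order and degree, which yields finitely many further components of no larger dimension. The extra bad $\zeta$'s from this process are added to $B$.
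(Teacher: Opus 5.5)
Your interpolation step is sound. Bounding the $Z$-degree of the coefficients directly and comparing unknowns with constraints over $\F_q$ even gives $\deg_Z Q = O(m(1+\epsilon))$, better than the Cramer's-rule bound in the paper. Your plan to recurse when the separant vanishes along a solution is also close in spirit to the paper's device. For each $(\zeta,f)$ it takes a minimal-degree $P\in\partial_Y^{=*}(Q_Z)$ with $P(\zeta,X,f,\ldots,f^{(d)})=0$. Then $\partial_{Y_j}P$ is nonzero along $f$, hence nonzero at some $\gamma$ in a fixed set $T$ with $|T|\ge mA$. The bad set $B$ is just the $\zeta$'s at which $\operatorname{TopVar}$ of some member of the finite family $\partial_Y^{=*}(Q_Z)$ changes, so $|B|\le|\partial_Y^{=*}(Q_Z)|\cdot\deg_Z Q_Z$; your ``discriminant-type'' additions to $B$ are left unbounded.

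The genuine gap is the degree bound, and with it the bound on $M$. Bezout applied to the defining equations cannot give $n^{\poly(d)}$. The ambient space is $\F^{k+1}$, and the equations are the roughly $mA$ coefficients (in $X$) of $Q(\zeta,X,f,\ldots,f^{(d)})$, each of degree $D\ge 2$ in $(\zeta,f)$. Bezout-type inequalities, refined or not, only bound the total degree of the components of codimension $c$ by $D^{c}$, and here $c\ge k-d-1$. There is no inequality ``restricted to the $d+2$ relevant directions'', because small dimension does not force small degree. The curve $\{(x,x^{D},x^{D^2},\ldots,x^{D^k})\}$ is one-dimensional and cut out by the degree-$D$ equations $y_{i+1}=y_i^{D}$, yet has degree $D^{k}$. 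This is exactly the shape produced by naively iterating Kopparty's solver, where each new Taylor coefficient comes from substituting the earlier ones into a fixed polynomial.

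The missing idea is to read the degree off the parametrization rather than off the equations. That requires proving that the reconstruction map $(\zeta,a_0,\ldots,a_d)\mapsto(\zeta,f)$ has coordinates that are rational functions of degree only \emph{polynomial} in $k$. The paper does this in \cref{sec:degree-bounds-kopparty}. It differentiates along solutions with the derivation $\mathcal{D}=\partial_X+\sum_{i<d}(i+1)Y_{i+1}\partial_{Y_i}+(d+1)\Gamma_P\partial_{Y_d}$, in which $f^{(d+1)}$ has already been eliminated. The identity $\mathcal{D}(h/(P')^{e})=g/(P')^{e+2}$ with $\deg g\le\deg h+2\deg P'$ then gives $\Phi_{P,d+t}=N_{P,d+t}/(P')^{e}$ with $\deg N_{P,d+t}\le 2t\deg P$ and $e\le 2t-1$. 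Now \cref{fact:image-of-rational-map} bounds the degree of each $V_{P,\gamma}=\overline{\operatorname{Image}(\Psi_{P,\gamma})}$ by $2^{\poly(d)}\cdot n^{O(d)}$. The count $M$ is bounded by the explicit indexing $M\le|\partial_Y^{=*}(Q_Z)|\cdot|T|$, not by counting components of a variety. Without such a degree estimate your argument yields at best the dimension bound, but neither $\deg V_m\le\poly(n)$ nor a polynomial bound on $M$.
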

More precisely, the decoding procedure can be used to show that either $\zeta$ belongs to an `obviously-bad' set $B$, or $(\zeta, f)$ is in the image of some rational function $\Psi_m: \F_q^{d+2} \rightarrow \F_q^{k+1}$ of not-too-large degree (and $V_m$ is just the smallest algebraic variety containing the image, and it is irreducible). 

The following is a standard fact from algebraic geometry.

\begin{fact} [{\cite[Theorem 2.2]{Dvir2012}}]
  Let $V$ be an irreducible variety of dimension $d$ and degree $D$. If $H$ is a hyperplane such that $\emptyset \neq H \cap V \neq V$, then if we write $H \cap V$ as a union of irreducible components $V_1 \union \cdots \union V_t$, then 
  \begin{itemize}\itemsep 0pt
    \item $\dim V_i = d - 1$ for all $i$,
    \item $\sum \deg V_i \leq D$. In particular, the number of components of $H \cap V$ is at most $D$. 
  \end{itemize}
\end{fact}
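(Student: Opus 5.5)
The plan is to prove both parts over the algebraic closure $\overline{\F}_q$. Dimension and degree of the varieties in question do not change under this base extension, so nothing is lost. Write $H = Z(\ell)$ for a nonconstant affine-linear form $\ell$. Let $\mathfrak p = I(V)$, which is prime since $V$ is irreducible, and let $R = \overline{\F}_q[x_1,\dots,x_n]/\mathfrak p$ be the coordinate ring of $V$. This is a finitely generated domain of Krull dimension $d$.

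\emph{Dimension.} Since $H \cap V \neq V$, we have $\ell \notin \mathfrak p$, so $\bar\ell$ is a nonzero element of the domain $R$. Since $H\cap V \neq \emptyset$, it is also a non-unit. The irreducible components $V_1,\dots,V_t$ of $H\cap V$ correspond to the minimal primes of $R$ over $(\bar\ell)$. By Krull's principal ideal theorem, each such prime has height exactly $1$ (it cannot be $0$, because $\bar\ell\neq 0$ in a domain). For finitely generated domains over a field, $\dim R/\mathfrak q = \dim R - \operatorname{ht}\mathfrak q$, since these rings are catenary and all maximal chains have length $d$. Hence $\dim V_i = d-1$ for every $i$.

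\emph{Degree.} I will use the definition of degree as the number of intersection points with a generic linear subspace of complementary dimension. The idea is to slice down to a curve and then apply a one-dimensional Bézout bound.

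\begin{enumerate}
\item Choose a generic affine subspace $L'$ of codimension $d-1$. By genericity (Bertini-type reasoning together with the definition of degree), $C := V\cap L'$ is a curve, meaning all its components have dimension $1$, of total degree $\le D$.
\item For each $i$, the set $V_i\cap L'$ consists of exactly $\deg V_i$ points. These point sets are pairwise disjoint, because $L'$ generically avoids the lower-dimensional sets $V_i\cap V_j$.
\item All these points lie in $C\cap H$. Moreover, no component of $C$ lies inside $H$: such a component would be a $1$-dimensional subset of $(V\cap H)\cap L'$, which is $0$-dimensional for generic $L'$.
\item Hence $\ell$ restricts to a nonzero function on each component $C_j$ of $C$. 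A linear form vanishes at most $\deg C_j$ times on $C_j$ (this is the one-dimensional version of the degree definition).
\item Summing over components gives $\abs{C\cap H}\le \sum_j \deg C_j \le D$. Therefore $\sum_i \deg V_i \le D$, and in particular $t\le D$.
\end{enumerate}

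The main obstacle is making the genericity in steps 1--3 rigorous. Specifically, one has to show that a single choice of $L'$ simultaneously achieves all of the following: it cuts $V$ in a curve of degree at most $D$, cuts each $V_i$ in exactly $\deg V_i$ points, avoids the pairwise intersections $V_i\cap V_j$, and meets $V\cap H$ properly.

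My approach is to express each of these requirements as a nonempty Zariski-open condition on the parameter space of codimension-$(d-1)$ affine subspaces, and then take the intersection of these open sets. Since the parameter space is irreducible, a finite intersection of nonempty open subsets is still nonempty. Working over the infinite field $\overline{\F}_q$ ensures that this open set actually has a point.
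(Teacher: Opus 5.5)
The paper does not prove this fact; it only cites \cite[Theorem 2.2]{Dvir2012}. So your argument has to stand on its own, and it has a genuine gap at step 4.

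\textbf{What works.} The dimension part (Krull's principal ideal theorem plus the dimension formula for affine domains) is fine. Steps 1--3 are also a correct reduction. For generic $L'$ they give $\sum_i \deg V_i = \abs{C \cap H}$, where $C = V \cap L'$ is a pure curve of degree at most $D$, none of whose components lies in $H$.

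\textbf{The gap.} Under the definition you adopted, $\deg C_j$ is the number of points of $C_j \cap H'$ for a \emph{generic} hyperplane $H'$. That says nothing about one particular hyperplane. Here $H$ is far from generic: it contains the whole $(d-1)$-dimensional set $V \cap H$, and that is exactly why its section contains the points you are counting. The claim that \emph{every} hyperplane not containing an irreducible curve meets it in at most $\deg C_j$ points is not ``the one-dimensional version of the degree definition.'' It is precisely the case $d=1$ of the fact you are proving. So you reduce dimension $d$ to dimension $1$ and then assume the dimension-$1$ case. All the content lies there: a special hyperplane section cannot have more points than a generic one, which is a B\'ezout-type or conservation-of-number statement. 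The genericity bookkeeping you flagged as the main obstacle is routine by comparison.

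\textbf{How to repair it.} You need a real argument for curves. Two options:
\begin{itemize}
\item \emph{Hilbert polynomials.} Pass to the projective closure $\overline{C_j} \subseteq \mathbb{P}^n$, with homogeneous prime ideal $I = I(\overline{C_j}) \subseteq S = \overline{\F}[x_0,\ldots,x_n]$. The homogenization $\ell^h$ of $\ell$ is not in $I$, since $C_j \not\subseteq H$, so it is a nonzerodivisor on the domain $S/I$. Hence $S/(I + (\ell^h))$ has constant Hilbert polynomial $\deg C_j$. Thus $\overline{C_j} \cap \overline{H}$ is a zero-dimensional scheme of length $\deg C_j$, and so has at most $\deg C_j$ points. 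This route uses the standard equivalence between Hilbert-polynomial degree and the generic-section count.
\item \emph{Projection, staying with your definition.} Suppose $C_j \cap H$ contained $\delta+1$ distinct points, where $\delta = \deg C_j$. Let $\ell_0$ be the linear part of $\ell$, let $\mu$ be a generic linear form, and project by $\pi = (\ell_0,\mu)\colon \mathbb{A}^n \to \mathbb{A}^2$.
  \begin{itemize}
  \item $H$ is the preimage of a line that does not contain the irreducible plane curve $Z(g) = \overline{\pi(C_j)}$.
  \item The $\delta+1$ points have distinct images on that line, so the restriction of $g$ to the line is a nonzero univariate polynomial with at least $\delta+1$ roots. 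Hence $\deg g \geq \delta+1$.
  \item On the other hand, a generic line meets $Z(g)$ in $\deg g$ points, all lying in $\pi(C_j)$. The map $(\pi,\text{line}) \mapsto \pi^{-1}(\text{line})$ is dominant onto the space of hyperplanes, so the preimage of a generic line under a generic such $\pi$ is a generic hyperplane. The lifts of those $\deg g$ points therefore lie in a generic hyperplane section of $C_j$, which has $\delta$ points. Hence $\deg g \le \delta$, a contradiction.
  \end{itemize}
\end{itemize}
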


We are now ready to describe the \emph{pruning process} (which is inspired by the pruning procedure of Kopparty, Ron-Zewi, Saraf and Wootters~\cite{KoppartyRSW2023} and Tamo~\cite{Tamo2024}) to bound the number of $\zeta_0$ such that $S_{\zeta_0} \neq S$. We may assume that $\zeta_0 \notin B$ as $B$ is not-too-large. 

\paragraph{The pruning process:} Note that $\bigcap_{i \in S_{\zeta_0}} H_i = \set{(\zeta_0, f_0)}$. We build a sequence of irreducible varieties $C_0 \supsetneq  C_1 \supsetneq \cdots \supsetneq C_t = \set{(\zeta_0, f_0)}$ as follows. 
\begin{itemize}
\item Let $C_0$ be the one of the $V_m$'s that contain the point $(\zeta_0, f_0)$. 
\item For each $i > 0$, if $\dim C_i = 0$, we stop. Else, find choose $j_i \in S_{\zeta_0}$ such that $\emptyset \neq C_i \cap H_{j_i} \neq C_i$. Such a hyperplane must exist since $\cap_{i \in S_{\zeta_0}} H_i = \set{(\zeta_0, f_0 )}$ and we always maintain that $(\zeta_0, f_0) \in C_i$. Among the irreducible components of $C_i \cap H_{j_i}$, let $C_{i+1}$ be one that contains the point $(\zeta_0, f_0)$. 
\end{itemize}
Since the dimension keeps decreasing at each step, we have that $t \leq d+2$. Furthermore, the above gives a method to index into the point $(\zeta_0, f_0)$ by the tuple $(m_0, j_0, m_1, j_1, \cdots, j_{t-1}, m_{t-1})$ where $m_0 \in [M]$, each $j_i$ is the index of the hyperplane $H_{j_i}$ chosen at step $i$, and $m_i$ is the index of $C_{i+1}$ among irreducible components of $C_{i} \cap H_{j_i}$. The number of such tuples is at most $M \cdot (nD)^{O(d)}$ and thus the number of such $\zeta_0$'s is at most $|B| + M (nD)^{O(d)}$. 

If we set $L > |B| + M (nD)^{O(d)}$, we obtain \cref{thm:Jeronimo} which we restate below. 

\begin{theorem*}[(\cref{thm:Jeronimo} restated)] 
  Let $\rho, \epsilon > 0$, and let $k, n$ be large enough positive integers with $\nicefrac{k}{n} \geq \rho$.
  Consider the Reed-Solomon code $\RS_q[n,k]$ over a finite field $\F_q$ (of characteristic $p$) with $\nicefrac{k}{n} \geq \rho$ and $n \geq k(1+\epsilon)$ and $p \geq k$. 

  Suppose $u,v \colon E \rightarrow \F_q$ with the property that for at least $L$ values $\zeta \in \F_q$ there exists some degree $< k$ polynomial $f_\zeta$ such that $\abs{\agr(u + \zeta v, f_\zeta)} \geq A := k(1 + \epsilon)$. Then, there are polynomial $g,h$ of degree less than $k$ with correlated agreement of at least $A$. That is, there is an $S \subseteq [n]$ of size at least $A$ such that 
  \[
  \abs{\setdef{i \in [n]}{u(\alpha_i) = g(\alpha_i) \text{ and }v(\alpha_i) = h(\alpha_i)}} \geq k(1 + \epsilon)\, . 
  \]
\end{theorem*}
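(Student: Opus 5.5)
We argue by contrapositive via the pruning-based counting already sketched above, with \cref{lem:cover-by-few-small-varieties-informal} as the main missing ingredient. Call $\zeta_0$ \emph{good} if some $f_0$ with $\abs{\agr(u+\zeta_0 v, f_0)} \geq A$ exists, and \emph{rigid} if for such a pair the construction above (choose $T \subseteq S_{\zeta_0}$ of size $k$, interpolate $g,h$ on $T$) gives $S = S_{\zeta_0}$. A rigid $\zeta_0$ immediately yields $g,h$ with correlated agreement on $S_{\zeta_0}$, which has size at least $A$, and we are done. So it suffices to show that the number of non-rigid good $\zeta_0$ is strictly less than $L$. For each non-rigid $\zeta_0$, the earlier lemma gives $\bigcap_{j \in T\cup\{i\}} H_j = \set{(\zeta_0,f_0)}$, and hence $\bigcap_{j\in S_{\zeta_0}} H_j = \set{(\zeta_0,f_0)}$. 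This is exactly what the pruning process needs in order to terminate at a single point.

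The main work, and the main obstacle, is making \cref{lem:cover-by-few-small-varieties-informal} precise. I would run the interpolation of \cref{alg:rs-decoder} \emph{over the function field} $\F_q(Z)$ on the received word $u + Zv$. This produces a nonzero $Q(Z; X, Y_0,\ldots,Y_d)$, which we clear of denominators so that it is polynomial in $Z$ of degree $\poly(n)$, since the linear system has $n^{O_{\epsilon,\rho}(1)}$ size and entries linear in $Z$. Let $B$ be the set of $\zeta$ at which the specialization $Q(\zeta;\cdot)$ vanishes identically, or at which a leading coefficient used by Kopparty's solver vanishes. Then $\abs{B} \leq \deg_Z Q = \poly(n)$. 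For $\zeta \notin B$, the specialization is a valid nonzero interpolant for $u+\zeta v$, so every $f$ with $\geq A$ agreements satisfies $Q(\zeta; X, f, f^{(1)},\ldots,f^{(d)}) = 0$.

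Next I would open up \cref{thm:kopparty-algorithm}. Kopparty's solver shows that the solution set of the differential equation is parametrized by the first $O(d)$ Taylor coefficients of $f$ at a suitable base point; the remaining coefficients are determined recursively by solving linear equations whose coefficients are polynomials in $\zeta$ and the free parameters. Following this recursion symbolically, with $\zeta$ as an extra free variable, gives finitely many rational maps $\Psi_m \colon \F_q^{d+2} \to \F_q^{k+1}$ with $(\zeta, s_1,\ldots,s_{d+1}) \mapsto (\zeta, f)$. The case split comes from which leading terms vanish. The number of maps is $M \leq n^{O(d)}$, each has degree $\poly(n)$, and together their images cover all good pairs $(\zeta,f)$ with $\zeta\notin B$. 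Taking $V_m$ to be the Zariski closure of the image of $\Psi_m$ gives an irreducible variety with $\dim V_m \leq d+2$, and by Bézout-type bounds on images of rational maps, $\deg V_m \leq \poly(n)^{O(d)}$. The delicate point is tracking the case analysis in Kopparty's recursion uniformly in $\zeta$, so that the number of branches stays independent of $q$.

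With the covering in hand, the counting is as in the text. For each non-rigid good $\zeta_0 \notin B$, pick $C_0 = V_m \ni (\zeta_0, f_0)$ and iteratively intersect with hyperplanes $H_j$, $j \in S_{\zeta_0}$, that cut the current component properly. By Dvir's fact, the dimension drops by one each step and each intersection has at most $\deg V_m$ components, each of degree at most $\deg V_m$. So after at most $d+2$ steps we reach $\set{(\zeta_0,f_0)}$, and the sequence of choices $(m_0, j_0, m_1, \ldots)$ determines $\zeta_0$. This bounds the number of non-rigid good $\zeta_0$ by $\abs{B} + M (nD)^{O(d)} = n^{C(\epsilon,\rho)}$, using that $d$ depends only on $\epsilon,\rho$. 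Choosing $L$ larger than this bound, some good $\zeta$ is rigid, which completes the proof.
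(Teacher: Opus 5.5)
Your architecture matches the paper's: reduce to counting non-rigid $\zeta_0$, interpolate over $\F_q(Z)$, parametrize solutions of the differential equation by rational maps in $(\zeta,a_0,\ldots,a_d)$, take Zariski closures, and prune with the hyperplanes $H_j$ using Dvir's fact. The reduction and the final count are fine. The gap is the covering step. This is the actual content of the paper's proof of this theorem, and you explicitly leave it open (``the delicate point is \ldots\ so that the number of branches stays independent of $q$'').

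As written, the base point is only ``suitable'' and the branches come from ``which leading terms vanish''. If the base point may range over $\F_q$, you get $\Omega(q)$ maps and the bound is vacuous. The missing idea is how the paper indexes the maps.
\begin{itemize}
\item For $\zeta\notin B$, let $P$ be a minimal-degree element of $\partial_Y^{=*}(Q_Z)$ with $P(\zeta,X,f,\ldots,f^{(d)})\equiv 0$.
\item With $j=\TopVar(P)$, the polynomial $\partial_{Y_j}P$ is a nonzero lower-degree element of the same set. By minimality, $\partial_{Y_j}P(\zeta,X,f,\ldots,f^{(d)})\not\equiv 0$.
\item $B$ is the set of $\zeta$ at which some element of $\partial_Y^{=*}(Q_Z)$ changes its $\TopVar$ under $Z\mapsto\zeta$. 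This is what guarantees the minimal $P$ genuinely involves some $Y_j$, so Kopparty's recursion applies to $P_\zeta$. It also gives $|B|\le|\partial_Y^{=*}(Q_Z)|\cdot\deg_Z Q_Z$, not $\deg_Z Q_Z$.
\item The univariate polynomial $\partial_{Y_j}P(\zeta,X,f,\ldots,f^{(d)})$ has degree $<mA$ by the weighted-degree bound. So it is nonzero at some $\gamma$ in any \emph{fixed} set $T$ of $mA$ points.
\end{itemize}
The maps are thus indexed by $(P,\gamma)\in\partial_Y^{=*}(Q_Z)\times T$. This gives $M\le 2^{\poly(d)}\cdot mA$, independent of $q$.

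Second, you assert without argument that each $\Psi_m$ has degree $\poly(n)$. In the recursion each new Taylor coefficient is expressed through earlier ones. Substituting composes rational expressions, so a priori the degree could multiply at each of the $k-d$ steps. The paper needs Appendix~C here. The derivation operator $\mathcal{D}$ satisfies $\mathcal{D}(h/(P')^e)=g/(P')^{e+2}$ with $\deg g\le\deg h+2\deg P'$. Hence $\Phi_{P,i}=N_{P,i}/(\partial_{Y_j}P)^{O(i)}$ with $\deg N_{P,i}\le 2(i-j)\deg P$. This linear growth in $i$ is what yields $\deg V_m\le n^{O(d)}$ via the image-of-rational-map bound. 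With these two ingredients supplied, your pruning count goes through as you describe.
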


Thus, the task boils down to systematically going over the list-decoding setup and Kopparty's differential equation solver to get a bound on $|B|$, $D$ and $M$.

\begin{remark}
The above pruning already provides a bound for the number of codewords that have a $k(1 + \epsilon)$ agreement with any received word by a function that depends only on $n, k, \epsilon$, and independent of $p$. The same pruning process allows us to bound the number of exceptional $\zeta$'s in the proximity gap setting. 
\end{remark}

\paragraph{Recalling parameter settings: } Recall from the earlier proof that we will be choosing $m = d^4$ for $d = \inparen{\frac{16}{\epsilon^2 \cdot \text{rate}}}^{\tfrac{2+\epsilon}{\epsilon}}$. Thus, both $m$ and $d$ are constants independent of $n$.  

\subsection{Setting the interpolant system of equations over $\F(Z)$}

\begin{lemma}[Computing an interpolant over $\F(Z)$]
  \label{lem:interpolant-over-Fz}
There is a nonzero polynomial $Q_Z(X, Y_0,\ldots, Y_d) \in \F[Z, X, Y_0,\ldots, Y_d]$ with $\deg_Z(Q_Z) \leq \Delta = 2^{O(\poly(d))} \cdot \poly(n)$ such that for any $\zeta$ and polynomial $f(X)$  such that $\abs{\agr(u + \zeta v, f)} \geq A$, we have
\[
Q_\zeta(X, f, f^{(1)}, \ldots, f^{(d)}) \equiv 0. 
\]
\end{lemma}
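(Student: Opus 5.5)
The plan is to rerun the interpolation step of \cref{alg:rs-decoder} with the received word $\vecy = u + Zv$, where $Z$ is a formal indeterminate, and to solve the resulting linear system over the field $\F(Z)$ instead of $\F_q$. With the same parameters $d$, $m = d^4$ and $W$, the same monomial space $\mathcal{M}$ (with $(1,k,\ldots,k)$-degree less than $mk(1+\epsilon)$ and $\vecb \in \Lambda(W)$), and the same constraint set $\mathcal{C}$, each constraint $q_{i,a,r,\vecc} = 0$ is a homogeneous linear equation in the unknown coefficients of $Q$. Its coefficients are obtained by expanding
\[
Q(\alpha_i + T,\; u(\alpha_i) + Z v(\alpha_i) + TY_1 - \cdots + (-1)^{d+1}T^dY_d + E,\; Y_1,\ldots,Y_d).
\]
They are therefore polynomials in $Z$ of degree at most $\deg_{Y_0}(Q) < m(1+\epsilon)$, since $Z$ enters only through the substitution for $Y_0$. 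The counting argument of the previous section is independent of the field, so $|\mathcal{M}| > |\mathcal{C}|$ and the system has a nonzero solution over $\F(Z)$.

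The next step is to choose that solution carefully so that it is polynomial in $Z$ with controlled degree. Pick a maximal nonsingular minor of the $|\mathcal{C}| \times |\mathcal{M}|$ constraint matrix $M(Z)$, of size $s \le |\mathcal{C}|$, together with one extra column. Cramer's rule then gives a kernel vector whose entries are $s \times s$ minors of $M(Z)$, and at least one entry is nonzero because the chosen minor is nonsingular. Each entry is a polynomial in $Z$ of degree at most $s \cdot m(1+\epsilon) \le |\mathcal{C}| \cdot O(m)$. Since $|\mathcal{C}| \le n \cdot m^2 \cdot |\Lambda(W+m)|$ and $|\Lambda(W+m)| \le (W+m+d^2)^d/d!^2 = 2^{O(\poly(d))}$ by \cref{lem:relating-discrete-continuous-simplex}, we get $\Delta = 2^{O(\poly(d))} \cdot n$. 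This fits the claimed bound $2^{O(\poly(d))}\cdot\poly(n)$. The output is $Q_Z \in \F[Z,X,Y_0,\ldots,Y_d]$, nonzero, with $\deg_Z Q_Z \le \Delta$.

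It remains to check the vanishing property for each fixed $\zeta$. Specializing $Z = \zeta$ commutes with the substitution, so $Q_\zeta$ satisfies every constraint in $\mathcal{C}$ for the received word $u + \zeta v$, and its $(1,k,\ldots,k)$-degree is still below $mA$. The argument of the preceding section used only these two facts, not $Q_\zeta \ne 0$. That argument runs as follows: the reverse Taylor expansion gives $R_f(\alpha_i + T) \equiv 0 \bmod T^m$ at each agreement, there are at least $A$ such agreements, and $\deg R_f < mA$. Hence $Q_\zeta(X, f, f^{(1)}, \ldots, f^{(d)}) \equiv 0$.

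The main subtlety I expect is that $Q_\zeta$ may vanish identically for some $\zeta$, namely the roots of all the $Z$-coefficients of $Q_Z$. The conclusion then holds trivially, so the lemma as stated is still fine. However, these values are exactly the ones that must go into the bad set $B$ of \cref{lem:cover-by-few-small-varieties-informal}, and there are at most $\Delta$ of them. I will state this explicitly so that $|B| \le \Delta$ is available later.
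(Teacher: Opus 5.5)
Your proposal is correct and is essentially the paper's proof: you set up the same interpolation system for the received word $u+Zv$ over $\F(Z)$, extract a kernel vector that is polynomial in $Z$ via Cramer's rule, and specialize $Z=\zeta$, noting that the vanishing argument never needs $Q_\zeta\neq 0$. Your refinements are sound: because $Z$ enters only through the $Y_0$-substitution, the matrix entries have $Z$-degree below $m(1+\epsilon)$ rather than the paper's $mA$, which sharpens the paper's $\Delta = 2^{\poly(d)}\cdot n^2$ to $2^{O(\poly(d))}\cdot n$. Your handling of the $\zeta$'s with $Q_\zeta\equiv 0$ corresponds to the paper's simpler step of taking $Q_Z$ primitive in $Z$, which rules such $\zeta$ out entirely.
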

\begin{proof}
Let us consider the received word as $u + Z v \in (\F[Z])^n$ with $y_i = u(\alpha_i) + Z v(\alpha_i)$ for $i \in [n]$. We follow the same procedure as earlier to now find a polynomial $Q_Z(X, Y_0,\ldots, Y_d) = \sum_{m \in \mathcal{M}} q_m \cdot m$ where $\mathcal{M}$ is the support of monomials allowed for $Q_Z$, and $q_m \in \F[Z]$ are the indeterminates. As earlier, we add constraints expecting $q_{i,a,r,\vecc}$ to be zero for $(i,a,r,\vecc) \in \mathcal{C}$. Since $Q_Z$ has $(1,k,\cdots, k)$-degree at most $mA$, each $q_{i,a,r,\vecc}$ is a polynomial in $Z$ of degree at most $mA = 2^{O(d)} \cdot n$. 

Therefore, we are looking for a kernel of an $\abs{\mathcal{C}}\times \abs{\mathcal{M}}$ matrix, with $\abs{\mathcal{C}} < \abs{\mathcal{M}} = 2^{\poly(d)} \cdot n$ and each of whose entries is a polynomial of degree at most $mA \leq \poly(d) \cdot n$. By Cramer's rule, we can obtain a nonzero polynomial $Q_Z(X, Y_0,\ldots, Y_d) \in \F[Z, X, Y_0,\ldots, Y_d]$ that is solution to the system of constraints with $\deg_Z Q_Z \leq \Delta := 2^{\poly(d)} \cdot n^2$. By a direct application of Cramer's rule the solution $Q$ might be a rational function in $Z$. By clearing the denominator, if needed, we can assume without loss of generality that the resulting $Q$ is a polynomial (and not a rational function) in $Z$. We may also assume without loss of generality that $Q_Z$ is \emph{primitive in $Z$} i.e., the coefficients of $Q_Z$ (as polynomials in $Z$) have no non-trivial common factor. 

Now, $Q_{\zeta}(X, Y_0,\ldots, Y_d)$ is a polynomial of $(1,k,\ldots, k)$-degree less than $mA$. Thus, like earlier, $R_f(X) := Q_{\zeta}(X, f, f^{(1)}, \ldots, f^{(d)})$ has $\alpha_i$ as a root of multiplicity $m$ whenever $f(\alpha_i) = y_i =(u + \zeta v)(\alpha_i)$ and this allows us to conclude that $R_f(X) = Q_\zeta(X, f, f^{(1)}, \ldots, f^{(d)}) \equiv 0$. 
\end{proof}

Hence, the interpolant $Q(Z, X, Y_0,\ldots, Y_d)$ is a polynomial with $\deg_Z Q \leq 2^{\poly(d)} \cdot n^2$, $\deg_X Q < mA \leq \poly(d) \cdot n$ and $\deg_{Y_i} Q \leq \tfrac{mA}{k} = m(1 + \epsilon) = \poly(d)$. 

\subsection{Kopparty's differential equation solver}

\newcommand{\TopVar}{\operatorname{TopVar}}

For a polynomial $P(Z, X, Y_0,\ldots, Y_d) \in \F[Z, X, Y_0, \ldots, Y_d]$, let $\TopVar(P)$ refer to the highest index $i \in \set{0,\ldots, d}$ such that $P$ depends non-trivially on $Y_i$. 

The following lemma is implicit in Kopparty's differential equation solver but we state it formally here. The proof of this is deferred to the Appendix (\cref{sec:degree-bounds-kopparty}). 

\begin{lemma}[\cite{Kopparty2015}]
  \label{lem:Kopparty-rational-fn}
  Let $P \in \F[Z, X, Y_0,\ldots, Y_d]$ be a nonzero polynomial with $\TopVar(P) = j \leq d$. Assume that $\operatorname{char}(\F) > \max(k, \deg_{Y_i}(P))$ for all $i$. There are rational functions $\Phi_{P,i}(Z, X, Y_0,\ldots, Y_d)$ defined  for $0 \leq i < k$ as 
  \[
  \Phi_{P,i}(Z, X, Y_0,\ldots, Y_d) = \begin{cases} Y_i & \text{if $0 \leq i \leq d$,}\\
    \frac{N_{P,i}(Z, X, Y_0,\ldots, Y_d)}{(\partial_{Y_j}P)^{2e}} & \text{if $d < i < k$},
  \end{cases}
  \]
with $\deg N_{P,i} \leq 2(i-j) \cdot \deg P$ and $e \leq 2(i - j)$ with the following property:
\begin{quote}
Suppose $\zeta \in \F$ is such that $\TopVar(P) = \TopVar(P|_{Z=\zeta})$, $f(X)$ is a polynomial of degree less than $k$, and $\gamma \in \F$ with $a_i = f^{(i)}(\gamma)$ for $0 < i < k$ such that 
\begin{itemize}\itemsep 0pt
  \item $P(\zeta, X, f, f^{(1)}, \ldots, f^{(d)}) = 0$,
  \item $(\partial_{Y_j} P)(\zeta, \gamma, a_0,\ldots, a_d) \neq 0$.
\end{itemize}
Then, for all $0 \leq i < k$,  
\[
  a_i = \Phi_{P,i}(\zeta, \gamma, a_0,\ldots, a_d). 
\]
Consequently, $f(X) = \sum_{i=0}^{k-1} \Phi_{P,i}(a_0,\ldots, a_d) \cdot (X - \gamma)^i$. 
\end{quote}
\end{lemma}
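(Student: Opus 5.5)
The approach is to mimic Kopparty's solver: repeatedly differentiate the identity $P(\zeta, X, f, f^{(1)},\ldots, f^{(d)}) \equiv 0$ with respect to $X$, and solve each new equation linearly for the next unknown derivative. Set $j = \TopVar(P)$ and $R(X) := P(\zeta, X, f, f^{(1)}, \ldots, f^{(d)})$. Since the characteristic exceeds $k$, ordinary derivatives and Hasse derivatives agree up to invertible factorials. Moreover, $\frac{d}{dX} f^{(\ell)} = (\ell+1) f^{(\ell+1)}$.

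\textbf{Step 1: the first derivative.} By the chain rule,
\[
0 = \frac{dR}{dX} = (\partial_X P) + \sum_{\ell=0}^{j} (\partial_{Y_\ell} P)\cdot (\ell+1) f^{(\ell+1)},
\]
evaluated at $(\zeta, X, f, \ldots, f^{(d)})$. The only occurrence of $f^{(j+1)}$ is in the term $(j+1)(\partial_{Y_j}P) f^{(j+1)}$, and $(j+1)$ is invertible because $p > k$. Evaluating at $X = \gamma$, where $(\partial_{Y_j}P)(\zeta,\gamma,a_0,\ldots,a_d) \neq 0$, gives $a_{j+1}$ as a polynomial in $(\zeta,\gamma,a_0,\ldots,a_j)$ divided by $\partial_{Y_j}P$.

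\textbf{Step 2: induction.} For higher derivatives, I would define polynomials $P_s$ inductively. Let $P_0 = P$, viewed in extended variables $Y_0, \ldots, Y_{k-1}$. Let $P_{s+1} = \mathcal{D} P_s$, where $\mathcal{D}$ is the formal total derivative $\partial_X + \sum_\ell (\ell+1) Y_{\ell+1}\partial_{Y_\ell}$. An easy induction shows
\[
P_s = (j+1)\cdots(j+s)\,(\partial_{Y_j}P)\, Y_{j+s} + G_s(Z, X, Y_0, \ldots, Y_{j+s-1}).
\]
The key point is that $\mathcal{D}$ applied to the leading term $c\,(\partial_{Y_j}P) Y_{j+s-1}$ produces $Y_{j+s}$ only with coefficient $c(j+s)\,\partial_{Y_j}P$. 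All other contributions involve lower-index variables.

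Since $P_s(\zeta, X, f, \ldots, f^{(k-1)}) = R^{(s)}$-type derivatives vanish identically, evaluating at $\gamma$ gives
\[
a_{j+s} = -G_s(\zeta,\gamma,a_0,\ldots,a_{j+s-1}) / \bigl(c_s\,\partial_{Y_j}P\bigr).
\]
Substituting the earlier expressions recursively then expresses each $a_i$ with $i > d$ as a rational function in $(\zeta,\gamma,a_0,\ldots,a_d)$ whose denominator is a power of $\partial_{Y_j}P$. For $j < i \le d$ we simply have $a_i = Y_i$. The hypothesis $\TopVar(P) = \TopVar(P|_{Z=\zeta})$ guarantees that $\partial_{Y_j}P$ does not vanish identically after specialization. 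In any case, the evaluation hypothesis at $\gamma$ is what is actually used.

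\textbf{Step 3: degree and exponent bookkeeping (main obstacle).} The hardest part is proving $\deg N_{P,i} \le 2(i-j)\deg P$ and $e \le 2(i-j)$. Naive substitution compounds the denominators badly, so I would control it by induction on $s = i - j$. The claim is that $a_{j+s} = N_s/(\partial_{Y_j}P)^{2s-1}$ with $\deg N_s \le (2s-1)\deg P$.

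In the inductive step, $G_s$ has degree at most $\deg P$ in the original variables. Each differentiation adds at most one to the degree through the $Y_{\ell+1}$ factor. $G_s$ is affine-linear in the "new" variables of its highest indices, so substituting the previously obtained fractions raises the denominator power by at most two per step. Dividing by the leading coefficient also adds one to the denominator power.

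Multiplying through to the common denominator $(\partial_{Y_j}P)^{2e}$ then yields the stated bounds, possibly after tightening constants. Finally, the Taylor expansion $f(X) = \sum_i f^{(i)}(\gamma)(X-\gamma)^i$ gives the "consequently" clause.
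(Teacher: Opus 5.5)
\paragraph{Gap in Step 3.} Your Steps 1--2 are correct: extended variables $Y_0,\dots,Y_{k-1}$, the formal total derivative, and $P_s=c_s(\partial_{Y_j}P)Y_{j+s}+G_s(Z,X,Y_0,\dots,Y_{j+s-1})$, which is Kopparty's original recursion. The gap is in Step 3, and that is where the content of the lemma lies. Write $P':=\partial_{Y_j}P$.
\begin{itemize}
\item The invariant you invoke is false. $G_s$ is not affine-linear in its highest-index new variables: already $G_2$ contains $(j+1)^2\,Y_{j+1}^2\,\partial_{Y_j}^2P$, and in general $G_s$ has degree up to $s$ in $Y_{j+1}$.
\item Even genuine multilinearity would not give ``at most two per step''. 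A term $Y_{j+1}Y_{j+2}\cdots Y_{j+s-1}$ would give denominator exponent $e_s=1+\sum_{t<s}e_t=2^{s-1}$. This is exactly the compounding the paper warns about.
\item $\mathcal{D}$ does not raise total degree, since $Y_{\ell+1}\partial_{Y_\ell}$ preserves it. So ``each differentiation adds at most one to the degree'' only makes sense for a \emph{weighted} degree, and that weighting is the missing idea.
\end{itemize}

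\paragraph{How to close it along your route.} Give $Y_{j+t}$ weight $t$ for $t\ge 1$, and give $Z,X,Y_0,\dots,Y_j$ weight $0$.
\begin{itemize}
\item Each $Y_{\ell+1}\partial_{Y_\ell}$ raises weight by at most one, and $\partial_X$ does not raise it. Hence $P_s$, and so $G_s$, has weight at most $s$. Every monomial $\prod_{t<s}Y_{j+t}^{m_t}$ of $G_s$ therefore satisfies $\sum t\,m_t\le s$.
\item Induct with $a_{j+t}=N_t/(P')^{2t-1}$. Such a monomial contributes exponent $\sum m_t(2t-1)=2\sum t\,m_t-\sum m_t\le 2s-2$. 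In the case $\sum m_t=1$ you get $\le 2s-3$, because $t\le s-1$.
\item Dividing by $c_sP'$ then gives exponent $2s-1$.
\item Together with $\deg G_s\le\deg P$ and a coefficient-degree count, this gives $\deg N_s\le(2s-1)\deg P$.
\end{itemize}
With this lemma your route works and yields the same bounds as the paper.

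\paragraph{The paper's route.} The paper avoids substitution altogether. It folds the solved expression $\Gamma_P$ for $f^{(d+1)}$ into the derivation
$\mathcal{D}=\partial_X+\sum_{i<d}(i+1)Y_{i+1}\partial_{Y_i}+(d+1)\Gamma_P\partial_{Y_d}$, acting on $\F(Z,X,Y_0,\dots,Y_d)$. Then $\Phi_{P,d+t}=\frac{d!}{(d+t)!}\mathcal{D}^tY_d$. The quotient rule $\mathcal{D}(h/(P')^e)=g/(P')^{e+2}$ gives exponent $2t-1$ by a one-line induction, with no structural lemma about $G_s$ needed.
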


Intuitively, assuming some initial conditions on $\zeta, \gamma$ with respect to $f$, the first $d$ terms of the Taylor expansion of $f$ around the point $\gamma$ is sufficient to reconstruct $f$ entirely, and in a very explicit sense given by the rational functions $\Phi_{P,i}$. 

From the above $\Phi_{P,i}$, let us define a rational map $\Psi_{P, \gamma}$ as follows
\begin{align*}
\Psi_{P, \gamma}\colon (\zeta, a_0,\ldots, a_d) & \mapsto (\zeta, f_0,\ldots, f_{k-1})\\
\text{where } f(X) = \sum_{i=0}^{k-1} f_i x^i & = \sum_{i=0}^{k-1} \Phi_{P,i}(\zeta, \gamma, a_0,\ldots, a_d) \cdot (X - \gamma)^i.
\end{align*}
The above $\Psi$ is defined at all points in $\overline{\F}^{d+2}$ that ensures that the denominator $(\partial_{Y_j}P)(\zeta, \gamma, a_0,\ldots, a_d) \neq 0$. \\

Let us fix a polynomial $Q_Z(X, Y_0,\ldots, Y_d)$ obtained from \cref{lem:interpolant-over-Fz}, and a set $T \subseteq \F_p$ of at least $mA$ elements. Let $\partial_{Y}^{=*}(Q_Z)$ refer to the set of all polynomials that are derivatives of $Q_Z$ with respect to $Y$ variables of any order. Define the set $B$ of ``obviously-bad'' $\zeta$'s as
\[
B = \setdef{\zeta}{\begin{array}{c}\TopVar(P(Z, X, Y_0,\ldots, Y_d)) \neq \TopVar(P(\zeta, X, Y_0, \ldots, Y_d))\\
\text{for some non-constant } P(Z, X, Y_0,\ldots, Y_d) \in \partial_Y^{=*}(Q_Z)\end{array}}.
\]

\begin{claim}
  Fix any set $T \subseteq \F$ of size at least $mA$. 
  Suppose $\zeta \in \F \setminus B$ and $f(X)$ is a polynomial of degree less than $k$ such that we have $Q_Z(X, f, f^{(1)}, \ldots, f^{(d)}) = 0$. Then, $(\zeta, f)$ is in the image of $\Psi_{P,\gamma}$ for some $P \in \partial_Y^{=*}(Q)$ and $\gamma \in T$. 
\end{claim}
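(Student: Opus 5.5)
We will find the right $P$ by descending through $Y$-derivatives of $Q_Z$ until one of them stops vanishing at $(\zeta, f)$, and then pick a good $\gamma \in T$.

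\textbf{Choosing $P$.} Write $F := (X, f, f^{(1)}, \ldots, f^{(d)})$, so the hypothesis reads $Q_\zeta(F) \equiv 0$ (with $Q_Z$ from \cref{lem:interpolant-over-Fz}). Consider all $P \in \partial_Y^{=*}(Q_Z)$ that are non-constant in the $Y$-variables and satisfy $P(\zeta, F) \equiv 0$. This family contains $Q_Z$ itself, since $Q_Z$ is primitive in $Z$ and so cannot be a pure polynomial in $Z, X$ that vanishes at $\zeta$. Choose such a $P$ minimising the total $Y$-degree, and let $j = \TopVar(P)$. Then $\partial_{Y_j}P$ is a nonzero element of $\partial_Y^{=*}(Q_Z)$ of strictly smaller $Y$-degree.

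\textbf{$\partial_{Y_j}P$ does not vanish along $f$.} We need $(\partial_{Y_j}P)(\zeta, F) \not\equiv 0$ as a polynomial in $X$.
\begin{itemize}
\item If $\partial_{Y_j}P$ still depends on some $Y$ variable, this follows from the minimality of $P$.
\item If $\partial_{Y_j}P = c(Z, X)$ depends only on $Z, X$, we must rule out $c(\zeta, X) \equiv 0$. Here we use $\zeta \notin B$: since $\TopVar(P|_{Z=\zeta}) = j$, the coefficient of $Y_j$ in $P$ does not vanish at $Z = \zeta$.
\end{itemize}
This second case is the main obstacle. The definition of $B$ as written only controls $\TopVar$, so one may need to interpret $\TopVar(P|_{Z=\zeta})$ via the leading $Y_j$-coefficient. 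An alternative is to note that $P$ being linear in $Y_j$ with $\zeta$-specialised leading coefficient nonzero forces $c(\zeta, X) \neq 0$. Also, the characteristic hypothesis $p > \deg_{Y_i}$ ensures that differentiation does not kill the $Y_j$-dependence.

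\textbf{Choosing $\gamma$.} The univariate polynomial $r(X) := (\partial_{Y_j}P)(\zeta, F)$ has degree at most the $(1,k,\ldots,k)$-degree of $Q$, which is less than $mA$. Since $r \neq 0$ and $|T| \ge mA$, some $\gamma \in T$ satisfies $r(\gamma) \neq 0$.

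\textbf{Concluding.} Now apply \cref{lem:Kopparty-rational-fn} to $P$ with this $\zeta$ and $\gamma$, and set $a_i = f^{(i)}(\gamma)$. The hypotheses are met:
\begin{itemize}
\item $\TopVar(P) = \TopVar(P|_{Z=\zeta})$, because $\zeta \notin B$;
\item $P(\zeta, X, f, \ldots, f^{(d)}) \equiv 0$, by the choice of $P$;
\item the partial derivative $\partial_{Y_j}P$ is nonzero at $(\zeta, \gamma, a_0, \ldots, a_d)$, by the choice of $\gamma$;
\item the characteristic is large, since $\deg_{Y_i} P \le \deg_{Y_i} Q = \poly(d) < p$ and $p > k$.
\end{itemize}
The lemma then gives $f = \sum_i \Phi_{P,i}(\zeta, \gamma, a_0, \ldots, a_d)(X - \gamma)^i$, so $(\zeta, f) = \Psi_{P,\gamma}(\zeta, a_0, \ldots, a_d)$, as claimed.
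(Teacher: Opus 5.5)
Your proposal is correct and is essentially the paper's proof. Both take a minimal $Y$-derivative $P$ of $Q_Z$ vanishing along $(\zeta,f)$, get $(\partial_{Y_j}P)(\zeta,X,f,\ldots,f^{(d)})\not\equiv 0$ from minimality and $\zeta\notin B$, pick $\gamma\in T$ by degree counting, and invoke \cref{lem:Kopparty-rational-fn}. The case you flag as the ``main obstacle'' is already closed by your own argument: if $\partial_{Y_j}P=c(Z,X)$, then $p>\deg_{Y_j}P$ forces $P=c(Z,X)Y_j+R(Z,X,Y_0,\ldots,Y_{j-1})$, so $\TopVar(P_\zeta)=j$ is exactly the statement $c(\zeta,X)\not\equiv 0$; moreover, restricting to $Y$-nonconstant $P$ (justified by primitivity of $Q_Z$) handles an edge case, a $Y$-free minimal $P$, that the paper's proof glosses over.
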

\begin{proof}
  Let $P(Z, X, Y_0,\ldots, Y_d)$ be the minimal degree non-zero polynomial in $\partial_{Y}^{=*}(Q_Z)$ with the property that $P(\zeta, X, f, f^{(1)}, \ldots, f^{(d)}) = 0$. From the minimality, we must have that $P'(X) = \partial_{Y_j}(P)(\zeta, X, f, f^{(1)}, \ldots, f^{(d)}) \neq 0$ where $j = \TopVar(P) = \TopVar(P_\zeta)$ (since $\zeta \notin B$). Since $P'(X)$ is a polynomial of degree less than $mA = |T|$, there is some $\gamma \in T$ such that $P'(\gamma) \neq 0$. Let $a_i = f^{(i)}(\gamma)$ for all $0 \leq i < k$. By \cref{lem:Kopparty-rational-fn}, we have that $\Psi_{P,\gamma}(\zeta, a_0,\dots, a_j) = (\zeta, f)$. 
\end{proof}

We need the following fact from algebraic geometry, which appears to be a standard consequence of \Bezout's theorem. We state it here without proof, but provide a rough proof-sketch in \cref{sec:alg-geom-primer}. 

\begin{fact}[image of a rational map on an open set (cf. {\cite[Lemma 5.4]{Jeronimo2026}})]
  \label{fact:image-of-rational-map}
  Let $\overline{\F}$ be the algebraic closure of $\F$. Suppose $\Psi(X_1,\ldots, X_d)$ be a rational map of the form 
\begin{align*}
\Psi & = \inparen{\frac{N_1}{G_1}, \ldots, \frac{N_n}{G_n}}
\end{align*}
where $N_i, G_i \in \overline{\F}[X_1,\ldots, X_d]$ with $\deg N_i, \deg G_i \leq \Delta$, that is well-defined on all points of $\overline{\F}^d$ where $G_1 \cdots G_n$ does not vanish. Then, the Zariski closure of $\operatorname{Image}(\Psi)$ is an algebraic variety $V$ such that 
\begin{itemize}\itemsep 0pt
\item $V$ is irreducible,
\item $\dim V \leq d$,
\item $\deg V \leq (n\Delta)^d$. 
\end{itemize}
\end{fact}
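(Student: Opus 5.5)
The plan is to clear denominators and realize the closure of the image as a projection of a graph. Set $G := G_1\cdots G_n$, which has degree at most $n\Delta$, and work over $\overline{\F}$. Consider the locally closed set
\[
\Gamma := \setdef{(\vecx, \vecz) \in \overline{\F}^{d} \times \overline{\F}^{n}}{G(\vecx) \neq 0,\ G_i(\vecx) z_i = N_i(\vecx) \text{ for all } i}.
\]
The set $\Gamma$ is the graph of $\Psi$ restricted to the open set $U = \{G \neq 0\}$. The projection $\pi(\vecx,\vecz)=\vecz$ maps $\Gamma$ onto $\operatorname{Image}(\Psi)$, so $V = \overline{\pi(\Gamma)}$.

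Irreducibility and the dimension bound are the easy steps. The open set $U \subseteq \overline{\F}^d$ is irreducible, being a nonempty open subset of affine space; we may assume $G \not\equiv 0$, since otherwise the image is empty. The map $\vecx \mapsto (\vecx, \Psi(\vecx))$ is a morphism $U \to \Gamma$ with inverse the projection to $\vecx$, so $\Gamma \cong U$ is irreducible of dimension $d$. The image of an irreducible set under a continuous map is irreducible, and so is its closure. Dimension cannot increase under a dominant morphism $\Gamma \to V$, which gives $\dim V \leq \dim \Gamma = d$.

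The main obstacle is the degree bound. The approach is the standard Bézout-type argument. Let $e = \dim V$. By definition, $\deg V$ is the maximum number of points in $V \cap L$ for a generic affine subspace $L$ of codimension $e$, given by linear equations $\ell_1(\vecz) = \cdots = \ell_e(\vecz) = 0$. Since $\pi(\Gamma)$ is constructible and dense in $V$, it contains a dense open subset of $V$. A generic $L$ meets $V$ in finitely many points, all lying in that open subset. Hence $\abs{V\cap L} \leq \abs{\Psi^{-1}(L)\text{'s image points}}$, and it suffices to bound the number of points in the image of $\Psi$ lying on $L$. Each such point lifts to some $\vecx \in U$ with $\ell_t(\Psi(\vecx)) = 0$. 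Multiplying through by $G$, these become polynomial equations $\tilde\ell_t(\vecx) := G(\vecx)\,\ell_t(N_1/G_1,\ldots,N_n/G_n) = 0$, each of degree at most $n\Delta$ (after also absorbing the $N_i$, a bound of order $n\Delta$ per equation).

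The number of distinct images is at most the number of irreducible components of $\{\tilde\ell_1=\cdots=\tilde\ell_e=0\}\cap U$ that map to points. By the refined Bézout inequality (Heintz's version: the sum of the degrees of all irreducible components of an intersection of hypersurfaces is at most the product of their degrees), the number of components is at most $(n\Delta)^e \leq (n\Delta)^d$. Each component is irreducible and lies in the fiber over $L\cap V$, which is finite, so it maps to a single point. This yields $\deg V \leq (n\Delta)^d$.

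I expect the care to lie in this step. We must argue that generic $L$ avoids the boundary $V \setminus \pi(\Gamma)$, using constructibility (Chevalley). We must also handle positive-dimensional fibers, which is why we count components rather than points. Finally, the precise degree of the $\tilde\ell_t$ must be tracked; if needed, we would replace the per-equation bound by $(n+1)\Delta$ and absorb the constant into the stated bound.
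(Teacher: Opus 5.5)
Your proof is correct and follows the same route as the paper's appendix sketch. You pass to the common denominator $G=G_1\cdots G_n$, pull a generic affine section $L$ of codimension $\dim V$ back to the equations $G\cdot\ell_t(\Psi)=0$, and apply B\'ezout in the form ``the sum of the degrees of the components is at most the product of the degrees.'' Each pulled-back equation has degree at most $n\Delta$, since $N_j\prod_{k\neq j}G_k$ has degree at most $n\Delta$. So your $(n+1)\Delta$ fallback is unnecessary, and it would not give the stated bound anyway.

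Your proof is also slightly more careful than the paper's in two places. You count irreducible components of $\{\tilde\ell_1=\cdots=\tilde\ell_e=0\}\cap U$ rather than points, and you use Chevalley to show that a generic $L$ misses $V\setminus\Psi(U)$. The paper instead reduces to $\dim V=d$ and asserts that $Z=\mathbb{V}(P_1,\ldots,P_d)$ is finite. That assertion fails whenever $G,N_1,\ldots,N_n$ have a positive-dimensional common zero set, because that set lies in $Z$ for every choice of $L$. Counting components sidesteps this problem.
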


\noindent
We now have the formal version of \cref{lem:cover-by-few-small-varieties-informal}

\begin{corollary}[formal version of \cref{lem:cover-by-few-small-varieties-informal}]
  \label{cor:cover-by-few-small-varieties}
  Let $B$ be the set defined above, and $T\subseteq \F$ of at least $mA$ elements. Then, 
  \[
  \setdef{(\zeta, f)}{\abs{\agr(u + \zeta v, f)}\geq A} \subseteq (B \times \F^k) \union  \bigcup_{\substack{P\in \partial_Y^{=*}(Q)\\ \gamma \in T}} V_{P,\gamma}
  \]
  where $V_{P,\gamma} = \overline{\operatorname{Image}(\Psi_{P,\gamma})}$ defined above. Furthermore, each $V_{P,\gamma}$ is a variety of dimension at most $d+2$ and degree at most $2^{\poly(d)} \cdot n^{O(d)}$. 
\end{corollary}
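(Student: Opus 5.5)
The proof will chain together three earlier results: \cref{lem:interpolant-over-Fz}, the claim about $\Psi_{P,\gamma}$ just above, and \cref{fact:image-of-rational-map}.

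\textbf{Containment.} Fix $(\zeta,f)$ with $\abs{\agr(u+\zeta v,f)}\geq A$ and $\zeta\notin B$; if $\zeta\in B$ the point already lies in $B\times\F^k$. By \cref{lem:interpolant-over-Fz} we have $Q_\zeta(X,f,f^{(1)},\ldots,f^{(d)})\equiv 0$. The preceding claim then gives some $P\in\partial_Y^{=*}(Q_Z)$ and some $\gamma\in T$ with $(\zeta,f)=\Psi_{P,\gamma}(\zeta,a_0,\ldots,a_d)$, where $a_i=f^{(i)}(\gamma)$. The claim's proof also shows that $\Psi_{P,\gamma}$ is defined at this point, since $(\partial_{Y_j}P)(\zeta,\gamma,\veca)\neq0$. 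Hence $(\zeta,f)\in\operatorname{Image}(\Psi_{P,\gamma})\subseteq V_{P,\gamma}$. One small point needs checking: the minimal $P$ chosen in that claim must be non-constant, so that $\zeta\notin B$ actually controls $\TopVar$. This holds because a nonzero constant $P$ cannot vanish on $(\zeta,X,f,\ldots)$. If the minimal $P$ has $\TopVar$ equal to $0$, the argument goes through unchanged with $j=0$.

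\textbf{Dimension and degree.} $\Psi_{P,\gamma}$ is a rational map in the $d+2$ variables $(\zeta,a_0,\ldots,a_d)$, with $\gamma$ a fixed constant substituted. Its $k+1$ output coordinates are $\zeta$ together with the coefficients $f_0,\ldots,f_{k-1}$ of $\sum_i\Phi_{P,i}(\zeta,\gamma,\veca)(X-\gamma)^i$. Each $f_\ell$ is an $\F$-linear combination of the $\Phi_{P,i}$, with binomial-times-power-of-$\gamma$ coefficients. Every $\Phi_{P,i}$ with $i>d$ has denominator $(\partial_{Y_j}P)^{2e}$ with $e\leq 2(i-j)\leq 2k$. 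So I will put all coordinates over the common denominator $G=(\partial_{Y_j}P)^{4k}$, which vanishes exactly where $\partial_{Y_j}P$ does, so the domain of definition is unchanged. The numerators then have degree at most $O(k\deg P)$.

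Now $\deg P\leq\deg Q_Z\leq\Delta+mA+(d+1)m(1+\epsilon)$. By \cref{lem:interpolant-over-Fz} this is $2^{\poly(d)}\cdot n^2$, so every numerator and denominator has degree $\Delta'=2^{\poly(d)}\cdot n^{O(1)}$, using $k\leq n$. Applying \cref{fact:image-of-rational-map} over $\overline{\F}$ with $d+2$ input variables and $k+1$ output coordinates gives that $V_{P,\gamma}$ is irreducible, has dimension at most $d+2$, and has degree at most $((k+1)\Delta')^{d+2}=2^{\poly(d)}\cdot n^{O(d)}$.

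\textbf{Main obstacle.} The delicate step is the degree bookkeeping. I must check that the exponent $e$ and the degree of $N_{P,i}$ from \cref{lem:Kopparty-rational-fn} grow only polynomially in $k\leq n$, and that clearing to a common denominator does not inflate the degree beyond $\poly(n)\cdot2^{\poly(d)}$. Once that is settled, the exponent $d+2$ in \cref{fact:image-of-rational-map} gives the $n^{O(d)}$ bound. I also need the domain of $\Psi$ to be the complement of $G$, matching the hypothesis of that fact.
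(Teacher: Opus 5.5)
Your proposal is correct and follows the route the paper intends; the paper gives no separate proof and treats the corollary as immediate. As in the paper, containment comes from \cref{lem:interpolant-over-Fz} together with the preceding claim, and the dimension and degree bounds come from applying \cref{fact:image-of-rational-map} to $\Psi_{P,\gamma}$, viewed as a rational map in the $d+2$ variables $(\zeta,a_0,\ldots,a_d)$. Your explicit bookkeeping fills in the $2^{\poly(d)}\cdot n^{O(d)}$ degree estimate that the paper leaves unstated: you clear to the common denominator $(\partial_{Y_j}P)^{4k}$ using the bounds of \cref{lem:Kopparty-rational-fn}, together with $\deg P\leq \deg Q_Z\leq 2^{\poly(d)}\cdot n^{O(1)}$.
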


\subsection{Bounding the final bad set of $\zeta$'s}

We now just need to plug in the bounds for $|B|$, $M$ and $D$ in \cref{lem:cover-by-few-small-varieties-informal} to compute $L$ for \cref{thm:Jeronimo}. 

Recall 
\[
B = \setdef{\zeta}{\begin{array}{c}\TopVar(P(Z, X, Y_0,\ldots, Y_d)) \neq \TopVar(P(\zeta, X, Y_0, \ldots, Y_d))\\
\text{for some non-constant } P(Z, X, Y_0,\ldots, Y_d) \in \partial_Y^{=*}(Q_Z)\end{array}}.
\]
Since $Q$ is a polynomial of $Z$-degree at most $n^2 \cdot 2^{O(\poly(d))}$, and $Y$-degree at most $\poly(d)$ there are at most $2^{\poly(d)}$ elements in $\partial_{Y}^{=*}(Q)$. Thus, the size of the set $B$ is at most $n^2 \cdot 2^{\poly(d)}$. 

\medskip

From \cref{cor:cover-by-few-small-varieties}, each $(\zeta, f)$ we are interested in with $\zeta \notin B$ are covered by one of the varieties $V_{P,\gamma}$. Thus, we have at most $M = \abs{\partial_{Y}^{=*}(Q_Z)} \cdot |T| = n \cdot 2^{\poly(d)}$ varieties of dimension $O(d)$ and degree at most $D = 2^{\poly(d)} \cdot n^{O(d)}$ each. Thus, if $p \gg n^{O(d^2)} \cdot 2^{\poly(d)}$, then even a small fraction of $\zeta$'s that witness $A$ agreements to $\RS[n,k]$ result in a correlated set of $A$ agreements. This complete the proof of \cref{thm:Jeronimo}.

{\let\thefootnote\relax
\footnotetext{\textcolor{\gitinfonotecolour}{\gitinfonote \easteregg}
}}
{\small 
  \bibliographystyle{prahladhurl}
  \bibliography{RScapacity-bib}
}

\appendix

\section{A brief primer on algebraic geometry}
\label{sec:alg-geom-primer}

We use this section to provide a very basic primer on algebraic geometry. The primary goal is not to provide an exhaustive, formal treatment, but rather to communicate the intuitive meaning of the concepts that would help digest the facts from algebraic geometry used in this write-up. \\

\noindent
Let $\F$ be an algebraically closed field.

\begin{definition}[Algebraic varieties: irreducibility, dimension and degree]
A subset $V$ of $\subseteq \F^n$ is said to be an algebraic variety if there is a finite set of polynomials $f_1,\ldots, f_m \in \F[x_1,\ldots, x_n]$ such that 
\[
V = \mathbb{V}(f_1,\ldots, f_m) := \setdef{\veca \in \F^n}{f_i(\veca) = 0 \text{ for all $i \in [m]$}}. 
\]
A variety is said to be \emph{irreducible} if it cannot be written as non-trivial finite union of two varieties. In other words, if $V = V_1 \union V_2$ for two algebraic varieties $V_1,V_2$, then either $V_1 = V$ or $V_2 = V$. 

The \emph{dimension of the variety $V$} (denoted by $\dim V$) is defined as the smallest $r$ such that a \emph{generic} affine subspace of co-dimension $r$ (or dimension, as standard vector space dimension, $n - r$) intersects $V$ at a finitely many points. The \emph{degree of the variety} (denoted by $\deg V$) is the finite number that results from such a generic intersection of a co-dimension $r$ affine subspace. 
\end{definition}

Since algebraic varieties are closed under finite unions and arbitrary intersections, we can define a topology on $\F^n$ by setting algebraic varieties as the \emph{closed sets}. The set $\F^n$ equipped with this topology (called the Zariski topology) is often referred to as $\mathbb{A}^n$. 

Under this topology, polynomial maps from $\mathbb{A}^d$ to $\mathbb{A}^n$ are continuous maps, and rational functions are continuous where defined (when the denominators don't vanish). The image of an irreducible space under a continuous map is always irreducible, and thus the image of rational maps (and thereby its closure too) are always irreducible. 

\medskip

Most quantitative bounds regarding degree of varieties follow from an application of \Bezout's theorem. The following is the form that is most useful to us. 

\begin{theorem}[\Bezout's theorem]
  \label{thm:bezout}
  Suppose $V = \mathbb{V}(f_1,\ldots, f_m) \subseteq \F^n$, with $\deg f_i = d_i$, then the sum of the degrees of its irreducible components is at most $\prod d_i$. 
\end{theorem}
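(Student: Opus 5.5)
The plan is to prove the stronger, inductive statement used by Heintz. For an arbitrary (possibly reducible) variety $V$, define $\deg V$ to be the sum of the degrees of its irreducible components. The goal is then
\begin{equation*}
\deg(V \cap W) \leq \deg V \cdot \deg W
\end{equation*}
for any two varieties $V, W \subseteq \F^n$. Given this, the theorem follows by induction on $m$, writing
\[
\mathbb{V}(f_1,\ldots,f_m) = \mathbb{V}(f_1,\ldots,f_{m-1}) \cap \mathbb{V}(f_m).
\]
The base case is the hypersurface bound $\deg \mathbb{V}(f) \leq \deg f$. For it, I would use the definition of degree from the primer: a generic line meets each component of $\mathbb{V}(f)$ of dimension $n-1$ in finitely many points, and restricted to that line $f$ becomes a univariate polynomial of degree at most $\deg f$. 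Different components can be separated by choosing the line generically, so the counts add up to at most $\deg f$.

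The key steps, in order, are as follows.

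\emph{Step 1 (hyperplane sections).} For an affine hyperplane $H$ and any variety $V$, show $\deg(V\cap H)\le \deg V$. For irreducible $V$ this is exactly the cited fact \cite[Theorem 2.2]{Dvir2012}: either $V\subseteq H$, or $V\cap H=\emptyset$, or the components of $V\cap H$ have total degree at most $\deg V$. For reducible $V = V_1\cup\cdots\cup V_s$, each component of $V\cap H$ is a component of some $V_i\cap H$, so summing over $i$ gives the bound. Discarding redundant, non-maximal pieces only decreases the sum. Iterating, $\deg(V\cap L)\le \deg V$ for every affine subspace $L$.

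\emph{Step 2 (products).} Show $\deg(V\times W)\le \deg V\cdot\deg W$ in $\F^{2n}$. By additivity it suffices to treat irreducible $V,W$ of dimensions $r,s$. Then $V\times W$ is irreducible of dimension $r+s$. Intersect it with a product $L_1\times L_2$ of generic affine subspaces of codimensions $r$ and $s$; this meets $V\times W$ in $(\deg V)(\deg W)$ points. One then argues that the degree is upper semicontinuous under specialization from a generic codimension-$(r+s)$ subspace to such a product subspace, or more concretely that the number of isolated points does not jump up.

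\emph{Step 3 (diagonal trick).} The set $V\cap W$ is isomorphic, via a linear map, to $(V\times W)\cap\Delta$, where $\Delta = \{(x,x)\}$ is cut out by $n$ hyperplanes. Step 1 applied to $V\times W$, followed by Step 2, gives $\deg(V\cap W)\le \deg(V\times W)\le\deg V\deg W$. Taking $W = \mathbb{V}(f_m)$ and using the base case closes the induction, yielding $\prod d_i$.

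I expect the main obstacle to be the bookkeeping for components of mixed dimension. The primer's definition of degree via generic linear sections sees only one dimension at a time, so each inequality must be proved componentwise. In Step 2, I must also check that the specialization to product subspaces does not undercount. If that direction is delicate, I would fall back on counting isolated points of $(V\times W)\cap L$ for a generic $L$ containing a translate of $\Delta$-directions, using Step 1 directly on irreducible pieces.
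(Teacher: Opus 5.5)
The paper states this as a standard fact and gives no proof, so the comparison is with the standard argument. Your outline is Heintz's B\'ezout inequality, via products and the diagonal. The base case, Step~1 (using the cited Dvir fact componentwise) and the diagonal reduction in Step~3 are fine.

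\textbf{The gap is Step~2, which carries the real content.} You need the \emph{upper} bound $\deg(V\times W)\le \deg V\cdot\deg W$. The principle that actually holds for special sections is the one in your ``more concretely'' phrase: if $E$ is an affine subspace of codimension $\dim X$ with $X\cap E$ finite, then $\#(X\cap E)\le \deg X$. In other words, points can only be lost when you specialize away from a generic section. Applied to $X=V\times W$ and $E=L_1\times L_2$, this gives $\deg V\cdot\deg W\le \deg(V\times W)$, which is the useless direction.

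The opposite principle, that a special finite section has at least as many points as a generic one, is false. The parabola $\{y=x^2\}$ has degree $2$ but meets the line $\{x=0\}$ in a single point, because the other intersection point has gone to infinity. Your fallback, a non-generic subspace containing the $\Delta$-directions, has the same problem. You flag undercounting as a worry at the end, but it is not a side check: as written, Step~2 assumes exactly what it must prove.

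To close the gap you must show that the product section loses nothing. One way is to verify two things.
\begin{enumerate}
\item[(i)] The closure of $V\times W$ in $\mathbb{P}^{2n}$ meets the closure of $L_1\times L_2$ only in affine points. A point at infinity of $\overline{V\times W}$ is a direction $(a,b)\neq 0$ with $g^{\mathrm{top}}(a)=0$ for all $g\in I(V)$ and $h^{\mathrm{top}}(b)=0$ for all $h\in I(W)$. So $a$ and $b$ lie in the asymptotic cones of $V$ and $W$, of dimensions $r$ and $s$. Generic $L_1,L_2$ have direction spaces meeting these cones only in $0$.
\item[(ii)] The intersection is transversal at each of the $\deg V\cdot\deg W$ points. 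For generic $L_1,L_2$ the sections $V\cap L_1$ and $W\cap L_2$ are transversal at smooth points, and $T_{(v,w)}(V\times W)=T_vV\oplus T_wW$.
\end{enumerate}
Conservation of intersection number for a proper linear section of complementary dimension in projective space then gives $\#\bigl((V\times W)\cap(L_1\times L_2)\bigr)=\deg(V\times W)$.

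Alternatively, view $\overline{V\times W}$ as a component of the hyperplane section $\{x_0=y_0\}$ of the projective join of $\overline V$ and $\overline W$. The join has degree $\deg V\cdot\deg W$, which you read off from the Hilbert series of the tensor product of the homogeneous coordinate rings.

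Either route needs intersection multiplicities or Hilbert polynomials. That is more than the primer's generic-count definition of degree plus the Dvir fact, so this step cannot be obtained from a semicontinuity remark.
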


\noindent
We now give an informal proof of \cref{fact:image-of-rational-map} (with a minor modification of making all denominators equal). 

\begin{fact}[Image of a rational map on an open set]
  Suppose $\Psi(X_1,\ldots, X_d):\mathbb{A}^d \setminus U \rightarrow \mathbb{A}^n$ be a rational map of the form 
  \begin{align*}
  \Psi & = \inparen{\frac{N_1}{G}, \ldots, \frac{N_n}{G}}
  \end{align*}
  where $N_i, G \in \F[X_1,\ldots, X_d]$ with $\deg N_i, \deg G \leq \Delta$ and $U = \mathbb{V}(G)$. Then, the Zariski closure of $\operatorname{Image}(\Psi)$ is an algebraic variety $V$ such that 
  \begin{itemize}\itemsep 0pt
  \item $V$ is irreducible,
  \item $\dim V \leq d$,
  \item $\deg V \leq \Delta^d$. 
  \end{itemize}
\end{fact}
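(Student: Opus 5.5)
We prove the three properties separately, treating $V$ as the closure of the image of the open set $\mathbb{A}^d \setminus U$. The idea is to replace the image by a projection of the graph of $\Psi$, which is cut out by explicit equations of controlled degree.

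\textbf{Irreducibility.} $\mathbb{A}^d$ is irreducible, so its nonempty Zariski-open subset $\mathbb{A}^d\setminus U$ is irreducible; we may assume $G \not\equiv 0$, since otherwise the statement is vacuous. $\Psi$ is continuous on this open set, as remarked above. The continuous image of an irreducible space is irreducible, and the closure of an irreducible set is irreducible. Hence $V$ is irreducible.

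\textbf{Dimension.} Consider the graph
\[
\Gamma := \setdef{(\vecx,\vecy) \in \mathbb{A}^{d}\times\mathbb{A}^n}{G(\vecx)\neq 0,\ G(\vecx)\, y_i = N_i(\vecx) \text{ for all } i}.
\]
Projection to $\vecx$ identifies $\Gamma$ bijectively with $\mathbb{A}^d\setminus U$, since $\vecy$ is determined by $\vecx$. Hence $\overline{\Gamma}$ is irreducible of dimension $d$. Now $V$ is the closure of the projection of $\Gamma$ to $\vecy$, and projection cannot increase dimension. One way to see this via the paper's definition: if a generic affine subspace of codimension $d$ met $V$ in infinitely many points, its preimage would meet the $d$-dimensional $\overline{\Gamma}$ in an infinite set of fibres. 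More simply, the coordinate ring of $V$ embeds into that of $\overline{\Gamma}$, so transcendence degrees give $\dim V \le d$.

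\textbf{Degree.} This is the main obstacle: projections can increase naive equation degrees, so we count points instead. Let $r=\dim V$ and take a generic affine subspace $L$ of codimension $r$, given by $r$ generic linear equations $\ell_j(\vecy)=c_j$. Then $\deg V = |V\cap L|$. Generic choice lets us assume every point of $V\cap L$ lies in the image of $\Psi$, i.e.\ in the dense constructible subset, not on the boundary. Each such point has some preimage $\vecx$ with $G(\vecx)\ne0$ and
\[
\ell_j(N_1(\vecx),\ldots,N_n(\vecx)) - c_j\, G(\vecx) = 0, \qquad j=1,\dots,r.
\]
These are $r$ polynomials of degree at most $\Delta$ in $\vecx$. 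Distinct image points have disjoint preimage sets, so $|V\cap L|$ is at most the number of irreducible components of their common zero set in $\mathbb{A}^d$ that meet $\{G\neq 0\}$.

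When $r=d$, generic fibres are finite. By \cref{thm:bezout}, the zero set of these $d$ equations then has at most $\Delta^d$ irreducible components. Each image point needs at least one component, so $\deg V\le \Delta^d$.

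When $r<d$, first intersect with $d-r$ further generic hyperplanes in the $\vecx$-coordinates, chosen to meet each positive-dimensional fibre, to reduce to the finite case. This gives $d$ equations of degree at most $\Delta$, so the bound is again at most $\Delta^{r}\cdot 1^{d-r}\le \Delta^d$.

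The delicate points are two. The first is justifying that generic $L$ avoids $V\setminus \operatorname{Image}(\Psi)$; this holds because that set is contained in a lower-dimensional subvariety, since the image is constructible and dense. The second is that \cref{thm:bezout} counts components rather than isolated points, which suffices because each image point's preimage contributes at least one distinct component.
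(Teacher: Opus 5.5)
Your proof is correct and follows the paper's own route. Irreducibility comes from continuity on the irreducible open set $\mathbb{A}^d\setminus U$, and the degree bound comes from pulling a generic codimension-$\dim V$ affine section back to equations $c_0G+\sum_j c_jN_j$ of degree at most $\Delta$ in $X_1,\ldots,X_d$ and applying \cref{thm:bezout}. Your version is in fact more careful than the sketch, which simply asserts that every point of $V\cap L$ lies in $\Psi(Z)$ and that $Z$ is finite (finiteness fails when $G,N_1,\ldots,N_n$ share a positive-dimensional zero set). Your appeal to constructibility handles the first point, and counting components that meet $\{G\neq 0\}$ handles the second, since each such component, being irreducible, maps to a single point of the finite set $V\cap L$. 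With that component count the extra hyperplanes for $r<d$ are unnecessary: the $r$ equations already give at most $\Delta^r\le\Delta^d$ components.
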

\begin{proof}[Proof sketch]
  The set $\mathbb{A}^d \setminus U$ is irreducible, and hence its image under $\Psi$ (a continuous map) is also irreducible. Hence, the closure of the image is an irreducible algebraic variety. Since $\Psi$ maps $d$ coordinates into the space $\mathbb{A}^n$, we have $\dim V$ is at most $d$. 

  For the final point, let us assume for simplicity that $\dim V = d$ (otherwise, we have to work with a \emph{generic} $(\dim V)$-dimemsional subspace of $\mathbb{A}^d$ and repeat the argument). Let us take a generic affine subspace $L = \mathbb{V}(\ell_1, \cdots, \ell_d)$ of co-dimension $d$, where $\ell_i = a_{i0} + \sum_{j=1}^n a_{ij} y_j$, and intersect $V$ with it. For $i = 1,\ldots, d$, define
  \[
  P_i(x_1,\ldots, x_d) = c_{i0} G(\vecx) + \sum_{j=1}^n c_{ij} N_j(\vecx)
  \]
  and $Z = \mathbb{V}(P_1,\ldots, P_d)$, and note $\deg P_i \leq \Delta$ for all $i$. Since $\dim V = d$ and the $\ell_i$'s were generic, turns out it can be argued that each constraint $P_i = 0$ reduces the dimension of the space by one, and hence $Z$ will be finite. \cref{thm:bezout} shows that $|Z| \leq \Delta^d$. Each point of $V \cap L$ is certainly in $\Psi(Z)$. Therefore, 
  \[
  \deg V = \abs{V \cap L} \leq \abs{Z} \leq \Delta^n\qedhere
  \]
\end{proof}

\section{Deferred proofs}
\label{sec:deferred-proofs}

\begin{proof}[Proof of \cref{fact:expectation-of-l1-in-lattice}]
Consider the continuous relaxation $\lambda(W) = \setdef{(b_1,\ldots, b_d) \in \R_{\geq 0}^d}{\sum i b_i \leq W}$. Under the transformation $a_i = i \cdot b_i$, then $(a_1,\ldots, a_d)$ is an element of the standard simplex $S(W) = \setdef{(a_1,\ldots, a_d)}{\sum a_i \leq W}$. Let us add a slack variable $a_0$ so that $a_0 + \cdots + a_d = W$ and consider the set $\tilde{S}(W) = \setdef{(a_0,\ldots, a_d)}{\sum a_i = W}$. 

Clearly, $\tilde{S}(W)$ is symmetric over all the coordinates and hence $\E_{\veca \in \tilde{S}(W)}[a_i] = \frac{W}{d+1}$ for any $i$. Since the transformation from $a_i$'s to $b_i$'s is a coordinate-wise scaling, we get that 
\[
\E_{\vecb \in \lambda(W)} [b_i] = \E_{\veca \in \tilde{S}(W)}\insquare{\frac{a_i}{i}} = \frac{W}{i \cdot (d+1)}.
\]
Once again, we wish to relate the expectation over $\lambda(W)$ to the expectation over discrete set $\Lambda(W)$. As in \cref{lem:relating-discrete-continuous-simplex}, let us add a half-open cube $[0,1)^d$ to each point of $\Lambda(W)$. For the cube $(a_1,\cdots, a_d) + [0,1)^d$, the expectation of the $i$-th coordinate is just $a_i + \tfrac{1}{2}$. Hence, if $\Gamma = \Union_{\vecb \in \Lambda(W)} (\vecb + [0,1)^d)$, then 
$\lambda(W) \subseteq \Gamma \subseteq \lambda(W + d^2)$. Hence, 
\begin{align*}
\int_{\lambda(W)} x_i \, d x & \leq \int_{\Gamma} x_i \, d x \leq \int_{\lambda(W+d^2)} x_i \, d x\\
\text{i.e.,} \quad \vol(\lambda(W)) \cdot \E_{\vecb \in \lambda(W)}[b_i] &\leq \vol(\Gamma) \cdot \E_{\vecb \in \Gamma}[b_i] \leq \vol(\lambda(W + d^2)) \cdot \E_{\vecb \in \lambda(W + d^2)}[b_i]\\
\implies \vol(\lambda(W)) \cdot \frac{W}{(d+1)i} & \leq \inparen{\sum_{\vecb \in \Lambda(W)} b_i} + \tfrac{1}{2}\abs{\Lambda(W)} \leq \vol(\lambda(W + d^2)) \cdot \frac{(W+d^2)}{(d+1)i}.
\end{align*}
Dividing throughout by $\abs{\Lambda(W)}$, we get
\begin{align*}
\frac{W}{(d+1)\cdot i} \cdot (1 - O(\tfrac{d^3}{W})) & \leq \E_{\vecb \in \Lambda(W)}[b_i] + \frac{1}{2} \leq \frac{W + d^2}{(d+1)\cdot i} \cdot (1 + O(\tfrac{d^3}{W}))\\
\implies \E_{\vecb \in \Lambda(W)}[b_1 + \cdots + b_d] & = \frac{W}{(d+1)} \cdot \sum_{i=1}^d \frac{1}{i} \pm O_d(1) = \frac{W \cdot H_d}{(d+1)} \pm O_d(1)
\end{align*}
where $H_d  = \sum_{i=1}^d \frac{1}{i} = \ln d + \gamma + o(1)$.
\end{proof}

\section{Degree bounds for Kopparty's reconstruction algorithm}
\label{sec:degree-bounds-kopparty}

\begin{lemma}[\cref{lem:Kopparty-rational-fn} restated]
  Let $P \in \F[Z, X, Y_0,\ldots, Y_d]$ be a nonzero polynomial with $\TopVar(P) = j \leq d$. Assume that $\operatorname{char}(\F) > \max(k, \deg_{Y_i}(P))$ for all $i$. There are rational functions $\Phi_{P,i}(Z, X, Y_0,\ldots, Y_d)$ defined  for $0 \leq i < k$ as 
  \[
  \Phi_{P,i}(Z, X, Y_0,\ldots, Y_d) = \begin{cases} Y_i & \text{if $0 \leq i \leq d$,}\\
    \frac{N_{P,i}(Z, X, Y_0,\ldots, Y_d)}{(\partial_{Y_j}P)^{2e}} & \text{if $d < i < k$},
  \end{cases}
  \]
with $\deg N_{P,i} \leq 2(i-j) \cdot \deg P$ and $e \leq 2(i - j)$ with the following property:
\begin{quote}
Suppose $\zeta \in \F$ is such that $\TopVar(P) = \TopVar(P_{Z=\zeta})$, $f(X)$ is a polynomial of degree less than $k$, and $\gamma \in \F$ with $a_i = f^{(i)}(\gamma)$ for $0 < i < k$ such that 
\begin{itemize}\itemsep 0pt
  \item $P(\zeta, X, f, f^{(1)}, \ldots, f^{(d)}) = 0$,
  \item $(\partial_{Y_j} P)(\zeta, \gamma, a_0,\ldots, a_d) \neq 0$.
\end{itemize}
Then, for all $0 \leq i < k$,  
\[
  a_i = \Phi_{P,i}(\zeta, \gamma, a_0,\ldots, a_d). 
\]
Consequently, $f(X) = \sum_{i=0}^{k-1} \Phi_{P,i}(a_0,\ldots, a_d) \cdot (X - \gamma)^i$. 
\end{quote}
\end{lemma}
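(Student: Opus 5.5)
The plan is to follow Kopparty's recursive reconstruction. We differentiate the identity $P(\zeta, X, f, f^{(1)}, \ldots, f^{(d)}) = 0$ with respect to $X$ and use the fact that the top variable $Y_j$ appears linearly in the derivative. For Hasse derivatives in characteristic $p > k$ we have $\frac{d}{dX} f^{(i)} = (i+1) f^{(i+1)}$. Write $\mathcal{D}$ for the total-derivative operator on $\F[Z,X,Y_0,\ldots,Y_d,Y_{d+1},\ldots]$, given by $\mathcal{D}P = \partial_X P + \sum_{l} (l+1) Y_{l+1} \partial_{Y_l} P$. Then $\frac{d}{dX}\bigl[P(\zeta,X,f,\ldots,f^{(j)})\bigr] = (\mathcal{D}P)(\zeta, X, f, \ldots, f^{(j+1)})$. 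Since $\TopVar(P)=j$, the variable $Y_{j+1}$ occurs in $\mathcal{D}P$ only through the term $(j+1)Y_{j+1}\partial_{Y_j}P$. Hence
\[
\mathcal{D}P = (j+1)\, Y_{j+1}\, \partial_{Y_j}P + R_1(Z,X,Y_0,\ldots,Y_j),
\]
and $j+1 \leq d+1 \leq k < p$ is invertible. Iterating, the $s$-th total derivative has the form $\mathcal{D}^s P = c_s Y_{j+s}\,\partial_{Y_j}P + R_s(Z,X,Y_0,\ldots,Y_{j+s-1})$ with $c_s = (j+1)\cdots(j+s) \neq 0$. This holds because applying $\mathcal{D}$ to $c_s Y_{j+s}\partial_{Y_j}P$ produces $c_s (j+s+1) Y_{j+s+1}\partial_{Y_j}P$ plus terms involving only lower $Y$'s. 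Evaluating at $X=\gamma$, using that all derivatives of the zero polynomial vanish, and using $\partial_{Y_j}P(\zeta,\gamma,\veca)\neq 0$, we can solve
\[
a_{j+s} = -\frac{R_s(\zeta,\gamma,a_0,\ldots,a_{j+s-1})}{c_s\,\partial_{Y_j}P(\zeta,\gamma,\veca)}.
\]
By induction on $i$, we substitute the earlier rational expressions for $a_{d+1},\ldots,a_{i-1}$ into $R_{i-j}$. For $i\le d$ we simply take $\Phi_{P,i}=Y_i$, which is consistent since those $a_i$ are inputs.

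Next comes bookkeeping of degrees and denominators. Each $R_s$ is a polynomial of degree at most $\deg P + s$, since $\mathcal{D}$ raises the degree by at most one. More precisely, each term has the form (a derivative of $P$) times a monomial in the $Y_{l}$'s, and the number of factors of $\partial$-type is controlled. When we substitute the rational expressions $\Phi_{P,l} = N_{P,l}/(\partial_{Y_j}P)^{2e_l}$ for $d<l<i$ into $R_{i-j}$, each occurrence of $Y_l$ raises the denominator power. We then clear denominators to a common power of $\partial_{Y_j}P$ and bound the numerator degree. I expect this step to be the main obstacle. I would prove inductively that $\Phi_{P,i}$ can be written with denominator $(\partial_{Y_j}P)^{2(i-j)}$ or smaller, and numerator of degree at most $2(i-j)\deg P$, exploiting that $R_s$ is linear in the newest variable $Y_{j+s-1}$ (it arises as $(j+s-1)$-coefficient times $\partial_{Y_j}P$ corrections). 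A cleaner alternative I would try first is to avoid composing the rational maps entirely. We work in the ring $\F[Z,X,Y_0,\ldots,Y_d][1/\partial_{Y_j}P]$ and define the derivation $\tilde{\mathcal D}$ that sends $Y_d$-derivatives forward using the already-solved formulas. The recursion $\Phi_{i+1} = \frac{1}{i+1}\tilde{\mathcal D}\Phi_i$ (Hasse chain rule) then increases the denominator exponent by at most $2$ and the numerator degree by at most $2\deg P$ per step, which gives the stated bounds directly. Here we use that $\tilde{\mathcal D}$ applied to $Y_d$ is $(d+1)\Phi_{d+1}$, whose denominator is a single power of $\partial_{Y_j}P$.

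Finally, the hypothesis $\TopVar(P)=\TopVar(P|_{Z=\zeta})$ guarantees that $\partial_{Y_j}P$ does not vanish identically after specialization. This makes the division meaningful, and nonvanishing at $(\zeta,\gamma,\veca)$ is given. The consequence $f(X)=\sum_i a_i (X-\gamma)^i$ is just the Taylor expansion of $f$ at $\gamma$ by the definition of Hasse derivatives, so substituting $a_i=\Phi_{P,i}$ finishes the proof.
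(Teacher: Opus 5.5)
Your proposal is correct, and your ``cleaner alternative'' is exactly the paper's proof. The paper defines the derivation $\mathcal{D} = \partial_X + \sum_{i<d}(i+1)Y_{i+1}\partial_{Y_i} + (d+1)\Gamma_P\,\partial_{Y_d}$, shows $\mathcal{D}(h/(P')^e) = g/(P')^{e+2}$ with $\deg g \le \deg h + 2\deg P'$, and sets $\Phi_{P,d+t} = \tfrac{d!}{(d+t)!}\mathcal{D}^t Y_d$, so your first substitution-based route and its bookkeeping are not needed. Where you go slightly beyond the paper is the case $j<d$, which the paper sidesteps by assuming $j=d$; you handle it correctly by first getting $\Phi_{d+1}$, with a single factor $\partial_{Y_j}P$ in the denominator, from $d+1-j$ plain total derivatives of $P$.
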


The crucial content of this lemma is in the fact that it gives a polynomial upper bound on the degrees of the numerator and denominator of $\Phi_{P,i}$. Such a bound is not entirely immediate from the reconstruction algorithm in \cite{Kopparty2015} since the reconstruction involves repeated composition of $P$ with the partial solution constructed so far (monomials of $f$ of low degree). Naively, this gives the impression that the degree of $\Phi_{P,i}$ is growing superpolynomially in $i$. We observe, using a slight modification of the reconstruction procedure, that this is not the case.

The rest of this section would be a proof of the above lemma. This is morally the same argument in Kopparty's paper but is a little more convenient to obtain the necessary degree bounds. For simplicity, let us assume that $\TopVar(P) = \TopVar(P_{\zeta}) = d$, where $P_\zeta = P(\zeta, X, Y_0, \ldots, Y_d)$. We have fixed some $\gamma \in T$ such that $(\partial_{Y_d}P_\zeta)(\gamma, a_0,\ldots, a_d) \neq 0$, where $a_i = f^{(i)}(\gamma)$. 

\paragraph{Expressing $a_{d+1}$ in terms of $a_0,\ldots, a_d$: } 
Note that for any $i$ we have $\tfrac{d}{dX} f^{(i)} = (i+1)f^{(i+1)}$. Since $Q(X, f, f^{(1)}, \ldots, f^{(d)}) = 0$, we have that 
\[
\frac{d}{dX}P_\zeta(X, f, f^{(1)}, \ldots, f^{(d)}) = 0
\]
as well. Let $\Delta(f)(X) := (f, f^{(1)}, f^{(2)}, \ldots, f^{(d)})$. 
\begin{align*}
0 & = \frac{d}{dX}P_\zeta(X, f, f^{(1)}, \ldots, f^{(d)})\\
  & = \partial_X P_{\zeta} (X, \Delta(f)) + \sum_{i=0}^{d-1} (i+1) \cdot \partial_{Y_i}(P_\zeta)(X, \Delta(F)) \cdot f^{(i+1)}\\
  & \quad \quad  + (d+1) f^{(d+1)} \cdot \partial_{Y_d} P_\zeta (X, \Delta(f))\\
\implies f^{(d+1)} & = \frac{-\inparen{\partial_X P_{\zeta} (X, \Delta(f)) + \sum_{i=0}^{d-1} (i+1) \cdot \partial_{Y_i}(P_\zeta)(X, \Delta(f)) \cdot f^{(i+1)}}}{(d+1) \cdot \partial_{Y_d} P_\zeta (X, \Delta(f))} 
\end{align*}
\noindent
Define the rational function $\Gamma_{P}(Z, X, Y_0,\ldots, Y_d)$ as 
\begin{align*}
\Gamma_{P}(Z, X, Y_0,\dots, Y_d) &= \frac{-\inparen{\partial_X P + \sum_{i=0}^{d-1} (i+1) \cdot \partial_{Y_i} P \cdot Y_{i+1}}}{(d+1) \cdot \partial_{Y_d} P}\\
\text{so that } f^{(d+1)} &:= \Gamma_{P}(Z, X, \Delta(f)). 
\end{align*}

Recall that we are trying to recover $f$ via its Taylor expansion around the point $\gamma$. Formally, $f(X) = \sum_{i}a_i(X-\gamma)^i$ and the goal is to obtain expressions for $a_{d+1}, \dots, a_{k-1}$ as functions of $a_0, a_1, \dots, a_d$ and $Z$. Thus given the formal expression for $f^{(d+1)}$ as above ($f^{(d+1)} = \Gamma_P(Z,X,\Delta(f))$), we can compute $a_{d+1}$ by setting $X$ to $\gamma$ in this expression (while making sure that this is well defined) and scaling the value obtained appropriately. The idea of the recovery algorithm here is to iterate on this: given the expression for $f^{(d+1)}$, we differentiate it $\ell$ times with respect to $X$ to obtain a rational expression for $f^{(d+1+\ell)}$ and then set $X$ to $\gamma$ in this expression to obtain $a_{d+1+\ell}$. The goal is to understand the bounds on the degree of the formal rational expressions for $f^{(d+1+\ell)}$ as $\ell$ increases, and this is done via a simple induction.  If we differentiate the expression for $f^{(d+1)}$ obtained above, via an application of chain rule and product rule of derivatives, the expression on the right hand side depends upon $f, f^{(1)}, \dots, f^{(d)}$, various derivatives of $P_{\zeta}$ and $f^{(d+1)}$. In this expression, we substitute the rational expression for $f^{(d+1)}$ obtained here so that the resulting expression only depends on $f, f^{(1)}, \dots, f^{(d)}$ and derivatives of $P_{\zeta}$. Staring at the form of this rational expression immediately gives us a sense of the increase in degree of the rational expression as we iterate this process. The rest of this section works out the formal details of this strategy. 

\paragraph{Building a derivation operator: } We will build a suitable ``derivation operator'' to help capture the structure of $\frac{d}{dX} h(Z, X, \Delta(f))$ for arbitrary $h(Z, X, Y_0,\ldots, Y_d)$. 
\begin{align*}
\mathcal{D} & := \partial_X + \sum_{i=0}^{d-1} (i+1) \cdot Y_{i+1} \cdot \partial_{Y_i} + (d+1) \cdot \Gamma_P \cdot \partial_{Y_d}
\end{align*}

\begin{observation}
For any $h(Z, X, Y_0,Y_1,\dots, Y_d)$, we have 
\[
\frac{d}{dX}(h(Z, X, \Delta(f))) = (\mathcal{D}h)(Z, X, \Delta(f)). 
\]
More generally, 
\[
\frac{d^t}{dX^t}(h(Z, X, \Delta(f))) = (\mathcal{D}^t h)(Z, X, \Delta(f)). 
\]
\end{observation}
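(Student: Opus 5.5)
The plan is to prove the first identity by a direct chain-rule computation, and then get the general statement by induction on $t$. Note that the paper has already assumed $\TopVar(P) = d$ and that $f$ satisfies $P_\zeta(X, \Delta(f)) = 0$. All identities are read with $Z = \zeta$ substituted; $Z$ is a passive parameter, since we differentiate only in $X$.

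\emph{Base case $t=1$.} Take a polynomial (or rational function) $h(Z,X,Y_0,\ldots,Y_d)$ and apply the multivariate chain rule to $X \mapsto h(\zeta, X, f, f^{(1)},\ldots,f^{(d)})$:
\[
\frac{d}{dX} h(\zeta,X,\Delta(f)) = (\partial_X h)(\zeta,X,\Delta(f)) + \sum_{i=0}^{d} (\partial_{Y_i} h)(\zeta,X,\Delta(f)) \cdot \frac{d}{dX} f^{(i)}.
\]
We then use the Hasse-derivative identity $\frac{d}{dX} f^{(i)} = (i+1) f^{(i+1)}$. This holds because $\frac{d}{dX}\frac{f^{[i]}}{i!} = \frac{f^{[i+1]}}{i!} = (i+1)\frac{f^{[i+1]}}{(i+1)!}$; alternatively, compare coefficients of $T^i$ in $\frac{d}{dX} f(X+T) = \frac{d}{dT} f(X+T)$, which needs no division at all.

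For $i \le d-1$ the term $(i+1) f^{(i+1)}$ is exactly $(i+1)Y_{i+1}$ evaluated at $\Delta(f)$. For $i = d$ we get $(d+1) f^{(d+1)}$. By the computation just preceding the observation, this equals $(d+1)\Gamma_P(\zeta,X,\Delta(f))$. That computation is valid wherever $\partial_{Y_d}P_\zeta(X,\Delta(f)) \neq 0$, and since $\partial_{Y_d}P_\zeta(X,\Delta(f))$ is a nonzero polynomial in $X$, the identity holds as rational functions of $X$. Substituting these gives $\frac{d}{dX}h(\zeta,X,\Delta(f)) = (\mathcal{D}h)(\zeta,X,\Delta(f))$.

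\emph{Where the proof is delicate.} The identity holds as an equality in $\F(X)$, not necessarily pointwise. So I would state the observation for $h$ in the field $\F(Z,X,Y_0,\ldots,Y_d)$, and interpret evaluation at $\Delta(f)$ as the substitution homomorphism into $\F(X)$. This homomorphism is well defined on the localization whose denominators are powers of $\partial_{Y_d}P$, since $\partial_{Y_d}P_\zeta(X,\Delta(f)) \not\equiv 0$. That nonvanishing follows from the minimality/choice of $P$, together with $\TopVar(P_\zeta) = d$, in the claim where this lemma is applied.

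The chain rule and the quotient rule for $\frac{d}{dX}$ hold for rational functions in $\F(X)$, and likewise for the formal partial derivatives. Moreover $\mathcal{D}$ is a derivation on this localized ring: it is a linear combination of the derivations $\partial_X, \partial_{Y_i}$ with ring coefficients. So the base case extends verbatim to every $h$ in the localized ring, and $\mathcal{D}h$ stays in that ring.

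\emph{Induction on $t$.} Suppose the identity holds for $t-1$ and all $h$. Then
\[
\frac{d^t}{dX^t} h(\zeta,X,\Delta(f)) = \frac{d}{dX}\bigl((\mathcal{D}^{t-1}h)(\zeta,X,\Delta(f))\bigr) = (\mathcal{D}^t h)(\zeta,X,\Delta(f)),
\]
where the last step is the base case applied to $h' = \mathcal{D}^{t-1}h$. This is exactly why we need closure of the localized ring under $\mathcal{D}$.

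The only step needing care is this bookkeeping about denominators. The characteristic assumption $p > k$ is not needed for the identity itself, only later, when dividing by $(d+1)$ and by the factorials that recover $a_{d+1+\ell}$.
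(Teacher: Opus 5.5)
Your proposal is correct and follows the paper's proof. Both use the chain rule and the identity $\frac{d}{dX} f^{(i)} = (i+1) f^{(i+1)}$, replace the top term $(d+1) f^{(d+1)}$ by $(d+1)\Gamma_P(\zeta, X, \Delta(f))$, and then iterate $t$ times. Your extra bookkeeping makes explicit what the paper's ``iterating the above $t$ times'' leaves implicit: you read everything at $Z=\zeta$, and you work in the localization at powers of $\partial_{Y_d}P$, so the base case legitimately applies to the rational functions $\mathcal{D}^{t-1}h$.
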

\begin{proof} By the chain rule, 
\begin{align*}
\frac{d}{dX}(h(Z, X, \Delta(f))) & = (\partial_X h)(Z, X, \Delta(f)) + \sum_{i=0}^{d} (\partial_{Y_i} h)(Z, X, \Delta(f)) \cdot \frac{d}{dX} f^{(i)}\\
& = (\partial_X h)(Z, X, \Delta(f)) + \sum_{i=0}^{d} (\partial_{Y_i} h)(Z, X, \Delta(f)) \cdot (i+1) f^{(i+1)}\\
& = (\partial_X h)(Z, X, \Delta(f)) + \sum_{i=0}^{d-1} (\partial_{Y_i} h)(Z, X, \Delta(f)) \cdot (i+1) f^{(i+1)}\\
& \quad \quad  + (\partial_{Y_d} h)(Z, X, \Delta(f)) \cdot (d+1)f^{(d+1)}\\
& = (\partial_X h)(Z, X, \Delta(f)) + \sum_{i=0}^{d-1} (\partial_{Y_i} h)(Z, X, \Delta(f)) \cdot (i+1) f^{(i+1)}\\
& \quad \quad  + (\partial_{Y_d} h)(Z, X, \Delta(f)) \cdot (d+1) \cdot \Gamma_{P}(Z, X, \Delta(f))\\
& = (\mathcal{D}h) (Z, X, \Delta(f)). 
\end{align*}
The statement for higher derivatives follow by iterating the above $t$ times. 
\end{proof}

\begin{proposition}[Basic properties of the operator $\mathcal{D}$]
  \label{prop:degree-bounds-of-derivation}
For any $g,h \in \F(Z, X, Y_0,\ldots, Y_d)$, the following properties hold:
  \begin{itemize}\itemsep0pt
  \item $\mathcal{D}(g + h) = \mathcal{D}g + \mathcal{D}h$,
  \item $\mathcal{D}(gh) = h \mathcal{D}g + g\mathcal{D}h$,
  \item If $h \in \F[Z, X, Y_0, \ldots, Y_d]$ of degree at most $\Delta$, then $\mathcal{D}h$ is a rational function whose numerator has degree $\Delta + \deg P$ with denominator being a scalar multiple of $P'= \partial_{Y_d}(P)$. 
  \item More generally, for any $h \in \F[Z, X, Y_0,\ldots, Y_d]$, and any integer $e \geq 0$
  \[
  \mathcal{D}\inparen{\frac{h}{(P')^e}} = \frac{g}{(P')^{e+2}}
  \]
  for some polynomial $g \in \F[Z, X, Y_0,\ldots, Y_d]$ with $\deg g \leq \deg h + 2 \deg P'$. 
  \end{itemize}
\end{proposition}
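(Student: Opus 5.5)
The plan is to write $\mathcal{D}$ as a polynomial derivation plus one correction term carrying the only denominator, and then read off all four bullets from explicit formulas. Set $P' := \partial_{Y_d}P$ and
\[
\mathcal{D}_0 := \partial_X + \sum_{i=0}^{d-1}(i+1)\,Y_{i+1}\,\partial_{Y_i},
\qquad
N_\Gamma := \partial_X P + \sum_{i=0}^{d-1}(i+1)\,Y_{i+1}\,\partial_{Y_i}P = \mathcal{D}_0 P .
\]
Then $\Gamma_P = -N_\Gamma/((d+1)P')$, and for every $h$
\[
\mathcal{D}h = \mathcal{D}_0 h - \frac{N_\Gamma}{P'}\,\partial_{Y_d}h = \frac{P'\,\mathcal{D}_0h - N_\Gamma\,\partial_{Y_d}h}{P'} .
\]

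\textbf{First two bullets.} $\mathcal{D}$ is an $\F(Z,X,\vecY)$-linear combination of the partial derivatives $\partial_X,\partial_{Y_0},\ldots,\partial_{Y_d}$. Each partial derivative is additive and satisfies the Leibniz rule on the rational function field. Both properties survive multiplying by rational coefficients and summing, so they hold for $\mathcal{D}$.

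\textbf{Third bullet.} This is the displayed formula for $\mathcal{D}h$. The operator $\mathcal{D}_0$ raises degree by at most one: $\partial_X$ lowers degree, and $Y_{i+1}\partial_{Y_i}$ preserves it. Hence the numerator has degree at most $\max(\deg P' + \deg h + 1,\ \deg N_\Gamma + \deg h - 1)$, which is at most $\deg h + \deg P$ once $\deg N_\Gamma \le \deg P + 1$. That last inequality holds because $N_\Gamma = \mathcal{D}_0P$. The denominator is $P'$ (up to the scalar $d+1$ if one keeps $\Gamma_P$ unexpanded).

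\textbf{Fourth bullet.} Apply the Leibniz rule to $h\cdot (P')^{-e}$, using $\mathcal{D}(P')^{-e} = -e(P')^{-e-1}\mathcal{D}P'$:
\[
\mathcal{D}\!\left(\frac{h}{(P')^e}\right) = \frac{\mathcal{D}h}{(P')^e} - \frac{e\,h\,\mathcal{D}P'}{(P')^{e+1}} .
\]
Substitute the formula for $\mathcal{D}$ into both $\mathcal{D}h$ and $\mathcal{D}P'$, and bring everything over $(P')^{e+2}$. This gives
\[
g = P'\bigl(P'\,\mathcal{D}_0h - N_\Gamma\,\partial_{Y_d}h\bigr) - e\,h\,\bigl(P'\,\mathcal{D}_0P' - N_\Gamma\,\partial_{Y_d}P'\bigr),
\]
which is visibly a polynomial. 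It remains to bound each of the four products in $g$:
\begin{itemize}
\item $(P')^2\,\mathcal{D}_0h$ has degree at most $\deg h + 2\deg P' + 1$;
\item $h\,P'\,\mathcal{D}_0P'$ has degree at most $\deg h + 2\deg P' + 1$;
\item $P'\,N_\Gamma\,\partial_{Y_d}h$ has degree at most $\deg h + \deg P' + \deg N_\Gamma - 1$;
\item $h\,N_\Gamma\,\partial_{Y_d}P'$ has degree at most $\deg h + \deg P' + \deg N_\Gamma - 1$.
\end{itemize}

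\textbf{Main obstacle.} To reach the stated bound $\deg h + 2\deg P'$, I must control $N_\Gamma$ in terms of $P'$, not $P$. I also need to absorb the $+1$ coming from $\mathcal{D}_0$.

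For the $+1$, I would first try working with the weighted degree in which $Y_{i+1}$ and $Y_i$ get compatible weights, so that $Y_{i+1}\partial_{Y_i}$ does not increase degree. Failing that, I would accept the constant slack, which is harmless downstream in \cref{lem:Kopparty-rational-fn}.

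For $N_\Gamma$, I would try to show $\deg N_\Gamma \le \deg P' + 1$. Whether this is true depends on how $Y_d$ enters the top-degree part of $P$. If a top-degree monomial of $P$ involves $Y_d$, then $\deg P' = \deg P - 1$, so $\deg N_\Gamma \le \deg P + 1 = \deg P' + 2$ and the claimed bound holds up to an additive constant. If no top-degree monomial of $P$ involves $Y_d$, the inequality genuinely needs an argument, and I would check there whether the statement requires this hypothesis or an extra normalization of $P$.
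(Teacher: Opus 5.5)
Your treatment of the first three bullets and your formula for $g$ follow the paper's route: Leibniz rule on $h\cdot (P')^{-e}$, then write $\mathcal{D}h$ and $\mathcal{D}P'$ over $P'$ and clear to $(P')^{e+2}$. Both are correct, and your sign in $g$ is the right one; the paper writes $+$ where it should be $-$.

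One slip: $\mathcal{D}_0$ never raises total degree. Your own reasons show exactly this, since $\partial_X$ lowers degree and $Y_{i+1}\partial_{Y_i}$ preserves it. So $\deg N_\Gamma \le \deg P$, there is no ``$+1$'' to absorb, and no weighted degree is needed. With this, the two $\mathcal{D}_0$ terms of $g$ have degree at most $\deg h + 2\deg P'$, and the two $N_\Gamma$ terms have degree at most $\deg h + \deg P + \deg P' - 1$. In particular, when some top-degree monomial of $P$ contains $Y_d$ (so $\deg P = \deg P'+1$), you get the stated bound exactly, not just up to a constant.

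The other case is not a gap you can close, because the fourth bullet is false as stated. Take $d=1$, $\operatorname{char}\F > \max(3,k)$ and $P = Y_1 + X^3$. Then $P' = 1$ and $\Gamma_P = -\tfrac{3}{2}X^2$, so $\mathcal{D} = \partial_X + Y_1\partial_{Y_0} - 3X^2\partial_{Y_1}$. For $h = Y_1$ and $e=0$ this gives $\mathcal{D}h = -3X^2$. Since $P'=1$, the polynomial $g = -3X^2$ is forced, and $\deg g = 2 > 1 = \deg h + 2\deg P'$.

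The paper's one-line proof misses exactly this. Its own third bullet only gives $\deg \tilde h \le \deg h + \deg P$, so $P'\tilde h$ can reach degree $\deg h + \deg P + \deg P'$. The numerator of $\Gamma_P$, in this example $\partial_X P$, is simply not controlled by $\deg P'$. Nor can you rescue the bound by assuming $\deg P = \deg P' + 1$. In the application $P$ is a $Y$-derivative of $Q_Z$, whose total degree is dominated by its $Z$- and $X$-degrees, so nothing forces $Y_d$ into a top-degree monomial.

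The right fix is the bound your four-term computation already proves: $\deg g \le \deg h + \deg P + \deg P' - 1 \le \deg h + 2\deg P - 2$. Start from $\mathcal{D}Y_d = -N_\Gamma/P'$, which has numerator degree at most $\deg P$ and exponent $1$. After $t$ applications the numerator has degree at most $\deg P + (t-1)(2\deg P - 2) \le 2t\deg P$, with exponent $2t-1$. That is all \cref{lem:Kopparty-rational-fn} needs, so state and prove this corrected bound rather than look for an extra hypothesis.
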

\begin{proof}
The first two points follow readily from the definition of $\mathcal{D}$. The degree bounds also hold since the only introduction of denominators is from the $\Gamma_P \cdot \partial_{Y_d}$ term in $\mathcal{D}$ and the denominator is just $(d+1) P'$. For the last point, 
\begin{align*}
\mathcal{D}\inparen{\frac{h}{(P')^e}} & = \frac{\mathcal{D}(h)}{(P')^e} - \frac{e \cdot h \cdot \mathcal{D}(P')}{(P')^{e+1}}\\
& = \frac{P' \cdot \tilde{h} + e \cdot h \cdot \tilde{P} }{(P')^{e+2}}\\
\text{where } \mathcal{D}(P') & = \frac{\tilde{P}}{P'} \text{ and }\mathcal{D}(h) = \frac{\tilde{h}}{P'}.
\end{align*}
Then $g = P' \cdot \tilde{h} + e \cdot h \cdot \tilde{P}$, which is a polynomial of degree at most $\deg(h) + 2 \deg(P')$. 
\end{proof}

With the above derivation operator, we can obtain the rational functions computing $a_{d+t}$'s via repeated application of $\mathcal{D}$ to the function $h(Z, X, Y_0,\ldots, Y_d) := Y_d$.
\begin{align*}
(\mathcal{D}^t h)(Z, X, f, f^{(1)}, \ldots, f^{(d)}) & =  \frac{d^t}{d X^t} (h(Z, X, f, f^{(1)}, \ldots, f^{(d)})) = \frac{d^t}{d X^t} f^{(d)}\\
& = \frac{(d+t)!}{d!} \cdot f^{(d+t)}\qedhere
\end{align*}

\noindent
Evaluating the above expression for $f^{(d+t)}$ at $X = \gamma$, we get 
\[
a_{d+t} = f^{(d+t)}(\gamma) = \frac{d!}{(d+t)!} \cdot (\mathcal{D}^t Y_d)(\zeta, \gamma, a_0,\ldots, a_d). 
\]
The rational function $\Phi_{P, d+t}(Z, X, Y_0, \ldots, Y_d)$ is just $\tfrac{d!}{(d+t)!} \cdot (\mathcal{D}^t Y_d)$. Since $\deg Y_d = 1$, repeated applications of \cref{prop:degree-bounds-of-derivation} gives that 
\[
\mathcal{D}^t Y_d = \frac{N_{P,d+t}}{(P')^{e}}
\]
where $\deg N_{P,d+t} \leq 1 + (2t - 1) \deg P'  \leq 2t \deg P$ and $e \leq 2t-1$

\end{document}